\DeclareFontFamily{U}{mathx}{\hyphenchar\font45}
\DeclareFontShape{U}{mathx}{m}{n}{
      <5> <6> <7> <8> <9> <10>
      <10.95> <12> <14.4> <17.28> <20.74> <24.88>
      mathx10
      }{}
\DeclareSymbolFont{mathx}{U}{mathx}{m}{n}
\DeclareMathAccent{\widecheck}{0}{mathx}{"71}
\renewcommand{\rho}{\varrho}
\renewcommand{\phi}{\varphi}
\DeclareMathOperator{\ADF}{ADF}
\DeclareMathOperator{\PADF}{PADF}
\DeclareMathOperator{\CDF}{CDF}
\DeclareMathOperator{\PCDF}{PCDF}
\DeclareMathOperator{\PCC}{PCC}
\DeclareMathOperator{\PSL}{PSL}
\DeclareMathOperator{\GUC}{GUC}
\DeclareMathOperator{\SDC}{SDC}
\DeclareMathOperator{\AC}{C}
\DeclareMathOperator{\PC}{PC}
\DeclareMathOperator{\len}{len}
\DeclareMathOperator{\per}{per}
\DeclareMathOperator{\supp}{supp}
\newcommand{\C}{{\mathbb C}}
\newcommand{\F}{{\mathbb F}}
\newcommand{\N}{{\mathbb N}}
\newcommand{\R}{{\mathbb R}}
\newcommand{\Z}{{\mathbb Z}}
\newcommand{\Zell}{{\Z/\ell\Z}}
\newcommand{\Zn}{{\Z/n\Z}}
\newcommand{\Fp}{\F_p}
\newcommand{\Fpu}{\F_p^*}
\newcommand{\achars}{\widehat{\F_p}}
\newcommand{\mchars}{\widehat{\F_p^*}}
\newcommand{\Fpn}{\F_p^{*n}}
\newcommand{\calA}{{\mathcal A}}
\newcommand{\calF}{{\mathcal F}}
\newcommand{\cyc}[2]{({#1},{#2})_{n,p}}
\newcommand{\conj}[1]{\overline{#1}}
\newcommand{\sums}[1]{\sum_{\substack{#1}}}
\newcommand{\maxs}[1]{\max_{\substack{#1}}}
\newcommand{\floor}[1]{\lfloor{#1}\rfloor}
\newcommand{\ceil}[1]{\lceil{#1}\rceil}
\newcommand{\ft}[1]{\widehat{#1}}
\newcommand{\ift}[1]{\widecheck{#1}}
\newcommand{\uni}[1]{\widetilde{#1}}
\newcommand{\norm}[2]{\|#2\|_{#1}}
\newcommand{\norminf}[1]{\norm{\infty}{#1}}
\newcommand{\normone}[1]{\norm{1}{#1}}
\newcommand{\normonesq}[1]{\norm{1}{#1}^2}
\newcommand{\normtwo}[1]{\norm{2}{#1}}
\newcommand{\normtwosq}[1]{\normtwo{#1}^2}
\newcommand{\Norm}[2]{\left\|#2\right\|_{#1}}
\newcommand{\Normtwo}[1]{\Norm{2}{#1}}
\newcommand{\Normtwosq}[1]{\Normtwo{#1}^2}
\newcommand{\ip}[2]{\langle #1,#2 \rangle}
\newcommand{\quotring}[1]{\C[z]/(z^{#1}-1)}
\newtheorem{theorem}{Theorem}[section]
\newtheorem{proposition}[theorem]{Proposition}
\newtheorem{lemma}[theorem]{Lemma}
\newtheorem{corollary}[theorem]{Corollary}
\theoremstyle{definition}
\newtheorem{remark}[theorem]{Remark}
\title[Sets of Low Correlation Sequences from Cyclotomy]{Sets of Low Correlation Sequences \\ from Cyclotomy}
\author[Castello, Katz, King, and Olavarrieta]{Jonathan M.~Castello, Daniel J.~Katz, \\ Jacob M.~King, and Alain Olavarrieta}
\thanks{This paper is based upon work of Jonathan Castello and  Daniel J. Katz supported in part by the National Science Foundation under Grants DMS-1500856 (for Castello and Katz) and  CCF-1815487 (for Katz).  The work of Alain Olavarrieta on this paper was supported in part by the Graduate Fellowship for Outstanding Research Promise in Science and Mathematics from the College of Science and Mathematics at California State University, Northridge.}
\thanks{Jonathan Castello was with the Department of Mathematics, California State University, Northridge.}
\thanks{Daniel J.~Katz is with the Department of Mathematics, California State University, Northridge.}
\thanks{Jacob King was with the Department of Mathematics, California State University, Northridge and is now with the Data Science Program, George Washington University, Washington, D.C}
\thanks{Alain Olavarrieta was with the Department of Mathematics, California State University, Northridge and is now with the Department of Mathematics, Las Positas College, Livermore, CA}
\date{29 December 2021}
\begin{document}

\begin{abstract}
Low correlation (finite length) sequences are used in communications and remote sensing.
One seeks codebooks of sequences in which each sequence has low aperiodic autocorrelation at all nonzero shifts, and each pair of distinct sequences has low aperiodic crosscorrelation at all shifts.
An overall criterion of codebook quality is the demerit factor, which normalizes all sequences to unit Euclidean norm, sums the squared magnitudes of all the correlations between every pair of sequences in the codebook (including sequences with themselves to cover autocorrelations), and divides by the square of the number of sequences in the codebook.
This demerit factor is expected to be $1+1/N-1/(\ell N)$ for a codebook of $N$ randomly selected binary sequences of length $\ell$, but we want demerit factors much closer to the absolute minimum value of $1$.
For each $N$ such that there is an $N\times N$ Hadamard matrix, we use cyclotomy to construct an infinite family of codebooks of binary sequences, in which each codebook has $N-1$ sequences of length $p$, where $p$ runs through the primes with $N\mid p-1$.
As $p$ tends to infinity, the demerit factor of the codebooks tends to $1+1/(6(N-1))$, and the maximum magnitude of the undesirable correlations (crosscorrelations between distinct sequences and off-peak autocorrelations) is less than a small constant times $\sqrt{p}\log(p)$.
This construction also generalizes to nonbinary sequences.
\end{abstract}

\maketitle

\section{Introduction}

\subsection{Sequences and codebooks}\label{Alice}

In this paper, $\N$ always denotes the set of nonnegative integers and $\Z^+$ the set of strictly positive integers.
An {\it (aperiodic) sequence} is a function $f\colon \Z \to \C$ such that $f(j)\not=0$ for only finitely many $j \in \Z$.
We usually write $f_j$ instead of $f(j)$, and we think of $f$ as a doubly infinite sequence $(\ldots,f_{-1},f_0,f_1,\ldots)$.
The set of aperiodic sequences is well known to be a $\C$-inner product space under the usual componentwise addition, $\C$-scalar multiplication, and dot product.

If $f$ is a sequence, then the {\it support of $f$}, written $\supp(f)$, is the set $\{j \in \Z: f_j\not=0\}$, and if $S$ is a subset of $\Z$, then we say that {\it $f$ is supported on $S$} to mean $\supp(f)\subseteq S$.
The {\it length of $f$}, written $\len(f)$, is the size of the smallest set of consecutive integers containing $\supp(f)$.
Thus, $\len(0)=0$, and if $f\not=0$, then $\len(f)=\max\supp(f)-\min\supp(f)+1$.

A {\it unimodular sequence} is a sequence $f$ such that $\supp(f)$ is a set of consecutive integers, and $|f_j|=1$ for each $j \in \supp(f)$.
Thus, the sequence $(\ldots,0,0,1,\exp(2\pi i/5),-1,\exp(4\pi i/3),0,0,\ldots)$ is unimodular, but neither $(\ldots,0,0,1,2,-1,1,0,0,\ldots)$ nor $(\ldots,0,0,1,0,-1,1,0,0,\ldots)$ is.
If $m \in \Z^+$, then the {\it $m$-ary alphabet} is the set of complex $m$th roots of unity, $\{\exp(2\pi i k/m)\colon 0 \leq k < m\}$, so that the binary alphabet is $\{1,-1\}$.
An {\it $m$-ary sequence} is a unimodular sequence $f$ such that $f_j$ is in the $m$-ary alphabet for each $j \in \supp(f)$.

An {\it (aperiodic) codebook} is a finite set $F$ of aperiodic sequences.
Any property (e.g., length, unimodularity, $m$-arity) that can be predicated of a sequence is also predicated of a codebook if and only if every sequence in the codebook has that property.
So, for example, we can say that $F$ is a binary codebook with $\len(F)=5$ if and only if every sequence in $F$ is a binary sequence of length $5$.
A {\it uniform-length codebook} is a codebook in which all sequences have the same length (so that the codebook has a defined length).

\subsection{Correlation}

In this paper, we are concerned with the correlation of aperiodic sequences.
If $f$ and $g$ are aperiodic sequences and $s\in\Z$, then the {\it (aperiodic) crosscorrelation of $f$ with $g$ at shift $s$} is defined to be
\[
\AC_{f,g}(s) = \sum_{j \in \Z} f_{j+s} \conj{g_j},
\]
where we note that the finite supports of $f$ and $g$ make this sum have only finitely many nonzero terms.
If $f$ and $g$ are supported on $\{0,1,\ldots,\ell-1\}$ for some $\ell\in\N$, then $\AC_{f,g}(s)=0$ when $|s| \geq \ell$.
Also note that $\AC_{g,f}(s)=\conj{\AC_{f,g}(-s)}$.
The {\it (aperiodic) autocorrelation of $f$ at shift $s$} is the crosscorrelation of $f$ with itself at shift $s$, that is, $\AC_{f,f}(s)$.
The autocorrelation of a sequence $f$ at shift zero (i.e., $\AC_{f,f}(0)$) is the squared Euclidean norm (sum of the squared magnitudes of the terms) of $f$.
When $f$ is unimodular, this makes $\AC_{f,f}(0)=\len(f)$.
If $f\not=0$, then the {\it normalization of $f$} is the sequence obtained by scaling $f$ by $1/\sqrt{\AC_{f,f}(0)}$.

Communications and remote sensing systems require sequences whose autocorrelations at every nonzero shift are much smaller in magnitude than the autocorrelation at shift $0$; this ensures accurate timing.
When more than one sequence is in use (as in a multi-user communications network), it is also important that every pair of distinct sequences have low magnitude crosscorrelations at every shift.

There are two principal ways to measure smallness of correlation: worst case (peak) measures and mean square (demerit factor) measures.
If $f$ is an aperiodic sequence, then the {\it peak sidelobe level of $f$} is
\[
\PSL(f) = \maxs{s \in \Z \\ s\not=0} |\AC_{f,f}(s)|,
\]
and if $g$ is also an aperiodic sequence, then the {\it peak crosscorrelation of $f$ with $g$} is
\[
\PCC(f,g) = \max_{s \in \Z} |\AC_{f,g}(s)|.
\]
Note that since $\AC_{g,f}(s)=\conj{\AC_{f,g}(-s)}$ for all $s\in\Z$, we have $\PCC(g,f)=\PCC(f,g)$.
If $F$ is a codebook, then the {\it greatest undesirable correlation of $F$} is
\[
\GUC(F)=\max_{(f,g,s) \in F\times F\times\Z \smallsetminus \{(f,f,0): f \in F\}} |\AC_{f,g}(s)|.
\]
We want $\GUC(F)$ to be as small as possible compared to the autocorrelations at shift zero for sequences in $F$, so we define the {\it smallest desirable correlation of $F$} to be
\[
\SDC(F)=\min_{f \in F} \AC_{f,f}(0).
\]

Schmidt showed \cite[Theorem 1.1]{Schmidt} that, as $\ell$ tends to infinity, if $f$ is a randomly selected binary sequence of length $\ell$, then the expected value of $\PSL(f)/\sqrt{\ell\log\ell}$ tends to $\sqrt{2}$.
Historically, it has been difficult to demonstrate that families of sequences used in communications and sensing also have peak sidelobe level as low or better than this average.
For example, Sarwate \cite{Sarwate-1984-Upper} showed that a binary m-sequence $f$ of length $\ell$ has $\PSL(f) \leq 1+(2/\pi) \sqrt{\ell+1} \log(4 \ell/\pi)$.
And if $f$ is a sequence obtained by cyclically shifting a Legendre sequence of length $\ell$, then Mauduit and S\'ark\"ozy give a result \cite[Cor.~1]{Mauduit-Sarkozy} which implies that $\PSL(f) < 2+18 \sqrt{\ell} \log(\ell)$; this bound is improved to $\PSL(f) \leq 2+2 \sqrt{\ell} + 4/\pi \sqrt{\ell} \log(4 \ell/\pi)$ in this paper (see \cref{Rebecca}).
Note that the logarithms in these formulas are not under the square root, but these bounds are proved using character sum estimates that may involve a great deal of overestimation.
These examples involve binary sequences $f$ of length $\ell$, where $\ell=\len(f)=\AC_{f,f}(0)$, and we want $\PSL(f)$ to not grow much faster than $\sqrt{\ell}=\sqrt{\AC_{f,f}(0)}$.
In general, if we are working with codebooks $F$ of (not necessarily binary) sequences, the appropriate generalization is to say that we want $\GUC(F)$ to not grow much faster than $\sqrt{\SDC(F)}$.
If $\calF$ is a family of codebooks such that $\{\SDC(F): F \in \calF\}$ is unbounded, then we say that $\calF$ has {\it well regulated growth of $\GUC$} to mean that for every real number $\epsilon>0$, the quantity $\GUC(F)/\SDC(F)^{1/2+\epsilon}$ tends to $0$ as $\SDC(F)$ tends to infinity.

Now let us consider mean square measures of smallness of correlation.
If $f\not=0$, then {\it autocorrelation demerit factor of $f$} is
\[
\ADF(f) = \frac{\sums{s \in \Z \\ s\not=0} |\AC_{f,f}(s)|^2}{\AC_{f,f}(0)^2},
\]
which is the sum of the squared magnitudes of all autocorrelation values for the normalization of $f$.
If $f,g\not=0$, then the {\it crosscorrelation demerit factor of $f$ with $g$} is
\[
\CDF(f,g) = \frac{\sum_{s \in \Z} |\AC_{f,g}(s)|^2}{\AC_{f,f}(0) \AC_{g,g}(0)},
\]
which is the sum of the squared magnitudes of all the crosscorrelation values for the normalizations of $f$ and $g$.
Note that $\ADF(f)=\CDF(f,f)-1$, and that $\CDF(g,f)=\CDF(f,g)$ because $\AC_{g,f}(s)=\conj{\AC_{f,g}(-s)}$ for all $s \in \Z$.
Merit factors are just reciprocals of demerit factors (when they are nonzero): the {\it autocorrelation merit factor of $f$} is $1/\ADF(f)$ and the {\it crosscorrelation merit factor of $f$ with $g$} is $1/\CDF(f,g)$.
We prefer to work with demerit factors, since placing the more complicated terms in the numerator makes analysis more tractable.

If $F$ is a codebook, then the {\it (crosscorrelation) demerit factor of $F$} is
\[
\CDF(F)=\frac{1}{|F|^2}\sum_{f,g \in F} \CDF(f,g),
\]
where we note that we are allowing $f=g$ in the summation, so that the sum includes $\CDF(f,f)=\ADF(f)+1$ for each $f \in F$.
For a nonempty binary codebook $F$ of positive length, it is known \cite[Theorem 1]{Liu-Guan} that $\CDF(F)\geq 1$, with equality if and only if $F$ is a complementary set of (nonzero) sequences (i.e., a codebook $F$ in which $\sum_{f \in F} \AC_{f,f}(s)=0$ for every nonzero $s \in \Z$).
It actually follows from a theorem of Katz and Scharf \cite{Katz-Scharf} that for any nonempty codebook such that $\AC_{f,f}(0)$ has the same value for each $f \in F$, we have $\CDF(F)=1$ if and only if $F$ is a complementary set\footnote{Katz and Scharf show that if $C$ is nonzero and $F$ is a nonempty codebook such that $\AC_{f,f}(0)=C$ for every $f\in F$, then $\CDF(F)=1+|F|^{-2} C^{-2} \sum_{s\in \Z \smallsetminus\{0\}} |\sum_{f \in F} \AC_{f,f}(s)|^2$.  We prove the periodic analogue here in \cref{Bruno}, and the proofs are done the same way, except that Katz and Scharf's aperiodic sequences are identified with Laurent polynomials in $\C[z,z^{-1}]$, while our periodic sequences of length $\ell$ are identified with elements in the quotient ring $\quotring{\ell}$.}

Sarwate showed \cite[eqs.~(13),(38)]{Sarwate-1984-Mean} that, for a binary sequence $f$ of length $\ell$ selected at random, the expected value of $\ADF(f)$ is $1-1/\ell$, and if $f$ and $g$ are a randomly selected pair of binary sequences of length $\ell$, then the expected value of $\CDF(f,g)$ is $1$.
This means that if we select at random a binary codebook $F$ of length $\ell$, the expected value of $\CDF(F)$ will be $1+1/|F|-(1/|F|\ell)$.
We are interested in values of $\CDF(F)$ where $F$ runs through an infinite family $\calF$ of uniform-length, unimodular codebooks, with each codebook having the same number of sequences.
For certain such families $\calF$, we investigate the asymptotic value of $\CDF(F)$ as the length of the codebook tends to infinity.
This can give a good sense of the performance of the codebooks of moderate length taken from $\calF$; see the figures in \cite{Boothby-Katz} for many examples where asymptotic behavior gives a good prediction of the demerit factors for the majority of sequences of length less than $2000$.
If $\calF$ were made up of randomly selected uniform-length binary codebooks of $N$ sequences each, the expected value of $\CDF(F)$ would tend to $1+1/N$ as the length of the codebook approaches infinity.
We want families where $\CDF(F)$ tends to a limit much closer to the absolute minimum value of $1$ than to this typical value of $1+1/N$.

\subsection{Periodic sequences}

In this paper, we produce aperiodic sequences from periodic sequences, which we now define.
For $\ell \in \Z^+$, a {\it periodic sequence of length $\ell$} is a function $g\colon \Zell \to \C$, where we usually write $g_j$ instead of $g(j)$, and we write $g$ out as $(g_0,g_1,\ldots,g_{\ell-1})$.
If $k \in \Z$, we use the convention that $g_k$ means $g_{k+\ell\Z}$; here, $k+\ell\Z$ is the reduction of $k$ modulo $\ell$.
If $f$ is a periodic sequence, then $\len(f)$ denotes the length of $f$.

For a fixed $\ell$, the set of periodic sequences of length $\ell$ is well known to be a $\C$-inner product space under the usual componentwise addition, $\C$-scalar multiplication, and dot product.
Because of this, we may speak of pairs or sets of periodic sequences as being {\it orthogonal}, with the usual meaning.
A {\it balanced} periodic sequence $g=(g_0,\ldots,g_{\ell-1})$ is one whose terms sum to zero, or equivalently, which is orthogonal to the periodic sequence $(1,\ldots,1)$.

A {\it unimodular periodic sequence} is one whose terms are all complex numbers of magnitude $1$.
If $m \in \Z^+$, then an {\it $m$-ary periodic sequence} is one whose terms all lie in the $m$-ary alphabet.

A {\it periodic codebook} is a finite set of periodic sequences.
Any property (e.g., length, balance, unimodularity, $m$-arity) that can be predicated of a periodic sequence is also predicated of a periodic codebook if and only if every sequence in the codebook has that property.
So, for example, we can say that $F$ is a balanced, binary periodic codebook with $\len(F)=7$ if and only if every sequence in $F$ is a balanced, binary periodic sequence of length $7$.
A {\it uniform-length periodic codebook} is a periodic codebook in which all sequences have the same length (so that the codebook has a defined length).

If $g$ is a periodic sequence of length $\ell$ and $r \in \Z$, then {\it $g$ rotated by $r$}, written $g^{(r)}$, is the aperiodic sequence with $g^{(r)}_j=g_{r+j}$ for $j \in \{0,1,\ldots,\ell-1\}$ and $g^{(r)}_j=0$ for $j\not\in\{0,1,\ldots,\ell-1\}$, i.e.,
\[
g^{(r)}=(\ldots,0,0,g_r,g_{r+1},\ldots,g_{r+\ell-1},0,0,\ldots).
\]
A {\it rotation} of $g$ is a sequence $g^{(r)}$ for some $r \in \Z$, and $r$ is called the {\it advancement}.
If $G$ is a periodic codebook, then a {\it rotation} of $G$ is an aperiodic codebook of the form $\{g^{(r_g)}: g \in G\}$ where $r_g$ is some integer for each $g \in G$.
That is, each sequence in $G$ is replaced by one of its rotations.

\subsection{Cyclotomic sequences}

The periodic sequences of interest to us are defined using cyclotomic classes of finite fields.
If $p$ is a prime, then $\Fp$ is $\Z/p\Z$ (the finite field of order $p$) and $\Fpu$ is the unit group of $\Fp$.
For each prime $p$, we let $\alpha_p$ be a fixed primitive element (generator of $\Fpu$).
If $n \in \Z^+$ and $p$ is a prime with $p\equiv 1 \pmod{n}$, then $\Fpn=\{a^n: a \in \Fpu\}$, which is a subgroup of index $n$ in $\Fpu$.
The cosets of $\Fpn$ in $\Fpu$ are called {\it cyclotomic classes of index $n$}.
The {\it $k$th cyclotomic class of index $n$ in $\Fp$} is $\alpha_p^k \Fpn$, and the quotient group $\Fpu/\Fpn$ is $\{\alpha_p^0 \Fpn, \ldots,\alpha_p^{n-1} \Fpn\}$.

If $n \in \Z^+$, then a {\it cyclotomic pattern of index $n$} is a periodic sequence of length $n$.
If $d=(d_0,d_1,\ldots,d_{n-1})$ is a cyclotomic pattern of index $n$ and $p$ is a prime with $p\equiv 1\pmod{n}$, then the {\it periodic sequence of length $p$ (derived) from $d$} is the periodic sequence $f\colon \Fp \to \C$ with
\[
f_h = \begin{cases}
0 & \text{if $h=0$,} \\
d_k & \text{if $h \in \alpha_p^k\Fpn$.}
\end{cases}
\]
If $p$ is a prime with $p\equiv 1 \pmod{n}$, then an {\it aperiodic sequence (derived) from $d$ via $p$} is any rotation of the periodic sequence of length $p$ derived from $d$.
An aperiodic sequence derived from $d$ via $p$ is also called a {\it $p$-instance of $d$}, or just an {\it instance of $d$} if the particular prime need not be specified.

A {\it cyclotomic plan of index $n$} is a finite set of cyclotomic patterns of index $n$.
We say that cyclotomic plan is {\it unimodular} (resp., {\it $m$-ary}) to mean that every cyclotomic pattern in the plan is unimodular (resp., $m$-ary).
If $D$ is a cyclotomic plan of index $n$, and $p$ is a prime with $p\equiv 1 \pmod{n}$, then the {\it periodic codebook of length $p$ (derived) from $D$} is the set of $|D|$ periodic sequences of length $p$ that are derived from the cyclotomic patterns in $D$.
An {\it aperiodic codebook (derived) from $D$ via $p$} is a rotation of the periodic codebook of length $p$ derived from $D$.
An aperiodic codebook derived from $D$ via $p$ is also called a {\it $p$-instance of $D$}, or just an {\it instance of $D$} if the particular prime need not be specified.
Sequences from cyclotomic patterns are the same as sequences derived from linear combinations of multiplicative characters of finite fields (see \cref{Marlene} for a proof of this fact).
Small codebooks consisting of three sequences from the cyclotomic plan $\{(1,1,-1,-1),(1,-1,-1,1),(1,-1,1,-1)\}$ were studied by Boothby and Katz in \cite{Boothby-Katz}.

\subsection{Unimodularization}\label{Una}
If $d$ is a cyclotomic pattern and $f$ is a periodic sequence derived from $d$, then we have $f_0=0$.
All the other terms of $f$ will be unimodular (resp., in the $m$-ary alphabet for some $m \in \Z^+$) if and only if the pattern $d$ is unimodular (resp., $m$-ary).
For applications, we usually want all terms, including $f_0$, to be unimodular or $m$-ary, so we introduce the process of {\it unimodularization}.
A periodic sequence $f$ of length $\ell$ is said to be {\it unimodularizable} if $|f_j|=1$ for all nonzero $j\in \Zell$.
A {\it unimodularization} of this periodic sequence $f$ is any periodic sequence $\uni{f}$ with $\uni{f}_j=f_j$ for every nonzero $j\in\Zell$, and $|\uni{f}_0|=1$.
If $d$ is a unimodular cyclotomic pattern and $f$ is the periodic sequence of length $p$ derived from $d$, then any unimodularization $\uni{f}$ of $f$ is a {\it unimodularized periodic sequence of length $p$ (derived) from $d$}, and a rotation of $\uni{f}$ is a {\it unimodularized aperiodic sequence (derived) from $d$ via $p$}, also known as a {\it unimodularized $p$-instance of $d$} or just a {\it unimodularized instance of $d$} if the particular prime need not be specified.
Note that a unimodularized $p$-instance of $d$ is a unimodular sequence of length $p$.

A periodic codebook consisting entirely of unimodularizable sequences is said to be a {\it unimodularizable periodic codebook}.
If $F$ is such a codebook, then a {\it unimodularization} of $F$ is a periodic codebook obtained by replacing each sequence $f$ in $F$ with a unimodularization of $f$.
If $D$ is a unimodular cyclotomic plan, then a {\it unimodularized periodic codebook of length $p$ (derived) from $D$} is a unimodularization of the periodic codebook of length $p$ derived from $D$.
A {\it unimodularized aperiodic codebook (derived) from $D$ via $p$} is a rotation of a unimodularized periodic codebook of length $p$ derived from $D$.
A unimodularized aperiodic codebook derived from $D$ via $p$ is also known as a {\it unimodularized $p$-instance of $D$}, or just a {\it unimodularized instance of $D$} if the particular prime need not be specified.
Note that a unimodularized $p$-instance of $D$ is a unimodular codebook of length $p$.
Also note that any periodic or unimodularized periodic codebook derived from a unimodular cyclotomic plan $D$ must always have $|D|$ sequences, but it is possible for an aperiodic codebook or unimodularized aperiodic codebook derived from $D$ to have fewer than $|D|$ sequences.
For example, cyclotomic patterns $(1,0)$ and $(0,1)$ produce the length $p=3$ periodic sequences $f=(0,1,0)$ and $g=(0,0,1)$, respectively, and $f^{(1)}=g^{(2)}$.
However, the possibility of having fewer than $|D|$ sequences can befall $p$-instances or unimodularized $p$-instances of $D$ only for a finite set of small primes $p$ (see \cref{Nora}), so this phenomenon is not of importance when we consider asymptotic results as $p$ tends to infinity.

\subsection{Results}\label{Rudolph}

This paper has two main results.
The first, which we prove as \cref{Samuel}, is a criterion that tells us which unimodular cyclotomic plans produce families of codebooks with well regulated growth of $\GUC$.
\begin{theorem}\label{Penelope}
Let $D$ be a unimodular cyclotomic plan.
Let $\calA$ be a family of unimodularized instances of $D$ such that the $\{\len(A): A \in \calA\}$ is unbounded.
Then $\calA$ has well regulated growth of $\GUC$ if and only if $D$ is both balanced and orthogonal.
\end{theorem}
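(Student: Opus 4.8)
The plan is to translate every correlation into a sum of multiplicative character sums and then apply Weil's bound together with the P\'olya--Vinogradov completion technique. By \cref{Marlene}, the periodic sequence of length $p$ derived from a unimodular pattern $d=(d_0,\ldots,d_{n-1})$ agrees, at every nonzero $h\in\Fp$, with $\sum_{i=0}^{n-1}c_i\chi_i(h)$, where $\chi_0,\ldots,\chi_{n-1}$ are the $n$ multiplicative characters of $\Fpu$ trivial on $\Fpn$ and $c_i=\tfrac1n\sum_{k=0}^{n-1}d_k\conj{\chi_i(\alpha_p^k)}$ is the finite Fourier transform of $d$. Two identities drive the argument: the coefficient of the trivial character is $c_0=\tfrac1n\sum_k d_k$, so $d$ is balanced exactly when $c_0=0$; and if $e$ has coefficients $b_i$, then $\ip{d}{e}=\sum_k d_k\conj{e_k}=n\sum_i c_i\conj{b_i}$, so two patterns are orthogonal exactly when their coefficient vectors are orthogonal.

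Next I would reduce every undesirable correlation to an incomplete character sum. If $f,g$ are the periodic sequences from $d,e$ and we correlate the rotations $f^{(r)},g^{(r')}$ appearing in a unimodularized instance $A$, then up to $O(1)$ corrections from the two unimodularized positions, $\AC_{f^{(r)},g^{(r')}}(s)=\sum_{m\in M}f_{m+t}\conj{g_m}$, where $t=r-r'+s$ and $M$ is an interval of $p-|s|$ consecutive integers (indices read modulo $p$). Expanding $f_{m+t}$ and $\conj{g_m}$ into characters turns this into $\sum_{i,i'}c_i\conj{b_{i'}}\,S_{i,i'}(t,M)$ with $S_{i,i'}(t,M)=\sum_{m\in M}\chi_i(m+t)\conj{\chi_{i'}(m)}$. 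Since $n$ is fixed, it suffices to bound each $S_{i,i'}(t,M)$ and to track which terms are genuinely large.

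For the forward direction, assume $D$ is balanced and orthogonal. Balance forces $c_0=b_0=0$, so every character that appears is nontrivial. I would then show that each $S_{i,i'}(t,M)$ is $O(\sqrt p\,\log p)$ unless it is a diagonal term with $i=i'$ and $t\equiv0\pmod p$, in which case $S_{i,i}(t,M)=|M\cap\Fpu|=p-|s|-O(1)$. For the off-peak autocorrelations we have $r=r'$, hence $t=s\not\equiv0$, so no diagonal term ever occurs and the bound is immediate. For the crosscorrelations the diagonal terms occur only at the single shift making $t\equiv0$, and there they contribute $(\sum_i c_i\conj{b_i})(p-|s|)$, which vanishes by orthogonality. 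Hence $\GUC(A)=O(\sqrt p\,\log p)=o(p^{1/2+\epsilon})$ for every $\epsilon>0$, which is exactly well regulated growth since $\SDC(A)=p$.

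The estimate $S_{i,i'}(t,M)=O(\sqrt p\,\log p)$ in the non-diagonal cases is the main obstacle, and it is where the heavy machinery lives. I would complete the interval sum by expanding the indicator of $M$ in additive characters $\psi_a$ of $\Fp$, writing $S_{i,i'}(t,M)=\sum_a\widehat{1_M}(a)\sum_{m\in\Fp}\psi_a(m)\chi_i(m+t)\conj{\chi_{i'}(m)}$; Weil's bound controls each inner mixed additive-multiplicative character sum by a constant times $\sqrt p$ in every nondegenerate case, and $\sum_a|\widehat{1_M}(a)|=O(\log p)$ supplies the logarithm. The care needed is in certifying nondegeneracy in every surviving term: for nonzero frequency $a$ the additive twist already forces nondegeneracy, while for $a=0$ the sum reduces to a pure multiplicative (Jacobi-type) sum that is degenerate precisely in the diagonal case $i=i'$, $t\equiv0$ that has already been removed. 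For the converse I would run this backwards: if some $d$ is unbalanced then $c_0\neq0$ and the $i=i'=0$ term gives $|\AC_{f^{(r)},f^{(r)}}(1)|=|c_0|^2 p+O(\sqrt p\,\log p)=\Theta(p)$; if $D$ is balanced but distinct $d,e$ are non-orthogonal, then choosing $s$ with $r-r'+s\equiv0\pmod p$ and $|s|\le p/2$ gives $|\AC_{f^{(r)},g^{(r')}}(s)|=\tfrac1n|\ip{d}{e}|\,(p-|s|)+O(\sqrt p\,\log p)=\Theta(p)$, where for large $p$ the sequences are distinct by \cref{Nora}. In either case $\GUC(A)$ grows linearly in $p$, so $\GUC(A)/\SDC(A)^{1/2+\epsilon}\to\infty$ for small $\epsilon$ and the growth is not well regulated.
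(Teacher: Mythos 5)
Your proposal is correct and follows essentially the same route as the paper: Fourier-transforming the pattern to a character combination (the paper's \cref{Wally} and \cref{Greta}), reducing each correlation to pairwise character sums whose diagonal main terms at the shift $s\equiv r'-r\pmod p$ are killed exactly by balance and orthogonality (\cref{Esmeralda}, \cref{Edna}), bounding the rest by Weil plus a completion losing $\log p$ (\cref{Samantha}), getting the converse from the shifts $s=1$ and $s\equiv r'-r$ with $|s|\le p/2$ (\cref{Martha}, \cref{Evelyn}), handling coincident sequences via \cref{Nora}, and transferring to unimodularized instances through $O(1)$ perturbations (\cref{Ulrich}, \cref{Melanie}). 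The only cosmetic difference is in the completion step: you expand the interval indicator in additive characters, whereas the paper expands the multiplicative characters via Gauss sums (\cref{Horace}) and bounds the resulting geometric sums with Sarwate's lemma, but the two computations are dual and yield the same $O(\sqrt p\,\log p)$ bound.
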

We also have a result about the mean square correlation for codebooks from unimodular cyclotomic plans.
Because \cref{Penelope} tells us that we should use orthogonal, balanced plans, if our plan is of index $n$, we cannot have more than $n-1$ patterns in it, so we make our codebooks as large as possible by insisting that our plan has exactly $n-1$ patterns.
A {\it Hadamard plan of index $n$} is an orthogonal, balanced, unimodular cyclotomic plan of index $n$ containing precisely $n-1$ cyclotomic patterns.
In this case, the patterns in our plan, along with the pattern $(1,1,\ldots,1)$, can be written as rows of an $n\times n$ matrix $M$ with $M M^*=n I$; this will be a Hadamard matrix (with one row being all ones) if our patterns are binary.
Since $n\times n$ Hadamard matrices are known to exist for $n$ equal to all multiples of $4$ up to and including $664$ \cite{Kharaghani-Tayfeh-Rezaie} (and since any Hadamard matrix can be converted into a Hadamard matrix with an all-ones row by negating some of the columns), this allows for plenty of binary codebook constructions.

To state our result on mean square correlation of codebooks derived from a Hadamard plan, we must specify more precisely how the rotation is being done when we create instances of our plan.
If $D$ is a cyclotomic plan, we review how one obtains a family of instances (or unimodularized instances) of $D$.
First we let $\{P_\iota\}_{\iota \in I}$ be a family of periodic codebooks (or a family of unimodularized periodic codebooks) from $D$.
For each $\iota \in I$, we let $p_\iota=\len(P_\iota)$, and for each $d \in D$ we let $f_{\iota,d}$ denote the periodic sequence (or unimodularized periodic sequence) of length $p_\iota$ from $d$ such that $P_\iota=\{f_{\iota,d}: d \in D\}$.
For each $\iota \in I$, we let $r_\iota \colon D \to \Z$ be a function, and let $A_\iota$ be the aperiodic codebook $\{f_{\iota,d}^{(r_\iota(d))}: d \in D\}$, which is a rotation of $P_\iota$.
So far this is a completely generic construction.
If $\{p_\iota: \iota \in I\}$ is unbounded and if there is some $\rho \in \R$ such that, for every $d \in D$, the quantity $r_{\iota}(d)/p_\iota$ tends to $\rho$ as $p_\iota$ tends to infinity, then the process we just described is a {\it coherently $\rho$-rotated construction of $\{A_\iota\}_{\iota \in I}$}.
A family of instances (or family unimodular instances) of a cyclotomic plan $D$ is said to be {\it coherently $\rho$-rotated} if there is a coherently $\rho$-rotated construction of it.
Since rotation is a cyclic process modulo the length of the underlying periodic sequence, any coherently $\rho$-rotated family will also be a coherently $(\rho+m)$-rotated family for each $m \in \Z$, and such a family can also be obtained by other constructions that are not coherently rotated.
A very simple way to obtain a coherently $\rho$-rotated family from $D$ is to rotate all the sequences in a periodic codebook (or unimodularized periodic codebook) of length $p$ from $D$ by $\floor{\rho p}$ (or, equally well $\ceil{\rho p}$) to generate a $p$-instance (or unimodularized $p$-instance) of $D$; doing this for an infinite number of distinct primes produces a coherently $\rho$-rotated family from $D$.
We define a periodic function $\Phi \colon \R \to \R$ with
\begin{equation}\label{Philip}
\begin{aligned}
\Phi(x+1) & =\Phi(x) & & \text{for every $x \in \R$,} \\
\Phi(x) & = 2\left(x-\frac{1}{2}\right)^2-\frac{1}{6} & & \text{for $0 \leq x \leq 1$,}
\end{aligned}
\end{equation}
and now we can state our second main result, which we obtain as part of \cref{Julia}.
\begin{theorem}\label{Mary}
Let $n \in \Z^+$, let $D$ be a Hadamard plan of index $n$, and let $\rho \in \R$.
Let $\{A_\iota\}_{\iota \in A}$ be coherently $\rho$-rotated family of unimodularized instances of $D$.
Then $\CDF(A_\iota)$ tends to
\[
1+ \frac{1}{3(n-1)} + \frac{1}{n-1} \Phi\left(2 \rho\right).
\]
as $\len(A_\iota)$ tends to infinity.
For a fixed $n$, the right hand side achieves its global minimum value of $1+1/(6(n-1))$ if and only if $\rho \in \{(2n+1)/4: n \in \Z\}$.
\end{theorem}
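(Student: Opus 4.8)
The plan is to reduce the codebook demerit factor to the mean square of a single sum of autocorrelations, and then to evaluate that mean square with multiplicative characters, exploiting the orthogonality built into a Hadamard plan. Write $A=A_\iota$ and $p=\len(A_\iota)$, and recall from \cref{Nora} that $|A|=n-1$ for all but finitely many primes, so we may take $A=\{f^{(1)},\dots,f^{(n-1)}\}$ with each $f^{(i)}$ a unimodular sequence of length $p$, whence $\AC_{f^{(i)},f^{(i)}}(0)=p$. The first step is the elementary identity $\sum_{s\in\Z}|\AC_{f,g}(s)|^2=\sum_{t\in\Z}\AC_{f,f}(t)\conj{\AC_{g,g}(t)}$, which follows by writing the left side as $\sum_{j,k}\conj{g_j}g_k\,\AC_{f,f}(j-k)$ and resumming on $t=j-k$. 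Substituting this into the definition of $\CDF$ and collapsing the double sum over the codebook into a single square gives
\[
\CDF(A)=\frac{1}{(n-1)^2 p^2}\sum_{t\in\Z}\Bigl|\sum_{i=1}^{n-1}\AC_{f^{(i)},f^{(i)}}(t)\Bigr|^2=1+\frac{1}{(n-1)^2 p^2}\sum_{\substack{t\in\Z\\ t\neq0}}|S(t)|^2,
\]
where $S(t)=\sum_i \AC_{f^{(i)},f^{(i)}}(t)$ and the shift $t=0$ contributes exactly $1$ because $S(0)=(n-1)p$. (This makes transparent the Katz--Scharf phenomenon: the limit is minimized precisely as $S$ approaches a complementary configuration.) Everything now reduces to the off-peak behavior of $S$.

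Next I would pass to characters via \cref{Marlene}. Let $\theta$ be a multiplicative character of $\Fpu$ of order $n$, so that $\theta^0,\dots,\theta^{n-1}$ are exactly the characters trivial on $\Fpn$, and expand each pattern in this basis, writing $f^{(i)}_h=\sum_a c^{(i)}_a\theta^a(h)$ for $h\neq0$. Balance forces $c^{(i)}_0=0$, and the orthogonality of the rows of the Hadamard matrix $M$ (with $MM^*=nI$ and all-ones row equal to $\theta^0$) translates, by Parseval, into the completeness relation $\sum_{i=1}^{n-1}c^{(i)}_a\conj{c^{(i)}_b}=\delta_{ab}-\delta_{a0}\delta_{b0}$. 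When the advancements of the $f^{(i)}$ coincide, the windowed character sums $U_{a,b}(t)=\sum_{h\in W_t}\theta^a(h+t)\conj{\theta^b(h)}$ are independent of $i$, so this relation makes every mixed-character contribution ($a\neq b$) and the trivial-character contribution ($a=b=0$) cancel exactly, leaving $S(t)=\sum_{a=1}^{n-1}T_a(t)$, where $T_a(t)=U_{a,a}(t)$ is the shift-$t$ aperiodic autocorrelation of the rotated pure character sequence $j\mapsto\theta^a(r+j)$. The main structural obstacle is that a coherently $\rho$-rotated family only forces $r_\iota(d)/p\to\rho$, so the advancements need not be equal; I expect the first genuinely hard point to be showing that translating each window by the $o(p)$ amount needed to make all advancements a common $\floor{\rho p}$ changes $\sum_{t\neq0}|S(t)|^2/p^2$ by $o(1)$, a continuity estimate for windowed character sums.

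Granting that reduction, expanding $|S(t)|^2=\sum_{a,a'}T_a(t)\conj{T_{a'}(t)}$ and summing over $t$ reduces the problem to two asymptotics of character-sequence demerit factors: the diagonal limit
\[
\frac1{p^2}\sum_{\substack{t\in\Z\\ t\neq0}}|T_a(t)|^2\longrightarrow \tfrac13+\Phi(2\rho)\qquad(a\in\{1,\dots,n-1\}),
\]
and the off-diagonal limit $\frac1{p^2}\sum_{t\neq0}T_a(t)\conj{T_{a'}(t)}\to 0$ for distinct nonzero $a,a'$. The off-diagonal statement is exactly the assertion that the crosscorrelation demerit factor of sequences from two distinct characters tends to $1$, and it follows from bounding incomplete cross-character sums by the Weil estimate. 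The diagonal statement is the generalized Legendre merit-factor computation, which I would establish through the Wiener--Khinchin identity
\[
\sum_{t\in\Z}|T_a(t)|^2=\frac{1}{2\pi}\int_0^{2\pi}\Bigl|\sum_{j=0}^{p-1}\theta^a(r+j)e^{ij\phi}\Bigr|^4\,d\phi,
\]
evaluating the fourth-power moment of this character polynomial via Gauss sums (of magnitude $\sqrt p$) and equidistribution, with error terms again controlled by Weil. The main term is a second Bernoulli polynomial in the fractional part of $2\rho$, which is precisely $\tfrac13+\Phi(2\rho)$ since $\Phi(x)=2B_2(\{x\})$. I regard this diagonal asymptotic as the analytic heart of the proof. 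Combining it with the off-diagonal vanishing yields $\sum_{t\neq0}|S(t)|^2/\bigl((n-1)^2p^2\bigr)\to (n-1)^{-1}\bigl(\tfrac13+\Phi(2\rho)\bigr)=\tfrac{1}{3(n-1)}+\tfrac{1}{n-1}\Phi(2\rho)$, which is the claimed limit.

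Finally, the minimization is immediate. The first two summands of the limit are constants and $1/(n-1)>0$, so minimizing over $\rho$ is minimizing $\Phi(2\rho)$. Since $\Phi$ has period $1$ and $\Phi(x)=2(x-\tfrac12)^2-\tfrac16\ge-\tfrac16$ on $[0,1]$ with equality only at $x=\tfrac12$, the global minimum $-\tfrac16$ of $\Phi(2\rho)$ is attained exactly when $2\rho\equiv\tfrac12\pmod1$, that is, when $\rho\in\{(2m+1)/4:m\in\Z\}$, and there the limit equals $1+\tfrac{1}{6(n-1)}$.
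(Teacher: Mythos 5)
Your opening reduction is correct and is genuinely different from the paper's route: the Katz--Scharf collapse $\CDF(A)=1+(n-1)^{-2}p^{-2}\sum_{t\neq 0}|S(t)|^2$, combined with the completeness relation $\sum_{i=1}^{n-1}c^{(i)}_a\conj{c^{(i)}_b}=\delta_{ab}-\delta_{a0}\delta_{b0}$ to get (for equal advancements) $S(t)=\sum_{a=1}^{n-1}T_a(t)$, whereas the paper works pair by pair, averaging the Boothby--Katz approximation (\cref{Adam}, \cref{Aaron}) and never equalizing advancements (\cref{Victor}). But the two asymptotics on which your third paragraph rests are both false, and this is a genuine gap, not a technicality. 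The diagonal limit $p^{-2}\sum_{t\neq 0}|T_a(t)|^2\to\tfrac13+\Phi(2\rho)$ holds \emph{only} when $2a\equiv 0\pmod n$, i.e.\ when $\theta^a$ is the quadratic (Legendre) character; for every nonreal $\theta^a$ the limit is $\tfrac13$, independent of $\rho$. One sees this from \cref{Adam} specialized to a pure-character pattern $d=d'$ with $\ft{d}$ supported at the single index $a$: there $U_{d,d}=1$ contributes only $\Phi(0)=\tfrac13$, $\PCDF(\accentset{\circ}{f},\accentset{\circ}{f})=p/(p-1)\to 1$, and the sole rotation-sensitive coefficient is $V_{d,d,p}=\bigl|\sum_j \ft{d}_j\ft{d}_{-j}\,\omega_p^{(p-1)j/n}(-1)\bigr|^2$, which vanishes unless $a\equiv-a\pmod n$. (Concretely, quartic-character sequences have asymptotic $\ADF$ equal to $1/3$ at \emph{every} rotation; your fourth-moment computation, carried out correctly for nonreal $a$, will produce main term $\tfrac13$ with no $\rho$-dependence.)

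Dually, the off-diagonal terms do not all vanish. By the same Wiener--Khinchin identity you used, $\sum_{t}T_a(t)\conj{T_{a'}(t)}=\sum_{s}|\AC_{f_a,f_{a'}}(s)|^2$, so $p^{-2}\sum_{t\neq 0}T_a(t)\conj{T_{a'}(t)}$ tends to $\lim\CDF(f_a,f_{a'})-1$; for the conjugate pairs $a'\equiv-a\pmod n$ this limit is $\Phi(2\rho)$, not $0$, because in \cref{Adam} one has $U_{d,d'}=0$ and $\PCDF\to 1$ but $V_{d,d',p}=1$ exactly when $a'\equiv -a$. These conjugate pairs are precisely the carriers of the $V$-mass whose average \cref{Maynard}\eqref{Abel} computes to be $1/(n-1)$; counting ordered pairs with $a'\equiv-a$ (including $a=a'=n/2$ when $n$ is even) yields exactly $n-1$ copies of $\Phi(2\rho)$, which is why your two incorrect limits happen to sum to the correct total $(n-1)\bigl(\tfrac13+\Phi(2\rho)\bigr)$ --- the bookkeeping is right globally but wrong locally. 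Your proposed tool cannot repair this: the Weil bound gives only $|\AC_{f_a,f_{a'}}(s)|=O(\sqrt{p}\log p)$ per shift, and over the $\sim 2p$ shifts that is consistent with a normalized mean square of order $1$ (the Legendre autocorrelation satisfies the same per-shift bound while its mean square tends to $\tfrac13+\Phi(2\rho)$). Finishing along your route requires the conjugate-pair cross-term analysis, which is essentially the $V$-term computation of Boothby--Katz that the paper imports via \cref{Adam}. Your two flagged technicalities are real but minor by comparison: the paper sidesteps unequal advancements by carrying the functions $r_\iota(d)$ through \cref{Victor} unchanged, and handles the unimodularization correction with \cref{Karl}.
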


\subsection{Examples}

To illustrate our results, we directly calculated correlation data for codebooks derived from Hadamard plans.
For $n \in \N$, the {\it $n$th Walsh Hadamard matrix}, written $H_n$, is the $2^n\times 2^n$ matrix with entries in $\{-1,1\}$ defined recursively by $H_0=\begin{bmatrix} 1 \end{bmatrix}$ and
\[
H_{n+1} = \begin{bmatrix} H_n & H_n \\ H_n & -H_n \end{bmatrix}.
\]
For $n \in \N$, the {\it $n$th Walsh Hadamard plan}, written $D_n$, is the balanced, orthogonal, binary plan of index $2^n$ whose $2^n-1$ constituent patterns are obtained by considering the rows of $H_n$ as periodic sequences of order $2^n$, and discarding the first row (which is $(1,1,\ldots,1)$).
We note that this construction makes $D_1=\{(1,-1)\}$ which produces codebooks with one sequence apiece, namely the Legendre sequence, and so our results recapitulate H\o holdt and Jensen's results \cite{Hoholdt-Jensen} on the autocorrelation of Legendre sequences in this case.
The Walsh Hadamard construction also makes $D_2=\{(1,-1,1,-1),(1,1,-1,-1),(1,-1,-1,1)\}$, which produces codebooks with three sequences apiece: the Legendre sequence and two other sequences that can be derived from quartic characters; these were studied in \cite{Boothby-Katz}.

In this section, we consider unimodular $p$-instances of $D_n$ for various $p$ and $n$, where we always unimodularize by replacing the $0$ entry of each periodic sequence with a $1$.
So our codebooks formed from $D_n$ are always binary and of length $p$.
When forming the underlying periodic sequences, the primitive element $\alpha_p$ of $\Fp$ that we use is $j+p\Z$, where $j$ is the least positive integer such that $j+p\Z$ is primitive in $\Fp$.
When forming our $p$-instances, we always rotate every sequence in the codebook by the same advancement, so one can say that these codebooks are {\it uniformly rotated}.
As such, we list one advancement for the whole codebook.

When judging the quality of a codebook $F$ with respect to peak measures, we use $\GUC(F)/\sqrt{\SDC(F)}$, since we want this number to be small and not grow too rapidly as sequence length increases.
We use $|F|(\CDF(F)-1)$ to evaluate the crosscorrelation demerit factor: in this way codebooks consisting of $N$ randomly selected sequences of length $\ell$ would be expected to have $N(\CDF(F)-1)=1-1/\ell$, which is close to $1$ when $\ell$ is large.
This allows us to compare codebooks with different numbers of sequences on a similar basis.
We call $|F|(\CDF(F)-1)$ the {\it adjusted demerit factor} of the codebook.
Throughout this section, including all tables of this paper, any number written with a decimal point is to be understood as a decimal approximation of a quantity, but not necessarily an exact value; we omit the trailing $\ldots$ that we would normally write for approximations in order to avoid clutter, especially in the tables.

To check the rotation-dependence in the formula of \cref{Mary}, we form unimodularized $1009$-instances of $D_3$ with all possible advancements from $r=0$ to $r=1008$.
\cref{Mary} predicts that if $r/p$ tends to $\rho$ as $p$ tends to infinity in an infinite family of unimodularized $p$-instances of $D_3$, then the adjusted demerit factor should tend to $(1/3)+\Phi(2\rho)$.
In \cref{Claire}, we plot (as dots) the adjusted demerit factors of our codebooks with $p=1009$ as a function of $x=r/p$ for $0 \leq r < p$, and also display the curve of the function $f(x)=(1/3)+\Phi(2 x)$ for comparison; this shows that the behavior of the codebooks is close to the asymptotic prediction even at a modest length.
The adjusted demerit factor ranges from $0.166092$ at advancements $r=278$ and $732$ (so $r/p=0.275520$ and $0.725471$) to $0.722886$ at advancements $r=501$ and $509$ (so $r/p=0.496531$ and $0.504460$).
\begin{figure}[ht!]
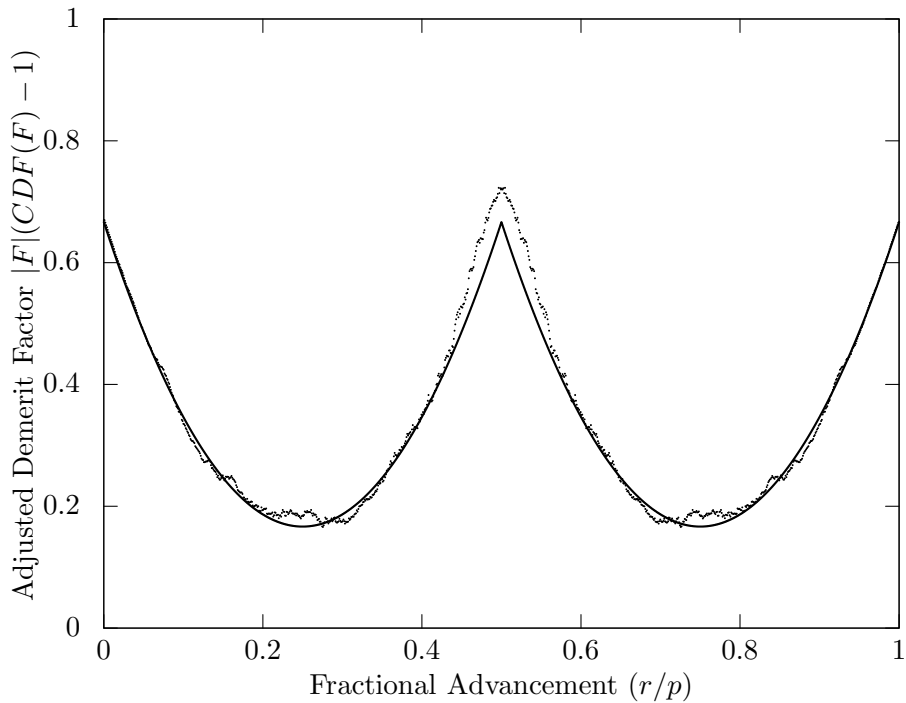

\caption{Dependence of $\CDF(F)$ on advancement $r$ for unimodularized instances of length $1009$ from $D_3$}\label{Claire}
\begin{center}

\end{center}
\end{figure}

To see how rapidly the demerit factor approaches its limit, as predicted by \cref{Mary}, we make unimodularized $p$-instances of $D_3$, with each instance uniformly rotated by $(p-1)/4$ in order to make $r/p$ close to the optimizing limiting value of $1/4$.
\cref{Mary} predicts that the adjusted demerit factor of such codebooks should approach $1/6$ as $p$ tends to infinity.
And \cref{Penelope} predicts that these codebooks should have well regulated growth of $\GUC$.
Since $D_3$ is of index $8$, these constructions are possible for every prime $p$ with $p\equiv 1 \pmod{8}$, and produce codebooks with $7$ sequences each.
We tabulate correlation data for these constructions for all primes $p$ with $p\equiv 1 \pmod{8}$ and $p \leq 1249$ in \cref{Heinrich} (for peak measures) and \cref{Otto} (for demerit factors).
The columns labeled ``sequence PSLs'' in \cref{Heinrich} (resp., ``sequence ADFs'' in \cref{Otto}) give the average, minimum, and maximum values of $\PSL(f)$ (resp., $\ADF(f)$) as $f$ runs through the individual sequences in our codebook.
The columns labeled ``pairwise PCCs'' in \cref{Heinrich} (resp., ``pairwise CDFs'' in \cref{Otto}) give the average, minimum, and maximum values of $\PCC(f,g)$ (resp., $\CDF(f,g)$) as $(f,g)$ runs through the pairs of distinct sequences in our codebook.
Note that $\GUC(F)/\sqrt{\SDC(F)}$ is never much higher than $3$, and the adjusted demerit factors are getting close to the limit of $1/6$ predicted by \cref{Mary}.
To see the limiting behavior even more strongly, we also checked some larger primes: for each $k$ with $4 \leq k \leq 25$, we let $p_k$ be the smallest prime with $p_k \geq 2^k+1$ and $p_k\equiv 1 \pmod{8}$.
The data for these primes is given on \cref{Irmgard} (for peak measures) and \cref{Hildegard} (for demerit factors).
\begin{table}\caption{Peak measures for unimodularized $p$-instances of $D_3$, rotated by $(p-1)/4$}\label{Heinrich}
\begin{center}
{\footnotesize 
\begin{tabular}{r|r|rrr|rrr}
    &                      & \multicolumn{3}{c|}{sequence $\PSL$s} & \multicolumn{3}{c}{pairwise $\PCC$s} \\
$p$ & \multirow{-2}{*}{$\frac{\GUC(F)}{\sqrt{\SDC(F)}}$} & avg & min & max            & avg & min & max \\
\hline
$17$ & $2.910428$ & $6.0000$ & $4$ & $8$ & $7.1429$ & $4$ & $12$ \\
$41$ & $2.342606$ & $10.8571$ & $7$ & $13$ & $11.3333$ & $9$ & $15$ \\
$73$ & $2.457864$ & $13.7143$ & $10$ & $19$ & $15.6667$ & $11$ & $21$ \\
$89$ & $2.543995$ & $11.4286$ & $8$ & $15$ & $18.6190$ & $14$ & $24$ \\
$97$ & $2.741435$ & $13.7143$ & $8$ & $21$ & $19.9048$ & $15$ & $27$ \\
$113$ & $2.445874$ & $16.8571$ & $11$ & $23$ & $20.8571$ & $12$ & $26$ \\
$137$ & $2.306766$ & $18.0000$ & $10$ & $25$ & $22.9048$ & $16$ & $27$ \\
$193$ & $2.807281$ & $19.7143$ & $14$ & $28$ & $29.1905$ & $18$ & $39$ \\
$233$ & $2.751511$ & $21.4286$ & $13$ & $29$ & $30.8095$ & $20$ & $42$ \\
$241$ & $2.769873$ & $25.4286$ & $12$ & $41$ & $31.7143$ & $23$ & $43$ \\
$257$ & $3.056536$ & $26.7143$ & $17$ & $35$ & $34.3333$ & $21$ & $49$ \\
$281$ & $2.684475$ & $25.8571$ & $18$ & $33$ & $36.3333$ & $26$ & $45$ \\
$313$ & $2.543550$ & $30.4286$ & $15$ & $45$ & $36.1429$ & $28$ & $44$ \\
$337$ & $2.887094$ & $33.8571$ & $20$ & $45$ & $38.5714$ & $26$ & $53$ \\
$353$ & $2.820905$ & $32.8571$ & $16$ & $53$ & $38.7143$ & $27$ & $47$ \\
$401$ & $2.546818$ & $35.1429$ & $19$ & $49$ & $39.6667$ & $22$ & $51$ \\
$409$ & $3.016256$ & $33.4286$ & $18$ & $47$ & $42.0000$ & $25$ & $61$ \\
$433$ & $2.979530$ & $34.1429$ & $20$ & $48$ & $44.9524$ & $27$ & $62$ \\
$449$ & $2.925961$ & $32.8571$ & $18$ & $42$ & $48.3810$ & $33$ & $62$ \\
$457$ & $2.619570$ & $31.4286$ & $21$ & $44$ & $45.8571$ & $35$ & $56$ \\
$521$ & $2.628648$ & $37.2857$ & $19$ & $60$ & $46.6190$ & $34$ & $54$ \\
$569$ & $2.892630$ & $35.7143$ & $25$ & $43$ & $51.5714$ & $39$ & $69$ \\
$577$ & $2.747616$ & $40.0000$ & $22$ & $55$ & $52.8095$ & $31$ & $66$ \\
$593$ & $2.628165$ & $43.1429$ & $22$ & $60$ & $53.4762$ & $39$ & $64$ \\
$601$ & $2.896150$ & $42.2857$ & $23$ & $60$ & $55.2857$ & $41$ & $71$ \\
$617$ & $2.737577$ & $38.5714$ & $23$ & $51$ & $54.0000$ & $44$ & $68$ \\
$641$ & $2.804331$ & $31.7143$ & $22$ & $39$ & $57.3810$ & $45$ & $71$ \\
$673$ & $2.659755$ & $48.4286$ & $22$ & $66$ & $58.1429$ & $47$ & $69$ \\
$761$ & $2.827498$ & $38.7143$ & $22$ & $47$ & $60.8095$ & $48$ & $78$ \\
$769$ & $2.596386$ & $51.7143$ & $25$ & $72$ & $60.8095$ & $42$ & $72$ \\
$809$ & $2.601701$ & $53.7143$ & $29$ & $74$ & $58.6667$ & $38$ & $72$ \\
$857$ & $2.732748$ & $47.5714$ & $28$ & $62$ & $62.7143$ & $44$ & $80$ \\
$881$ & $2.594196$ & $51.0000$ & $26$ & $66$ & $65.7143$ & $49$ & $77$ \\
$929$ & $2.887186$ & $43.4286$ & $31$ & $49$ & $71.7143$ & $58$ & $88$ \\
$937$ & $2.776829$ & $55.4286$ & $28$ & $72$ & $66.5238$ & $48$ & $85$ \\
$953$ & $2.785813$ & $48.8571$ & $31$ & $68$ & $68.3333$ & $46$ & $86$ \\
$977$ & $2.687398$ & $54.1429$ & $27$ & $78$ & $72.5238$ & $58$ & $84$ \\
$1009$ & $2.896291$ & $52.5714$ & $28$ & $71$ & $73.9048$ & $52$ & $92$ \\
$1033$ & $2.893562$ & $48.5714$ & $30$ & $62$ & $73.6190$ & $47$ & $93$ \\
$1049$ & $2.686158$ & $57.0000$ & $29$ & $76$ & $70.1905$ & $51$ & $87$ \\
$1097$ & $2.475771$ & $39.0000$ & $26$ & $49$ & $70.8571$ & $64$ & $82$ \\
$1129$ & $3.065422$ & $58.4286$ & $33$ & $83$ & $76.1429$ & $46$ & $103$ \\
$1153$ & $2.827200$ & $62.7143$ & $30$ & $87$ & $79.2857$ & $53$ & $96$ \\
$1193$ & $2.692543$ & $65.0000$ & $31$ & $93$ & $75.7619$ & $58$ & $89$ \\
$1201$ & $2.885549$ & $67.5714$ & $29$ & $98$ & $77.3810$ & $63$ & $100$ \\
$1217$ & $2.723192$ & $56.5714$ & $34$ & $77$ & $76.0000$ & $52$ & $95$ \\
$1249$ & $2.999333$ & $54.8571$ & $32$ & $68$ & $84.2381$ & $59$ & $106$
\end{tabular}}
\end{center}
\end{table}

\begin{table}\caption{Demerit factors for unimodularized $p$-instances of $D_3$, rotated by $(p-1)/4$}\label{Otto}
\begin{center}
{\footnotesize 
\begin{tabular}{r|r|rrr|rrr}
    &  adj.~demerit factor & \multicolumn{3}{c|}{sequence $\ADF$s} & \multicolumn{3}{c}{pairwise $\CDF$s} \\
$p$ & $|F|(\CDF(F)-1)$ & avg & min & max            & avg & min & max \\
\hline
$17$ & $0.419179$ & $0.8463$ & $0.3599$ & $1.4948$ & $0.9288$ & $0.3772$ & $1.5536$ \\
$41$ & $0.219597$ & $0.7268$ & $0.2094$ & $1.2136$ & $0.9155$ & $0.5526$ & $1.4521$ \\
$73$ & $0.275152$ & $0.8160$ & $0.2297$ & $1.4817$ & $0.9099$ & $0.4220$ & $1.6140$ \\
$89$ & $0.263892$ & $0.4425$ & $0.1828$ & $0.6575$ & $0.9702$ & $0.5849$ & $1.2626$ \\
$97$ & $0.257504$ & $0.5267$ & $0.2236$ & $0.8826$ & $0.9551$ & $0.5821$ & $1.2912$ \\
$113$ & $0.180616$ & $0.6901$ & $0.2105$ & $1.1490$ & $0.9151$ & $0.2883$ & $1.3521$ \\
$137$ & $0.184073$ & $0.5827$ & $0.1782$ & $0.9752$ & $0.9336$ & $0.5075$ & $1.2244$ \\
$193$ & $0.207530$ & $0.5494$ & $0.1862$ & $0.8196$ & $0.9430$ & $0.4887$ & $1.3327$ \\
$233$ & $0.188431$ & $0.6051$ & $0.1843$ & $0.9114$ & $0.9306$ & $0.3320$ & $1.6031$ \\
$241$ & $0.172567$ & $0.6373$ & $0.1814$ & $1.0650$ & $0.9225$ & $0.3616$ & $1.3896$ \\
$257$ & $0.130258$ & $0.5929$ & $0.1860$ & $0.8142$ & $0.9229$ & $0.3660$ & $1.2289$ \\
$281$ & $0.199288$ & $0.7084$ & $0.1788$ & $1.1954$ & $0.9152$ & $0.3937$ & $1.5160$ \\
$313$ & $0.219136$ & $0.6808$ & $0.1769$ & $1.2906$ & $0.9231$ & $0.5033$ & $1.3680$ \\
$337$ & $0.195768$ & $0.7807$ & $0.1823$ & $1.3801$ & $0.9025$ & $0.4228$ & $1.4782$ \\
$353$ & $0.226904$ & $0.6448$ & $0.1755$ & $1.0826$ & $0.9303$ & $0.3685$ & $1.2834$ \\
$401$ & $0.163595$ & $0.8365$ & $0.1735$ & $1.4757$ & $0.8879$ & $0.3044$ & $1.5537$ \\
$409$ & $0.160121$ & $0.7280$ & $0.1756$ & $1.2043$ & $0.9054$ & $0.3382$ & $1.4370$ \\
$433$ & $0.191706$ & $0.6190$ & $0.1749$ & $0.9584$ & $0.9288$ & $0.3271$ & $1.4667$ \\
$449$ & $0.157573$ & $0.6041$ & $0.1731$ & $0.7817$ & $0.9256$ & $0.4306$ & $1.4336$ \\
$457$ & $0.185041$ & $0.4240$ & $0.1698$ & $0.6865$ & $0.9602$ & $0.6111$ & $1.1179$ \\
$521$ & $0.177503$ & $0.7846$ & $0.1704$ & $1.3795$ & $0.8988$ & $0.4395$ & $1.4551$ \\
$569$ & $0.150238$ & $0.4650$ & $0.1746$ & $0.6594$ & $0.9475$ & $0.5153$ & $1.2512$ \\
$577$ & $0.165774$ & $0.6871$ & $0.1753$ & $1.0842$ & $0.9131$ & $0.3376$ & $1.3795$ \\
$593$ & $0.184112$ & $0.7016$ & $0.1724$ & $1.2221$ & $0.9138$ & $0.5492$ & $1.2136$ \\
$601$ & $0.189947$ & $0.6105$ & $0.1695$ & $0.8292$ & $0.9299$ & $0.3693$ & $1.3075$ \\
$617$ & $0.157510$ & $0.5380$ & $0.1722$ & $0.9240$ & $0.9366$ & $0.6464$ & $1.3116$ \\
$641$ & $0.174899$ & $0.2984$ & $0.1701$ & $0.4476$ & $0.9794$ & $0.8045$ & $1.0532$ \\
$673$ & $0.160722$ & $0.7072$ & $0.1720$ & $1.2373$ & $0.9089$ & $0.5742$ & $1.3069$ \\
$761$ & $0.161539$ & $0.3364$ & $0.1683$ & $0.5067$ & $0.9709$ & $0.6691$ & $1.0841$ \\
$769$ & $0.187306$ & $0.7163$ & $0.1707$ & $1.2557$ & $0.9118$ & $0.5015$ & $1.2510$ \\
$809$ & $0.195315$ & $0.8362$ & $0.1700$ & $1.4814$ & $0.8932$ & $0.3494$ & $1.5395$ \\
$857$ & $0.186538$ & $0.5818$ & $0.1710$ & $0.9220$ & $0.9341$ & $0.3570$ & $1.5243$ \\
$881$ & $0.218891$ & $0.6226$ & $0.1719$ & $1.0964$ & $0.9327$ & $0.4998$ & $1.2975$ \\
$929$ & $0.173817$ & $0.3822$ & $0.1688$ & $0.5446$ & $0.9653$ & $0.7085$ & $1.1580$ \\
$937$ & $0.184648$ & $0.7430$ & $0.1705$ & $1.2987$ & $0.9069$ & $0.3642$ & $1.3619$ \\
$953$ & $0.189571$ & $0.6129$ & $0.1696$ & $1.0190$ & $0.9294$ & $0.4720$ & $1.3698$ \\
$977$ & $0.174837$ & $0.6329$ & $0.1697$ & $1.0400$ & $0.9237$ & $0.4764$ & $1.1013$ \\
$1009$ & $0.184897$ & $0.6353$ & $0.1702$ & $1.0643$ & $0.9249$ & $0.4811$ & $1.3665$ \\
$1033$ & $0.174843$ & $0.5481$ & $0.1697$ & $0.8355$ & $0.9378$ & $0.3487$ & $1.2538$ \\
$1049$ & $0.172194$ & $0.8012$ & $0.1689$ & $1.4070$ & $0.8952$ & $0.3598$ & $1.5049$ \\
$1097$ & $0.186012$ & $0.2412$ & $0.1685$ & $0.3377$ & $0.9908$ & $0.8470$ & $1.0501$ \\
$1129$ & $0.161952$ & $0.6390$ & $0.1694$ & $1.0431$ & $0.9205$ & $0.3123$ & $1.3667$ \\
$1153$ & $0.170944$ & $0.6531$ & $0.1699$ & $1.1206$ & $0.9196$ & $0.4699$ & $1.1351$ \\
$1193$ & $0.189239$ & $0.7921$ & $0.1684$ & $1.3729$ & $0.8995$ & $0.3624$ & $1.4283$ \\
$1201$ & $0.176133$ & $0.7488$ & $0.1704$ & $1.3415$ & $0.9046$ & $0.4762$ & $1.3491$ \\
$1217$ & $0.177954$ & $0.5958$ & $0.1684$ & $0.9243$ & $0.9304$ & $0.3242$ & $1.5475$ \\
$1249$ & $0.142505$ & $0.5114$ & $0.1694$ & $0.7237$ & $0.9385$ & $0.4527$ & $1.2567$
\end{tabular}}
\end{center}
\end{table}

\begin{table}\caption{Peak measures for unimodularized $p$-instances of $D_3$, rotated by $(p-1)/4$}\label{Irmgard}
\begin{center}
{\footnotesize 
\begin{tabular}{r|r|rrr|rrr}
    &                      & \multicolumn{3}{c|}{sequence $\PSL$s} & \multicolumn{3}{c}{pairwise $\PCC$s} \\
$p$ &  \multirow{-2}{*}{$\frac{\GUC(F)}{\sqrt{\SDC(F)}}$} & avg & min & max            & avg & min & max \\
\hline
$17$ & $2.910428$ & $6.0000$ & $4$ & $8$ & $7.1429$ & $4$ & $12$ \\
$41$ & $2.342606$ & $10.8571$ & $7$ & $13$ & $11.3333$ & $9$ & $15$ \\
$73$ & $2.457864$ & $13.7143$ & $10$ & $19$ & $15.6667$ & $11$ & $21$ \\
$137$ & $2.306766$ & $18.0000$ & $10$ & $25$ & $22.9048$ & $16$ & $27$ \\
$257$ & $3.056536$ & $26.7143$ & $17$ & $35$ & $34.3333$ & $21$ & $49$ \\
$521$ & $2.628648$ & $37.2857$ & $19$ & $60$ & $46.6190$ & $34$ & $54$ \\
$1033$ & $2.893562$ & $48.5714$ & $30$ & $62$ & $73.6190$ & $47$ & $93$ \\
$2081$ & $2.652463$ & $87.1429$ & $45$ & $116$ & $105.0000$ & $71$ & $121$ \\
$4129$ & $3.096925$ & $108.2857$ & $62$ & $153$ & $166.7143$ & $111$ & $199$ \\
$8209$ & $3.355278$ & $173.0000$ & $106$ & $237$ & $231.7143$ & $170$ & $304$ \\
$16417$ & $3.051616$ & $185.5714$ & $121$ & $215$ & $321.0476$ & $278$ & $391$ \\
$32801$ & $3.390196$ & $321.1429$ & $196$ & $400$ & $499.1905$ & $413$ & $614$ \\
$65537$ & $3.222632$ & $505.8571$ & $300$ & $577$ & $711.7143$ & $432$ & $825$ \\
$131113$ & $3.413466$ & $747.0000$ & $374$ & $1014$ & $980.3333$ & $817$ & $1236$ \\
$262153$ & $3.451113$ & $1017.2857$ & $608$ & $1282$ & $1423.1905$ & $926$ & $1767$ \\
$524353$ & $3.554649$ & $1619.4286$ & $827$ & $2144$ & $2111.6190$ & $1574$ & $2574$ \\
$1048601$ & $3.437459$ & $2288.2857$ & $1169$ & $3081$ & $2770.1905$ & $1827$ & $3520$ \\
$2097169$ & $3.429868$ & $3280.8571$ & $1674$ & $4154$ & $4294.4286$ & $3354$ & $4967$ \\
$4194329$ & $3.437978$ & $4690.0000$ & $2465$ & $6251$ & $5868.7143$ & $3959$ & $7041$ \\
$8388617$ & $3.323193$ & $6902.0000$ & $3406$ & $9050$ & $8574.0952$ & $6714$ & $9625$ \\
$16777289$ & $3.362786$ & $9121.8571$ & $5053$ & $12069$ & $12599.9048$ & $10674$ & $13774$ \\
$33554473$ & $3.442828$ & $13686.1429$ & $7265$ & $17680$ & $17759.1905$ & $14392$ & $19943$
\end{tabular}}
\end{center}
\end{table}

\begin{table}\caption{Demerit factors for unimodularized $p$-instances of $D_3$, rotated by $(p-1)/4$}\label{Hildegard}
\begin{center}
{\footnotesize 
\begin{tabular}{r|r|rrr|rrr}
    &  adj.~demerit factor & \multicolumn{3}{c|}{sequence $\ADF$s} & \multicolumn{3}{c}{pairwise $\CDF$s} \\
$p$ & $|F|(\CDF(F)-1)$ & avg & min & max            & avg & min & max \\
\hline
$17$ & $0.419179$ & $0.8463$ & $0.3599$ & $1.4948$ & $0.9288$ & $0.3772$ & $1.5536$ \\
$41$ & $0.219597$ & $0.7268$ & $0.2094$ & $1.2136$ & $0.9155$ & $0.5526$ & $1.4521$ \\
$73$ & $0.275152$ & $0.8160$ & $0.2297$ & $1.4817$ & $0.9099$ & $0.4220$ & $1.6140$ \\
$137$ & $0.184073$ & $0.5827$ & $0.1782$ & $0.9752$ & $0.9336$ & $0.5075$ & $1.2244$ \\
$257$ & $0.130258$ & $0.5929$ & $0.1860$ & $0.8142$ & $0.9229$ & $0.3660$ & $1.2289$ \\
$521$ & $0.177503$ & $0.7846$ & $0.1704$ & $1.3795$ & $0.8988$ & $0.4395$ & $1.4551$ \\
$1033$ & $0.174843$ & $0.5481$ & $0.1697$ & $0.8355$ & $0.9378$ & $0.3487$ & $1.2538$ \\
$2081$ & $0.154620$ & $0.6833$ & $0.1676$ & $1.1700$ & $0.9119$ & $0.4365$ & $1.1985$ \\
$4129$ & $0.176674$ & $0.5046$ & $0.1677$ & $0.7656$ & $0.9453$ & $0.4171$ & $1.1867$ \\
$8209$ & $0.172492$ & $0.6040$ & $0.1669$ & $0.7902$ & $0.9281$ & $0.4013$ & $1.5079$ \\
$16417$ & $0.154870$ & $0.2453$ & $0.1669$ & $0.3326$ & $0.9849$ & $0.8348$ & $1.0340$ \\
$32801$ & $0.166212$ & $0.4624$ & $0.1667$ & $0.7765$ & $0.9506$ & $0.6926$ & $1.1975$ \\
$65537$ & $0.164376$ & $0.5500$ & $0.1667$ & $0.8319$ & $0.9357$ & $0.3356$ & $1.3279$ \\
$131113$ & $0.164924$ & $0.6295$ & $0.1667$ & $1.0196$ & $0.9226$ & $0.3418$ & $1.3294$ \\
$262153$ & $0.166653$ & $0.6140$ & $0.1667$ & $0.8433$ & $0.9254$ & $0.3343$ & $1.2304$ \\
$524353$ & $0.165693$ & $0.7178$ & $0.1667$ & $1.1989$ & $0.9080$ & $0.3437$ & $1.2909$ \\
$1048601$ & $0.166435$ & $0.8224$ & $0.1667$ & $1.4442$ & $0.8907$ & $0.3330$ & $1.4636$ \\
$2097169$ & $0.166490$ & $0.6103$ & $0.1667$ & $0.9549$ & $0.9260$ & $0.3344$ & $1.4341$ \\
$4194329$ & $0.166757$ & $0.7813$ & $0.1667$ & $1.3353$ & $0.8976$ & $0.3334$ & $1.4272$ \\
$8388617$ & $0.166887$ & $0.7616$ & $0.1667$ & $1.3726$ & $0.9009$ & $0.4612$ & $1.3724$ \\
$16777289$ & $0.166861$ & $0.5152$ & $0.1667$ & $0.8844$ & $0.9419$ & $0.6332$ & $1.2216$ \\
$33554473$ & $0.166610$ & $0.7123$ & $0.1667$ & $1.2686$ & $0.9091$ & $0.5099$ & $1.3660$
\end{tabular}}
\end{center}
\end{table}

Our constructions give codebooks of sequences with exceptionally low correlation, but the codebooks have few sequences compared to the length of the sequences.
We propose that a codebook of this type might find use in ranging applications.
For example, the Global Positioning System (GPS) uses a codebook of $36$ Gold sequences of length $1023$ for its coarse/acquisition code \cite{Flores}, so it uses only $36$ out of $1025$ possible Gold sequences that one could form from the two underlying m-sequences used in its construction.
We made a codebook of $36$ sequences of comparable length from a unimodularized instance of $D_6$.
Because the length needs to be a prime $p$ with $p\equiv 1 \pmod{64}$, we chose length $p=1153$, uniformly rotated by $(p-1)/4=288$, and then selected $36$ out of the $63$ sequences in the codebook, derived from the patterns obtained from rows $1$, $2$, $3$, $4$, $6$, $8$, $9$, $12$, $13$, $16$, $17$, $18$, $19$, $21$, $23$, $24$, $25$, $27$, $29$, $31$, $32$, $34$, $35$, $36$, $42$, $43$, $44$, $46$, $48$, $50$, $51$, $52$, $58$, $59$, $60$, and $62$
of the Walsh Hadamard matrix $H_6$ (where the top row of $H_6$, which is not used in the Walsh Hadamard plan $D_6$, is numbered as row $0$).
We compare the correlation data of the Global Positioning System's codebook (labeled GPS) and our codebook (labeled WH).
We also combine both codebooks into a single codebook (labeled GPS+WH), which has $72$ sequence (some of length $1153$ and some of length $1023$), and evaluate the correlation measures for this codebook.
All this correlation data is presented on \cref{Felipe} (for peak measures) and \cref{Esteban} (for demerit factors), where the entry GPS/WH tracks the crosscorrelations for pairs of sequences where one sequence is from the GPS codebook and the other is from the WH codebook.
We see that our WH codebook has considerably lower adjusted demerit factor than the GPS codebook, with autocorrelation demerit factors being much lower on average, which is good for ranging.
Overall, the WH codebook demonstrates a higher variability of both peak measures and demerit factors.
The peak measures in the combined book GPS+WH are somewhat higher than in the individual books, but the adjusted demerit factor of GPS+WH is still much superior to that of GPS alone, so the new codebook WH might be used alongside the current GPS codebook without too much interference.

\begin{table}\caption{Peak measures for the GPS codebook and the WH codebook}\label{Felipe}
\begin{center}
{\footnotesize 
\begin{tabular}{r|r|rrr|rrr}
    &                      & \multicolumn{3}{c|}{sequence $\PSL$s} & \multicolumn{3}{c}{pairwise $\PCC$s} \\
$p$ &  \multirow{-2}{*}{$\frac{\GUC(F)}{\sqrt{\SDC(F)}}$} & avg & min & max            & avg & min & max \\
\hline
GPS    & $3.376649$ & $81.8333$ & $73$ & $98$ & $85.0556$ & $74$ & $108$ \\
WH     & $3.681250$ & $74.0833$ & $30$ & $108$ & $89.3079$ & $64$ & $125$ \\
GPS/WH & & & & & $102.0069$ & $78$ & $161$ \\
GPS+WH & $5.033708$ & $77.9583$ & $30$ & $108$ & $94.6987$ & $64$ & $161$
\end{tabular}}
\end{center}
\end{table}

\begin{table}\caption{Demerit factors for the GPS codebook and the WH codebook}\label{Esteban}
\begin{center}
{\footnotesize 
\begin{tabular}{r|r|rrr|rrr}
    &  adj.~demerit factor & \multicolumn{3}{c|}{sequence $\ADF$s} & \multicolumn{3}{c}{pairwise $\CDF$s} \\
$p$ & $|F|(\CDF(F)-1)$ & avg & min & max            & avg & min & max \\
\hline
GPS    & $0.947520$ & $1.0091$ & $0.8393$ & $1.2192$ & $0.9982$ & $0.8962$ & $1.1742$ \\
WH     & $0.358546$ & $0.6998$ & $0.1699$ & $0.9686$ & $0.9902$ & $0.4840$ & $1.4887$ \\
GPS/WH & & & & & $0.9996$ & $0.9064$ & $1.1209$ \\
GPS+WH & $0.640059$ & $0.8545$ & $0.1699$ & $1.2192$ & $0.9970$ & $0.4840$ & $1.4887$
\end{tabular}}
\end{center}
\end{table}

To compute the correlation spectra for the various examples in this section, we used Fourier transform techniques that carry out convolutions using floating point arithmetic to approximate operations in $\C$.
Since the sequences are binary, the correlation values should be integral, and we found that all terms of our computed correlation spectra are very close to integral, with the largest discrepancy being less than $8 \times 10^{-6}$.

\subsection{Organization of this paper}

In \cref{Peter} on preliminaries, we provide definitions, notations, and concepts beyond those given in this introduction, demonstrate that sequences from cyclotomic patterns are the same as sequences given by linear combinations of multiplicative finite field characters, and show that unimodularization does not affect the sort of asymptotic behaviors we are considering in Theorems \ref{Penelope} and \ref{Mary}.
\cref{Bernard} is devoted to the proof of \cref{Penelope}.
\cref{Genevieve} is devoted to the proof of \cref{Mary}.

\section{Preliminaries}\label{Peter}

In this section, we add to the the definitions and notations from the Introduction, which all remain in force.
In \cref{Norman} we discuss norms and inner products for aperiodic and periodic sequences.
In \cref{Marlene} we use Fourier analysis to show that sequences from cyclotomic patterns are the same as sequences from linear combinations of multiplicative finite field characters.
In \cref{Estelle} we show the effect of unimodularization of sequences on the peak and mean square measures of correlation, and conclude that unimodularization does not affect the asymptotic behavior of these measures in the limits that we are considering.

\subsection{Norms and inner products}\label{Norman}

Let $q$ be a real number with $q \geq 1$.
The $l^q$ norm of an aperiodic sequence $f$ is $\norm{q}{f}=\left(\sum_{j \in \Z} |f_j|^q\right)^{1/q}$, and the $l^\infty$ norm of $f$ is $\norminf{f}=\max_{j \in \Z} |f_j|$; the finite support of $f$ makes both of these quantities defined.
If $g$ is a periodic sequence of length $\ell$, then the $l^q$ norm of $g$ is $\norm{q}{g}=\left(\sum_{j \in \Zell} |g_j|^q\right)^{1/q}$, and the $l^\infty$ norm of $g$ is $\norminf{g}=\max_{j \in \Zell} |g_j|$.
We should caution that the $l^q$ norms defined here are not the same as the $L^q$ norms on the complex unit circle that are used in studies of correlation, e.g., in \cite{Katz,Boothby-Katz}, although they happen to coincide when $q=2$.

In the Introduction we have already stated that we equip the $\C$-vector space of aperiodic sequences with the usual inner product: if $f,g$ are in this space, then $\ip{f}{g}=\sum_{j \in \Z} f_j \conj{g_j}$, so that $\ip{f}{f}=\normtwosq{f}$.
And similarly, for each $\ell \in \Z^+$, we equip the $\C$-vector space of periodic sequences of length $\ell$ with the usual inner product: if $f,g$ are in this space, then $\ip{f}{g}=\sum_{j \in \Zell} f_j \conj{g_j}$, so that $\ip{f}{f}=\normtwosq{f}$.

\subsection{Sequences from multiplicative characters}\label{Marlene}

Recall that if $p$ is a prime, then $\Fp$ is the finite field of order $p$, $\Fpu$ is the unit group of $\Fp$, and we let $\alpha_p$ be a fixed primitive element (generator of $\Fpu$).
A {\it multiplicative character} of $\Fp$ is a homomorphism from the group $\Fpu$ to the group $\C^*$.
The set of all multiplicative characters of $\Fp$ forms a cyclic group $\mchars$ of order $p-1$, where the group operation is multiplication of functions, i.e., $(\phi\chi)(a)=\phi(a)\chi(a)$ for $\phi,\chi\in\mchars$ and $a \in \Fpu$.
So if $\chi\in\mchars$, $a \in\Fpu$, and $k$ is an integer, then $\chi^k(a)=(\chi(a))^k$, and since the values of the characters lie on the complex unit circle (being finite order elements of the group $\C^*$), we note that $\chi^{-1}(a)=\overline{\chi(a)}$, so we introduce the notation $\overline{\chi}$ for $\chi^{-1}$, and call this character the {\it conjugate} of $\chi$.
The identity of the group $\mchars$ is the {\it trivial multiplicative character}, $\chi_0$, which maps every element of $\Fpu$ to $1$.
For each $p$, we let $\omega_p$ be the generator of $\mchars$ with $\omega_p(\alpha_p)=\exp(2\pi i/(p-1))$.
We extend any multiplicative character $\chi$ (including the trivial one) to have domain $\Fp$ by setting $\chi(0)=0$.

We are interested in periodic sequences that are linear combinations of multiplicative characters.
If $n \in \Z^+$, then a {\it character pattern of index $n$} is a periodic sequence of length $n$, i.e., a function $e\colon \Zn \to \C$.
If $p$ is a prime with $p \equiv 1 \pmod{n}$, then the {\it periodic sequence of length $p$ derived from character pattern $e$} is the periodic sequence $f\colon \Fp\to\C$ with
\begin{align*}
f_h = \sum_{j \in \Zn} e_j \omega_p^{(p-1) j/n}(h),
\end{align*}
so $f$ is a linear combination of the characters residing in the unique cyclic subgroup $\{\omega_p^{(p-1) j /n}: j \in \Zn\}$ of order $n$ in $\mchars$.
Note that each character in this subgroup has a constant value on a cyclotomic class of the form $\alpha_p^k\Fpn$.
Since character patterns of index $n$ are just periodic sequences of length $n$, which form a $\C$-inner product space, we can speak of individual character patterns as being {\it normalized} and of pairs of character patterns as being {\it orthogonal} or {\it orthonormal}.

It turns out that sequences derived from character patterns of index $n$ are identical to sequences derived from cyclotomic patterns of index $n$ via the Fourier transform.
If $f$ is a periodic sequence of length $\ell$, then the {\it Fourier transform of $f$}, written $\ft{f}$, is the periodic sequence of length $\ell$ with $\ft{f}_j$ (which means $(\ft{f})_j$) equal to $\ell^{-1} \sum_{k \in \Zell} \exp(-2\pi i j k/\ell) f_k$ for each $j \in \Zell$.
If $f$ is a periodic sequence of length $\ell$, then the {\it inverse Fourier transform of $f$}, written $\ift{f}$, is the periodic sequence of length $\ell$ with $\ift{f}_j$ (which means $(\ift{f})_j$) equal to $\sum_{k \in \Zell} \exp(2\pi i j k/\ell) f_k$ for each $j \in \Zell$.
It is well known that both $f\mapsto \ft{f}$ and $f\mapsto \ift{f}$ are $\C$-linear automorphisms of the $\C$-vector space of periodic sequences of length $\ell$, and are inverses of each other.
It is also well known that (up to scaling) the Fourier transform and its inverses are $l^2$ isometries: $\ip{\ft{f}}{\ft{g}}=\ell^{-1} \ip{f}{g}$ and $\ip{\ift{f}}{\ift{g}}=\ell \ip{f}{g}$.

The following basic result will be used to show that sequences derived from cyclotomic patterns are the same as sequences derived from character patterns.
\begin{lemma}\label{Donald}
Let $n\in\Z^+$, let $e$ be a character pattern of index $n$, let $p$ be a prime with $p\equiv 1 \pmod{n}$, and let $f$ be the periodic sequence of length $p$ derived from character pattern $e$.
Then $f_0=0$ and for every $k \in \Z$, $f_{\alpha_p^k}=\ift{e}_k$.
\end{lemma}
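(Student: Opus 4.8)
The plan is a direct computation from the definitions, and I expect no serious difficulty. First I would dispose of the claim $f_0=0$. By definition $f$ is the linear combination $\sum_{j\in\Zn} e_j\,\omega_p^{(p-1)j/n}$ of multiplicative characters of $\Fp$, and every such character (including the trivial one) has been extended to all of $\Fp$ by the convention that it sends $0$ to $0$. Hence every summand in the definition of $f_h$ vanishes when $h=0$, so $f_0=0$ immediately.

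For the substantive identity I would evaluate $f$ at $h=\alpha_p^k$ and simplify each character value. Writing $\chi=\omega_p^{(p-1)j/n}$ and using multiplicativity, $\chi(\alpha_p^k)=\chi(\alpha_p)^k$, and since $\omega_p(\alpha_p)=\exp(2\pi i/(p-1))$ we obtain
\[
\chi(\alpha_p)=\omega_p(\alpha_p)^{(p-1)j/n}=\exp\!\left(\frac{2\pi i}{p-1}\cdot\frac{(p-1)j}{n}\right)=\exp\!\left(\frac{2\pi i\,j}{n}\right),
\]
so that $\chi(\alpha_p^k)=\exp(2\pi i\,jk/n)$. Substituting back into the definition of $f$ gives
\[
f_{\alpha_p^k}=\sum_{j\in\Zn} e_j\exp\!\left(\frac{2\pi i\,jk}{n}\right),
\]
which is precisely the defining formula for the inverse Fourier transform $\ift{e}_k$ of the length-$n$ sequence $e$. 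This establishes the desired equality.

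The only point requiring care is the bookkeeping with character exponents, which is where I expect the (minor) obstacle to lie. I would note that $(p-1)j/n$ is a genuine integer, because $p\equiv 1\pmod n$ forces $n\mid p-1$, so $\omega_p^{(p-1)j/n}$ is a legitimate power of $\omega_p$; and that this character depends only on $j+n\Z$, since replacing $j$ by $j+n$ changes the exponent by $p-1$ while $\omega_p$ has order $p-1$, whence $\omega_p^{p-1}=\chi_0$. It is also worth recording, as a consistency check, that the computed value depends on $k$ only modulo $n$, which matches both the fact that $\alpha_p^{k}$ and $\alpha_p^{k+n}$ lie in the same cyclotomic class $\alpha_p^k\Fpn$ (because $\alpha_p^n\in\Fpn$) and the fact that $\ift{e}$ is periodic of length $n$. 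Beyond these routine compatibility checks, the argument is a one-line unwinding of the definitions of $f$, of $\omega_p$, and of the inverse Fourier transform.
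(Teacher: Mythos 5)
Your proof is correct and takes essentially the same route as the paper: both verify $f_0=0$ from the convention $\chi(0)=0$ and then compute $\omega_p^{(p-1)j/n}(\alpha_p^k)=\exp(2\pi i jk/n)$ to identify the sum with $\ift{e}_k$. Your additional well-definedness checks (the exponent $(p-1)j/n$ being an integer, and dependence of everything on $j$ and $k$ only modulo $n$) are sound, though the paper leaves them implicit.
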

\begin{proof}
It is clear that $f_0=\sum_{j \in \Zn} e_j \omega_p^{(p-1) j/n}(0)=0$.
Let $k \in \Z$, and then
\begin{align*}
f_{\alpha_p^k}
& = \sum_{j \in \Zn} e_j  \omega_p^{(p-1) j/n}(\alpha_p^k) \\
& = \sum_{j \in \Zn} e_j  \exp(2\pi i/(p-1))^{(p-1) j k/n} \\
& = \sum_{j \in \Zn} e_j  \exp(2\pi i j k/n) \\
& = \ift{e}_k.\qedhere
\end{align*}
\end{proof}
This simple calculation has the following consequence.
\begin{proposition}\label{Wally}
Let $n\in\Z^+$, let $d$ be a cyclotomic pattern of index $n$, and let $e$ be a character pattern of index $n$.  Then the following are equivalent:
\begin{enumerate}[(i)]
\item\label{Gabriel} For every prime $p$ with $p\equiv 1 \pmod{n}$, the periodic sequence of length $p$ from $d$ is the same as the periodic sequence of length $p$ derived from character pattern $e$.
\item\label{Michael} There is some prime $p$ with $p\equiv 1 \pmod{n}$ such that the periodic sequence of length $p$ from $d$ is the same as the periodic sequence of length $p$ derived from character pattern $e$.
\item\label{Raphael} We have $d=\ift{e}$.
\item\label{Uriel} We have $e=\ft{d}$.
\end{enumerate}
\end{proposition}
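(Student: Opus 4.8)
The plan is to collapse all four conditions onto the single, prime-independent identity $(iii)$, $d=\ift{e}$, using \cref{Donald} as the engine. The key realization is that, for each admissible prime, the equality of the two derived length-$p$ sequences is equivalent to $d=\ift{e}$, and since that criterion never mentions $p$, the universal statement $(i)$ and the existential statement $(ii)$ must coincide.

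First I would record the cyclotomic analogue of \cref{Donald}. Fix a prime $p\equiv 1\pmod n$, let $g$ be the periodic sequence of length $p$ from the cyclotomic pattern $d$, and let $f$ be the periodic sequence of length $p$ from the character pattern $e$. By definition $g_0=0$, and for each $k\in\Z$ the element $\alpha_p^k$ lies in the cyclotomic class $\alpha_p^{k}\Fpn$; since $\alpha_p^n\in\Fpn$, this class depends only on $k$ modulo $n$, so $g_{\alpha_p^k}=d_k$ (with $d_k$ read modulo $n$, per the paper's convention). Meanwhile \cref{Donald} gives $f_0=0$ and $f_{\alpha_p^k}=\ift{e}_k$.

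Next I would compare $f$ and $g$ pointwise on $\Fp$. Because $\alpha_p$ generates $\Fpu$, every element of $\Fp$ is either $0$ or of the form $\alpha_p^k$, and the exponents $k=0,1,\ldots,n-1$ already realize every residue class modulo $n$. Hence $f$ and $g$ agree as functions on $\Fp$ exactly when $d_k=\ift{e}_k$ for $k=0,\ldots,n-1$, i.e.\ exactly when $d=\ift{e}$ as periodic sequences of length $n$. In short, for each admissible $p$,
\[
f = g \iff d = \ift{e},
\]
and the right-hand side is independent of $p$. With this equivalence the proposition is essentially combinatorial. The equivalence $(iii)\iff(iv)$ is immediate from the fact quoted in \cref{Marlene} that $\ft{\phantom{x}}$ and $\ift{\phantom{x}}$ are mutually inverse automorphisms: applying $\ft{\phantom{x}}$ to $d=\ift{e}$ gives $\ft{d}=e$ and conversely. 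Condition $(i)$ asserts $f=g$ for every prime $p\equiv 1\pmod n$, while $(ii)$ asserts it for some such prime; by the displayed equivalence each of these reduces to the single statement $d=\ift{e}$, namely $(iii)$.

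I do not anticipate a genuine obstacle here; the work is almost entirely a matter of matching indexing conventions, and the one point deserving care is bookkeeping. Both $d$ and $\ift{e}$ are length-$n$ periodic sequences while $f$ and $g$ have length $p$, so I must verify that the comparison consistently collapses the values $g_{\alpha_p^k}$ onto the $n$ entries of $d$; this is precisely what renders the criterion prime-independent and lets $(i)$ and $(ii)$ collapse together. The only other subtlety is that passing from the universal $(i)$ to the existential $(ii)$ presumes that some prime $p\equiv 1\pmod n$ exists, which is guaranteed by Dirichlet's theorem on primes in arithmetic progressions (and is trivial when $n=1$).
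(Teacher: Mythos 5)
Your proof is correct and rests on exactly the same ingredients as the paper's: \cref{Donald} for the values $f_{\alpha_p^k}=\ift{e}_k$, Fourier inversion for the equivalence of (iii) and (iv), and Dirichlet's theorem to ensure an admissible prime exists. Packaging the two Donald-based implications into a single prime-independent biconditional $f=g\iff d=\ift{e}$ is merely a reorganization of the paper's cycle (i)$\Rightarrow$(ii)$\Rightarrow$(iii)$\Rightarrow$(iv)$\Rightarrow$(i), not a different argument.
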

\begin{proof}
If \eqref{Gabriel} holds, then \eqref{Michael} holds because the set of primes $p$ with $p\equiv 1 \pmod{n}$ is nonempty by Dirichlet's theorem on primes in arithmetic progression.

If \eqref{Michael} holds, then let $p$ be a prime with $p\equiv 1\pmod{n}$ and such that $f$ is a periodic sequence of length $p$ derived both from cyclotomic pattern $d$ and character pattern $e$.
If $k \in \Z$, then $f_{\alpha_p^k}=d_k$ because $f$ is from $d$, but on the other hand \cref{Donald} shows that $f_{\alpha_p^k}=\ift{e}_k$, which verifies that \eqref{Raphael} holds.

If \eqref{Raphael} holds, then \eqref{Uriel} holds by Fourier analysis.

If \eqref{Uriel} holds, then let $p$ be a prime with $p\equiv 1 \pmod{n}$ and let $f$ be the periodic sequence of length $p$ derived from character pattern $e$.
Then \cref{Donald} shows that $f_0=0$ and $f_{\alpha_p^k}=\ift{e}_k$ for every $k \in \Z$, so that $f_{\alpha_p^k}=d_k$ for every $k \in \Z$.
Since we read indices of $d$ modulo $n$, this means that for each $j \in \Z$, we have $f_j=d_k$ whenever $j \in \alpha_p^k \Fpn$, and so $f$ is the periodic sequence of length $p$ from $d$, so that \eqref{Gabriel} holds.
\end{proof}
This result means that the $l^2$ norms of cyclotomic patterns, character patterns, and the periodic and aperiodic sequences derived from them are all connected.
\begin{lemma}\label{Abigail}
Let $n \in \Z^+$, let $d$ and $d'$ be cyclotomic patterns of index $n$, and let $e=\ft{d}$ and $e'=\ft{d'}$.
Let $p$ be a prime with $p\equiv 1 \pmod{n}$ and let $f$ and $g$ be the periodic sequences of length $p$ derived from the cyclotomic patterns $d$ and $d'$, respectively (or equivalently, from the character patterns $e$ and $e'$, respectively).
Let $r \in \Z$.
Then
\[
\AC_{f^{(r)},g^{(r)}}(0)=\ip{f^{(r)}}{g^{(r)}}=\ip{f}{g}=\frac{p-1}{n} \ip{d}{d'} = (p-1) \ip{e}{e'}
\]
and
\[
\AC_{f^{(r)},f^{(r)}}(0)=\normtwosq{f^{(r)}}=\normtwosq{f}=\frac{p-1}{n} \normtwosq{d}=(p-1)\normtwosq{e}.
\]
The sequence $f$ is unimodularizable if and only if the cyclotomic pattern $d$ is unimodular, and when this is the case we have $\normtwosq{d}=n$, $\normtwosq{e}=1$, and $\AC_{f^{(r)},f^{(r)}}(0)=\normtwosq{f^{(r)}}=\normtwosq{f}=p-1$, and if $\uni{f}$ is a unimodularization of $f$, then $\AC_{\uni{f}^{(r)},\uni{f}^{(r)}}(0)=\normtwosq{\uni{f}^{(r)}}=\normtwosq{\uni{f}}=p$.
\end{lemma}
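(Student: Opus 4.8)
The plan is to establish the displayed chains from left to right, reducing everything to the definition of the inner product, the translation-invariance of a periodic sum taken over a full period, the cyclotomic description of $f$ and $g$, and the Fourier isometry recorded just above the lemma. First I would treat the crosscorrelation identity. By the definition of aperiodic crosscorrelation at shift zero, $\AC_{f^{(r)},g^{(r)}}(0)=\sum_{j\in\Z}f^{(r)}_j\conj{g^{(r)}_j}=\ip{f^{(r)}}{g^{(r)}}$, since both rotations are supported on $\{0,\dots,p-1\}$. Because $f^{(r)}_j=f_{r+j}$ and $g^{(r)}_j=g_{r+j}$ on that range, we have $\ip{f^{(r)}}{g^{(r)}}=\sum_{j=0}^{p-1}f_{r+j}\conj{g_{r+j}}$, a sum over a complete residue system modulo $p$; as $f$ and $g$ are periodic of length $p$, the summand $f_k\conj{g_k}$ depends only on $k$ modulo $p$, so this equals $\ip{f}{g}$, independently of $r$.

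Next I would compute $\ip{f}{g}$ from the cyclotomic structure. Since $f_0=g_0=0$, we have $\ip{f}{g}=\sum_{h\in\Fpu}f_h\conj{g_h}$. Partitioning $\Fpu$ into the $n$ cyclotomic classes $\alpha_p^k\Fpn$, each of size $(p-1)/n$, on which $f_h=d_k$ and $g_h=d'_k$, yields $\ip{f}{g}=\frac{p-1}{n}\sum_{k\in\Zn}d_k\conj{d'_k}=\frac{p-1}{n}\ip{d}{d'}$. Finally, the isometry $\ip{\ft{d}}{\ft{d'}}=n^{-1}\ip{d}{d'}$ gives $\ip{d}{d'}=n\ip{e}{e'}$, so $\frac{p-1}{n}\ip{d}{d'}=(p-1)\ip{e}{e'}$. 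This completes the first chain; specializing $g=f$, $d'=d$, and $e'=e$ (and noting $\ip{f}{f}=\normtwosq{f}$) gives the second.

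For the unimodular case I would first recall from \cref{Una} that the nonzero values of $f$ are exactly the $d_k$, so $f$ is unimodularizable precisely when $|d_k|=1$ for every $k$, i.e.\ when $d$ is unimodular. In that case $\normtwosq{d}=\sum_{k\in\Zn}|d_k|^2=n$, so the isometry gives $\normtwosq{e}=n^{-1}\normtwosq{d}=1$, and the second chain gives $\normtwosq{f^{(r)}}=\normtwosq{f}=\frac{p-1}{n}\cdot n=p-1$. For a unimodularization $\uni{f}$, which agrees with $f$ at every nonzero index and has $|\uni{f}_0|=1$, we get $\normtwosq{\uni{f}}=1+\normtwosq{f}=p$, and the translation-invariance argument together with the inner-product definition gives $\AC_{\uni{f}^{(r)},\uni{f}^{(r)}}(0)=\normtwosq{\uni{f}^{(r)}}=\normtwosq{\uni{f}}=p$.

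I do not expect a genuine obstacle here: each step is either a definition or an appeal to a fact already in place, namely the cyclotomic partition of $\Fpu$, the equivalence between cyclotomic and character patterns from \cref{Wally}, and the Fourier isometry. The only point that warrants explicit care is the translation-invariance of the length-$p$ inner product under rotation, since it is what underlies both $\ip{f^{(r)}}{g^{(r)}}=\ip{f}{g}$ and its analogue for $\uni{f}$; everything else is bookkeeping with the normalization factors $(p-1)/n$ and $n$.
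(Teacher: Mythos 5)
Your proposal is correct and follows essentially the same route as the paper's proof: the same chain of equalities (definition of correlation at shift zero, rotation plus periodicity to drop $r$, the cyclotomic partition of $\Fpu$ into $n$ classes of size $(p-1)/n$, and the Fourier isometry $\ip{\ft{d}}{\ft{d'}}=n^{-1}\ip{d}{d'}$), followed by specializing $g=f$ and observing that unimodularization changes a single zero entry to a unimodular one, adding $1$ to the squared $l^2$ norm. No gaps; the point you flag about translation-invariance of the length-$p$ inner product is exactly the step the paper handles via periodicity of $f$ and $g$.
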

\begin{proof}
We have
\begin{align*}
\AC_{f^{(r)},g^{(r)}}(0)
& = \sum_{j \in \Z} f^{(r)}_j \conj{g^{(r)}_j} \\
& = \sum_{j=0}^{p-1} f_{j+r} \conj{g_{j+r}} \\
& = \sum_{h \in \Fp} f_h \conj{g_h} \\
& = \sum_{k=0}^{n-1} \sum_{h \in \alpha_p^k\Fpn} d_k d'_k \\
& = \frac{p-1}{n} \ip{d}{d'}\\
& = (p-1) \ip{e}{e'},
\end{align*}
where the first equality is from the definition of crosscorrelation, the second from the fact that $f^{(r)}$ (resp., $g^{(r)}$) is the rotation by $r$ of $f$ (resp., $g$), the third from the fact that $f$ and $g$ are periodic of length $p$, the fourth from the fact that $f$ and $g$ are derived from cyclotomic patterns $d$ and $d'$ respectively, the fifth from the fact that each cyclotomic class $\alpha_p^k\Fpn$ has cardinality $(p-1)/n$, and the sixth from the isometric property of the Fourier transform.
The first equality shows that $\AC_{f^{(r)},g^{(r)}}(0)$ equals $\ip{f^{(r)}}{g^{(r)}}$, the third shows that it equals $\ip{f}{g}$, and the fifth and sixth verify the other equalities we want for $\AC_{f^{(r)},g^{(r)}}(0)$.
We obtain the equalities for $\AC_{f^{(r)},f^{(r)}}(0)$ by specializing to the case where $f=g$.

The sequence $f$ is unimodularizable if and only if every $f_j$ with $j\in\Fpu$ is unimodular, which is true if and only if every $d_k$ with $k \in \Zn$ is unimodular, in which case $\normtwosq{d}=n$, and the norm values for $e$, $f^{(r)}$, and $f$ follow from the previous part of this lemma.
The effect of unimodularization is to change one term of $f$ (and thus one term of $f^{(r)}$) from $0$ to a unimodular complex number, so it increases the squared $l^2$ norm by $1$.
\end{proof}

Another property of that will become important in \cref{Bernard} is that that of balance.
\begin{lemma}\label{Greta}
Let $n \in \Z^+$, let $d$ be a cyclotomic pattern of index $n$, let $e=\ft{d}$, let $p$ be a prime with $p\equiv 1 \pmod{n}$, and let $f$ be the periodic sequence from $d$ (or equivalently, derived from the character plan $e$).
Then the following are equivalent:
\begin{enumerate}[(i).]
\item $d$ is balanced, i.e., $\sum_{j \in \Zn} d_j=0$,
\item $e_0=0$, and
\item $f$ is balanced, i.e., $\sum_{j \in \Fpu} f_j=0$.
\end{enumerate}
\end{lemma}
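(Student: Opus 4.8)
The plan is to prove all three equivalences through the single observation that $e=\ft{d}$ makes condition (ii) literally the assertion that the zeroth Fourier coefficient of $d$ vanishes, while both balance conditions (i) and (iii) are statements about total sums of terms, which the Fourier transform reads off at index $0$. I would organize the argument as two short equivalences, (i) $\Leftrightarrow$ (ii) and (i) $\Leftrightarrow$ (iii), each essentially a one-line computation.

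First I would establish (i) $\Leftrightarrow$ (ii) directly from the definition of the Fourier transform. Evaluating at index $0$ gives $e_0=(\ft{d})_0=n^{-1}\sum_{k \in \Zn}\exp(-2\pi i\cdot 0\cdot k/n)\,d_k=n^{-1}\sum_{k\in\Zn}d_k$. Since the factor $n^{-1}$ is nonzero, $e_0=0$ holds if and only if $\sum_{k\in\Zn}d_k=0$, which is exactly condition (i). This equivalence is immediate and requires no properties of $p$.

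Next I would prove (i) $\Leftrightarrow$ (iii). Because $f$ is the periodic sequence of length $p$ derived from $d$, we have $f_0=0$ and $f_h=d_k$ whenever $h\in\alpha_p^k\Fpn$. Partitioning $\Fpu$ into its $n$ cyclotomic classes, each of cardinality $(p-1)/n$, yields $\sum_{j\in\Fpu}f_j=\sum_{k=0}^{n-1}\sum_{h\in\alpha_p^k\Fpn}d_k=\frac{p-1}{n}\sum_{k\in\Zn}d_k$. As $(p-1)/n$ is a nonzero constant, this total vanishes if and only if $\sum_{k\in\Zn}d_k=0$, again condition (i). (Alternatively, one could obtain (ii) $\Leftrightarrow$ (iii) from the character description $f_h=\sum_{j\in\Zn}e_j\,\omega_p^{(p-1)j/n}(h)$ by interchanging the order of summation and applying the orthogonality relation that $\sum_{h\in\Fpu}\chi(h)$ equals $p-1$ when $\chi$ is trivial and $0$ otherwise; only the $j=0$ term survives, giving $\sum_{j\in\Fpu}f_j=(p-1)e_0$. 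I would likely use the cyclotomic-class counting version to keep the computation self-contained.)

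Chaining the two equivalences completes the proof. There is no real obstacle here: the lemma is a routine consequence of the Fourier and character machinery already set up in \cref{Marlene}, and the only points needing care are bookkeeping—keeping track of the fact that $f_0=0$ is excluded from the class partition, and noting that the constant factors $n^{-1}$ and $(p-1)/n$ are nonzero so that they may be divided out without affecting the vanishing condition.
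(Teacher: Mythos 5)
Your proposal is correct and matches the paper's proof: the equivalence (i) $\Leftrightarrow$ (ii) follows from evaluating the Fourier transform at index $0$, and (i) $\Leftrightarrow$ (iii) follows from the identity $\sum_{j \in \Fpu} f_j=\frac{p-1}{n} \sum_{j \in \Zn} d_j$, which holds because each cyclotomic class has $(p-1)/n$ elements. You merely spell out the one-line computations that the paper leaves implicit; the alternative character-orthogonality route you mention is a valid variant but unnecessary.
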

\begin{proof}
The equivalence of the first and second statements follows from the Fourier transform, and the equivalence of the first and third statements follows from the fact that $\sum_{j \in \Fpu} f_j=((p-1)/n) \sum_{j \in \Zn} d_j$ because every cyclotomic class contains $(p-1)/n$ elements.
\end{proof}

\subsection{Effects of unimodularization}\label{Estelle}

Recall the definitions of unimodularizability and unimodularization from \cref{Una}.
In this section, we show that unimodularization does not much affect our correlation measures, which allows us to prove asymptotic results on unimodularized sequences from the analogous results concerning sequences that have not been unimodularized, which are more mathematically tractable.

We first show that unimodularization cannot change peak correlation much.
\begin{lemma}\label{Ulrich}
Let $f$ and $g$ be unimodularizable periodic sequences, let $\uni{f}$ and $\uni{g}$ be unimodularizations of $f$ and $g$, respectively, and let $r,r' \in \Z$.
\begin{enumerate}[(i).]
\item For every $s \in \Z$, we have $|\AC_{f^{(r)},f^{(r)}}(s)-\AC_{\uni{f}^{(r)},\uni{f}^{(r)}}(s)| \leq 2$, so that $|\PSL(f^{(r)})-\PSL(\uni{f}^{(r)})| \leq 2$.
\item For every $s \in \Z$, we have $|\AC_{f^{(r)},g^{(r')}}(s)-\AC_{\uni{f}^{(r)},\uni{g}^{(r')}}(s)| \leq 2$, so that $|\PCC(f^{(r)},g^{(r')})-\PCC(\uni{f}^{(r)},\uni{g}^{(r')})| \leq 2$.
\end{enumerate}
\end{lemma}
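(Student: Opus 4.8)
The plan is to exploit the fact that rotating a periodic sequence merely permutes and zero-pads its entries, so that the unimodularized rotation $\uni{f}^{(r)}$ differs from $f^{(r)}$ in exactly one slot. Concretely, writing $\ell=\len(f)$, the value $f^{(r)}_j=f_{r+j}$ equals $f_0$ precisely at the single index $u\in\{0,\dots,\ell-1\}$ with $u\equiv-r\pmod{\ell}$; there $\uni{f}^{(r)}_u=\uni{f}_0$ while $f^{(r)}_u=f_0$, and at every other index the two aperiodic sequences agree. Because the only slot at which $f$ and $\uni{f}$ can differ is the index $0$, where $f_0=0$ in all cases of interest (the sequences arise from cyclotomic patterns, so $f_0=0$ as in \cref{Una}) and $|\uni{f}_0|=1$, every entry of $f^{(r)}$, $\uni{f}^{(r)}$, $g^{(r')}$, and $\uni{g}^{(r')}$ has magnitude at most $1$. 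These two facts---agreement off a single index, and all entries bounded by $1$ in magnitude---are the only inputs the argument needs.

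First I would fix $s$ and examine $D(s)=\AC_{\uni{f}^{(r)},\uni{g}^{(r')}}(s)-\AC_{f^{(r)},g^{(r')}}(s)$, a sum over $j$ of $\uni{f}^{(r)}_{j+s}\conj{\uni{g}^{(r')}_j}-f^{(r)}_{j+s}\conj{g^{(r')}_j}$. Let $u$ be the modified index of $f^{(r)}$ and $v$ that of $g^{(r')}$. A summand can be nonzero only when the shift pulls a modified entry into the product, i.e. when $j+s=u$ or $j=v$; for all other $j$ the sequences agree in both slots and the terms cancel. If $u-s\neq v$ these are two distinct values of $j$: the term at $j=u-s$ equals $\uni{f}_0\conj{g^{(r')}_{u-s}}$ (the subtracted part vanishes since $f_0=0$), of magnitude at most $1$ because $|\uni{f}_0|=1$ and the $g$-entry has magnitude at most $1$; the term at $j=v$ equals $f^{(r)}_{v+s}\conj{\uni{g}_0}$, likewise of magnitude at most $1$. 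Hence $|D(s)|\leq 2$. If instead $u-s=v$, the two coincide and the single surviving term is $\uni{f}_0\conj{\uni{g}_0}-f_0\conj{g_0}=\uni{f}_0\conj{\uni{g}_0}$, of magnitude $1$. Either way $|D(s)|\leq 2$, which is the first inequality of part (ii). Part (i) is the specialization $g=f$, $r'=r$, $\uni{g}=\uni{f}$, where the coincidence $u-s=v=u$ occurs exactly at $s=0$ and gives the sharper value $|D(0)|=|\uni{f}_0|^2=1$.

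Finally, to pass from this pointwise bound to the peak statements, I would invoke the reverse triangle inequality to get $\bigl|\,|\AC_{\uni{f}^{(r)},\uni{f}^{(r)}}(s)|-|\AC_{f^{(r)},f^{(r)}}(s)|\,\bigr|\leq 2$ for every $s$, together with the elementary fact that a uniform pointwise gap is not enlarged by taking a supremum (the maximum over $s\neq 0$ defining $\PSL$, or over all $s$ defining $\PCC$); this yields $|\PSL(f^{(r)})-\PSL(\uni{f}^{(r)})|\leq 2$ and the analogous bound for $\PCC$. The only delicate point is the bookkeeping in the middle step: one must avoid double-counting the contribution when the two modified positions are aligned by the shift, which is exactly why the estimate splits into the cases $u-s\neq v$ and $u-s=v$. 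Everything else is a direct magnitude estimate.
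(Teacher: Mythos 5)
Your proof is correct and takes essentially the same route as the paper's: both observe that $f^{(r)}$ and $\uni{f}^{(r)}$ (resp.\ $g^{(r')}$ and $\uni{g}^{(r')}$) differ in exactly one term, so at most two summands of the correlation difference survive, each of magnitude at most $1$ because the surviving product involves only unimodular entries (the cross terms vanishing since $f_0=g_0=0$), after which the $\PSL$ and $\PCC$ statements follow by taking suprema of a uniform pointwise bound. Your explicit split into the cases $u-s\neq v$ and $u-s=v$ is just the paper's set $K$ with $|K|\leq 2$ spelled out, and your explicit appeal to $f_0=0$ (from the cyclotomic-pattern setting) makes precise a fact the paper's proof uses tacitly.
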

\begin{proof}
The statements about $\PSL$ and $\PCC$ follow immediately from the inequalities on correlation at a particular shift $s$, and of these, the inequality in the first statement follows from the one in the second statement when $f=g$ and $r=r'$.
We note that
\[
\AC_{f^{(r)},g^{(r')}}(s)-\AC_{\uni{f}^{(r)},\uni{g}^{(r')}}(s) = \sum_{j \in \Z} \left(f^{(r)}_{j+s} \conj{g^{(r')}_j} - \uni{f}^{(r)}_{j+s} \conj{\uni{g}^{(r')}_j}\right).
\]
Since $f^{(r)}$ and $\uni{f}^{(r)}$ (resp., $g^{(r')}$ and $\uni{g}^{(r')}$) differ in only one term, there is a set of integers $K$ with $|K|\leq 2$ such that the summand $f^{(r)}_{j+s} \conj{g^{(r')}_j} - \uni{f}^{(r)}_{j+s} \conj{\uni{g}^{(r')}_j}=0$ for all $j \not\in K$, but when $j \in K$, then this summand is equal to $-\uni{f}^{(r)}_{j+s} \conj{\uni{g}^{(r')}_j}$ (which has magnitude at most $1$, since $\uni{f}^{(r)}$ and $\uni{g}^{(r')}$ are unimodular).
\end{proof}
This immediately implies the analogous result for codebooks.
\begin{corollary}\label{Susan}
Let $P$ be a unimodularizable periodic codebook.
For each $f \in P$, let $\uni{f}$ be a unimodularization of $f$ and $r_f \in \Z$.
Let $\uni{P}=\{\uni{f}: f \in P\}$, $A=\{f^{(r_f)}: f\in P\}$, and $\uni{A}=\{\uni{f}^{(r_f)}: f \in P\}$, and suppose that $|A|=|\uni{A}|=|P|$.
Then $|\GUC(A)-\GUC(\uni{A})|\leq 2$.
\end{corollary}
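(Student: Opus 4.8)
The plan is to exploit the common index set $P$, which links $A$ and $\uni{A}$ by a bijection, together with the shift-by-shift estimate of \cref{Ulrich}. Since $\GUC$ is a maximum of correlation magnitudes over a constrained index set, the statement reduces to showing that each of $\GUC(A)$ and $\GUC(\uni{A})$ exceeds the other by at most $2$, and each of these follows by transporting an optimal tuple across the bijection and applying \cref{Ulrich} to a single shift.

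First I would record the consequence of the hypothesis $|A|=|\uni{A}|=|P|$: the maps $f\mapsto f^{(r_f)}$ and $f\mapsto\uni{f}^{(r_f)}$ are both injective on $P$, so there is a bijection $A\to\uni{A}$ sending $f^{(r_f)}\mapsto\uni{f}^{(r_f)}$. Under this correspondence a tuple $(f^{(r_f)},g^{(r_g)},s)$ is matched with $(\uni{f}^{(r_f)},\uni{g}^{(r_g)},s)$, and the excluded set $\{(a,a,0):a\in A\}$ is matched with $\{(\uni{a},\uni{a},0):\uni{a}\in\uni{A}\}$.

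Next I would prove one direction, say $\GUC(A)\leq\GUC(\uni{A})+2$. Let $(f^{(r_f)},g^{(r_g)},s)$ be an admissible tuple realizing $\GUC(A)$. Applying the second inequality of \cref{Ulrich} (or the first, when $f=g$) gives $|\AC_{f^{(r_f)},g^{(r_g)}}(s)|\leq|\AC_{\uni{f}^{(r_f)},\uni{g}^{(r_g)}}(s)|+2$. The one point needing care — and the main obstacle — is verifying that the matched tuple $(\uni{f}^{(r_f)},\uni{g}^{(r_g)},s)$ is itself admissible for $\GUC(\uni{A})$, so that its correlation magnitude is bounded by $\GUC(\uni{A})$. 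This splits into two cases: if $f\neq g$, then injectivity of $\uni{A}$ forces $\uni{f}^{(r_f)}\neq\uni{g}^{(r_g)}$; and if $f=g$ with $s\neq 0$, the shift $s$ is preserved. In either case the matched tuple avoids the excluded diagonal-at-zero set, so $|\AC_{\uni{f}^{(r_f)},\uni{g}^{(r_g)}}(s)|\leq\GUC(\uni{A})$, yielding the claimed inequality.

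Finally, because the estimate in \cref{Ulrich} bounds an absolute difference, the same instance of the lemma also gives $|\AC_{\uni{f}^{(r_f)},\uni{g}^{(r_g)}}(s)|\leq|\AC_{f^{(r_f)},g^{(r_g)}}(s)|+2$, so the identical argument (starting from a tuple realizing $\GUC(\uni{A})$ and transporting it to $A$) yields $\GUC(\uni{A})\leq\GUC(A)+2$. Combining the two inequalities gives $|\GUC(A)-\GUC(\uni{A})|\leq 2$, as desired.
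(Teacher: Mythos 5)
Your proof is correct and takes essentially the same route as the paper: the paper's proof simply cites \cref{Ulrich} and remarks that the hypothesis $|A|=|\uni{A}|=|P|$ is needed so that crosscorrelations between coinciding rotated sequences are not counted toward $\GUC$. Your write-up just makes explicit the bijection between admissible tuples and the two-sided transport argument that the paper treats as immediate.
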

\begin{proof}
This follows immediately from \cref{Ulrich} once we impose the condition that $|A|=|\uni{A}|=|P|$.
We need this assumption: if two periodic sequences in $P$ (resp., $\uni{P}$) give rise to the same aperiodic sequence in $A$ (resp., $\uni{A}$) upon rotation, then the crosscorrelations between those two identical aperiodic sequences should not be counted when determining the $\GUC$ of $A$ (resp., $\uni{A}$).
\end{proof}
The consequence of this is that unimodularization does not influence the answer to the question as to whether or not a family of uniform-length codebooks has well regulated growth of $\GUC$.
\begin{corollary}\label{Melanie}
Let $\{P_\iota\}_{\iota \in I}$ be an infinite family of uniform-length, unimodularizable periodic codebooks.
For each $\iota \in I$, and for each $f \in P_\iota$, let $\uni{f}$ be a unimodularization of $f$ and $r_f \in \Z$, and let $\uni{P}_\iota=\{\uni{f}: f \in P_\iota\}$, $A_\iota=\{f^{(r_f)}: f\in P_\iota\}$, and $\uni{A}=\{\uni{f}^{(r_f)}: f \in P_\iota\}$, and suppose that $|A_\iota|=|\uni{A}_\iota|=|P_\iota|$.
Then $\SDC(A_\iota)=\len(P_\iota)-1$ and $\SDC(\uni{A}_\iota)=\len(P_\iota)$ for every $\iota \in I$, so that $\{\SDC(A_\iota):\iota \in I\}$ is unbounded if and only if $\{\SDC(\uni{A}_\iota):\iota \in I\}=\{\len(P_\iota):\iota \in I\}$ is unbounded.
If these sets are unbounded, then $\{A_\iota\}_{\iota \in I}$ has well regulated growth of $\GUC$ if and only if $\{\uni{A}_\iota\}_{\iota\in I}$ does.
\end{corollary}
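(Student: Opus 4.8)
The plan is to treat the three assertions in order, obtaining the unboundedness equivalence and the growth equivalence from the $\SDC$ formulas together with Corollary \ref{Susan}. First I would establish the $\SDC$ formulas by a direct norm computation. Each $f\in P_\iota$ is a unimodularizable periodic sequence whose unique non-unimodular entry is $f_0=0$ (as is the case for all sequences arising from cyclotomic patterns), so it has $\len(P_\iota)-1$ entries of modulus one. Rotation only permutes the entries, so $\AC_{f^{(r_f)},f^{(r_f)}}(0)=\normtwosq{f^{(r_f)}}=\normtwosq{f}=\len(P_\iota)-1$ for every $f$, and the minimum defining $\SDC(A_\iota)$ equals this common value $\len(P_\iota)-1$. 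Unimodularization replaces the zero entry by a unimodular one, raising each squared norm by exactly $1$, whence $\AC_{\uni{f}^{(r_f)},\uni{f}^{(r_f)}}(0)=\len(P_\iota)$ for every $f$ and $\SDC(\uni{A}_\iota)=\len(P_\iota)$. The hypothesis $|A_\iota|=|\uni{A}_\iota|=|P_\iota|$ guarantees that these minima are taken over $|P_\iota|$ distinct sequences, which is precisely the condition needed to invoke Corollary \ref{Susan}.

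The second assertion is then immediate: the sets $\{\SDC(A_\iota):\iota\in I\}$ and $\{\SDC(\uni{A}_\iota):\iota\in I\}=\{\len(P_\iota):\iota\in I\}$ differ elementwise by the uniform shift $1$, so one is unbounded exactly when the other is. Writing $L_\iota=\len(P_\iota)$, the conditions $\SDC(A_\iota)\to\infty$ and $\SDC(\uni{A}_\iota)\to\infty$ both amount to $L_\iota\to\infty$.

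For the third assertion I would combine $\SDC(\uni{A}_\iota)=\SDC(A_\iota)+1$ with the bound $|\GUC(A_\iota)-\GUC(\uni{A}_\iota)|\le 2$ from Corollary \ref{Susan}. Fix $\epsilon>0$ and suppose $\{A_\iota\}$ has well regulated growth of $\GUC$, so that $\GUC(A_\iota)/(L_\iota-1)^{1/2+\epsilon}\to 0$ as $L_\iota\to\infty$. Using $\GUC(\uni{A}_\iota)\le\GUC(A_\iota)+2$ and $L_\iota^{1/2+\epsilon}\ge(L_\iota-1)^{1/2+\epsilon}$,
\[
\frac{\GUC(\uni{A}_\iota)}{\SDC(\uni{A}_\iota)^{1/2+\epsilon}}
=\frac{\GUC(\uni{A}_\iota)}{L_\iota^{1/2+\epsilon}}
\le \frac{\GUC(A_\iota)}{(L_\iota-1)^{1/2+\epsilon}}+\frac{2}{(L_\iota-1)^{1/2+\epsilon}},
\]
and both terms on the right tend to $0$ as $L_\iota\to\infty$; hence $\{\uni{A}_\iota\}$ has well regulated growth. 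The reverse implication is symmetric, using $\GUC(A_\iota)\le\GUC(\uni{A}_\iota)+2$ and the fact that $L_\iota^{1/2+\epsilon}/(L_\iota-1)^{1/2+\epsilon}\to 1$, so that it is bounded for large $L_\iota$ and one may pass from the denominator $L_\iota^{1/2+\epsilon}$ back to $(L_\iota-1)^{1/2+\epsilon}$ at the cost of a bounded factor.

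The whole argument reduces to the elementary observation that altering $\GUC$ by a bounded additive amount and $\SDC$ by a bounded additive amount cannot change whether the ratio $\GUC/\SDC^{1/2+\epsilon}$ vanishes in the limit. The only point requiring genuine care is the bookkeeping between the two slightly different normalizing denominators $(L_\iota-1)^{1/2+\epsilon}$ and $L_\iota^{1/2+\epsilon}$; since their ratio tends to $1$, this is a minor technical step rather than a real obstacle, and I expect the proof to be short once the norm computation and Corollary \ref{Susan} are assembled.
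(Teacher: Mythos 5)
Your proof is correct and follows essentially the same route as the paper: compute the shift-zero autocorrelations to get $\SDC(A_\iota)=\len(P_\iota)-1$ and $\SDC(\uni{A}_\iota)=\len(P_\iota)$, then invoke Corollary \ref{Susan} to bound $|\GUC(A_\iota)-\GUC(\uni{A}_\iota)|$ by $2$ and conclude the two ratios $\GUC/\SDC^{1/2+\epsilon}$ share the same limiting behavior. The only difference is that you spell out the $\epsilon$-bookkeeping between the denominators $(L_\iota-1)^{1/2+\epsilon}$ and $L_\iota^{1/2+\epsilon}$, which the paper compresses into the remark that the asymptotic behaviors are clearly the same.
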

\begin{proof}
For $\iota \in I$, every sequence in $A_\iota$ (resp., $\uni{A}_\iota$) has autocorrelation at shift zero equal to $\len(P_\iota)-1$ (resp., $\len(P_\iota)$), so that $\SDC(A_\iota)=\len(P_\iota)-1$ and $\SDC(\uni{A}_\iota)=\len(P_\iota)$.
Since $\GUC(A_\iota)$ and $\GUC(\uni{A}_\iota)$ differ by at most $2$ by \cref{Susan}, it is clear that, for any real $\epsilon > 0$, the ratios $\GUC(A_\iota)/(\SDC(A_\iota))^{1/2+\epsilon}$ and $\GUC(\uni{A}_\iota)/(\SDC(\uni{A}_\iota))^{1/2+\epsilon}$ have the same asymptotic behavior as $\SDC(A_\iota)$ and $\SDC(\uni{A}_\iota)$ tend to infinity.
\end{proof}

We shall also show that unimodularization does not affect the asymptotic crosscorrelation demerit factor of a family of codebooks.
We first provide a technical lemma.
\begin{lemma}\label{Judith}
Let $\ell >1$, let $f$ and $g$ be unimodularizable periodic sequences of length $\ell$, and let $\uni{f}$ and $\uni{g}$ be unimodularizations of $f$ and $g$, respectively.
For every $r,r' \in \Z$,
\begin{enumerate}[(i).]
\item\label{Valerie} We have
\[
\left|\sqrt{\CDF(\uni{f}^{(r)},\uni{g}^{(r')})}-\sqrt{\CDF(f^{(r)},g^{(r')})}\right| < \frac{\sqrt{\CDF(f^{(r)},g^{(r')})}}{\ell} + \frac{2\sqrt{\ell}}{\ell}.
\]
\item\label{Timothy} We have
\[
\left|\sqrt{\CDF(\uni{f}^{(r)},\uni{g}^{(r')})} -\sqrt{\CDF(f^{(r)},g^{(r')})}\right| < \frac{\sqrt{\CDF(\uni{f}^{(r)},\uni{g}^{(r')})}}{\ell-1} + \frac{2\sqrt{\ell}}{\ell-1}.
\]
\end{enumerate}
\end{lemma}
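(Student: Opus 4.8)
The plan is to regard each crosscorrelation demerit factor as a squared ratio of $\ell^2$ norms, and then to show that unimodularization perturbs the crosscorrelation sequence by a controlled amount in $\ell^2$. For the given pair and rotations, assemble the crosscorrelations into the (finitely supported, hence square-summable) aperiodic sequence $u$ with $u_s=\AC_{f^{(r)},g^{(r')}}(s)$, and likewise $\uni u$ with $\uni u_s=\AC_{\uni f^{(r)},\uni g^{(r')}}(s)$. Then $\normtwosq{u}$ is exactly the numerator in the definition of $\CDF(f^{(r)},g^{(r')})$, so $\sqrt{\CDF(f^{(r)},g^{(r')})}=\normtwo{u}/\sqrt{\AC_{f,f}(0)\,\AC_{g,g}(0)}$, and similarly for the unimodularized pair. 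First I would fix the denominators: rotation preserves the $\ell^2$ norm, and since $f$ and $g$ are unimodularizable with their lone non-unimodular term (at index $0$) equal to $0$, we have $\normtwosq{f}=\normtwosq{g}=\ell-1$ and $\normtwosq{\uni f}=\normtwosq{\uni g}=\ell$. Writing $S=\sqrt{\CDF(f^{(r)},g^{(r')})}$ and $\uni S=\sqrt{\CDF(\uni f^{(r)},\uni g^{(r')})}$, this records the clean identities $\normtwo{u}=(\ell-1)S$ and $\normtwo{\uni u}=\ell\,\uni S$.

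The technical heart is the sharp bound $\normtwo{u-\uni u}<2\sqrt{\ell}$, which I would prove by an add-and-subtract decomposition of the crosscorrelation difference. Abbreviating $a=f^{(r)}$, $\uni a=\uni f^{(r)}$, $b=g^{(r')}$, and $\uni b=\uni g^{(r')}$, the termwise identity
\[
a_{j+s}\conj{b_j}-\uni a_{j+s}\conj{\uni b_j}=(a_{j+s}-\uni a_{j+s})\conj{b_j}+\uni a_{j+s}\bigl(\conj{b_j}-\conj{\uni b_j}\bigr)
\]
yields, after summing over $j$, an exact splitting $u-\uni u=P_1+P_2$ of aperiodic sequences in $s$. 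Now $a-\uni a$ is supported on a single index $m_f$ (where $f^{(r)}$ was modified), and $b-\uni b$ on a single index $m_g$; hence each sum over $j$ collapses to one product. Thus $P_1$ is the constant $a_{m_f}-\uni a_{m_f}$ times a reflected shift of $\conj{b}$, and $P_2$ is a reflected shift of $\uni a$ times the constant $\conj{b_{m_g}}-\conj{\uni b_{m_g}}$. Since the modified terms equal $0$ before unimodularization and are unimodular after, both constants have modulus $1$, so $\normtwo{P_1}=\normtwo{b}=\sqrt{\ell-1}$ and $\normtwo{P_2}=\normtwo{\uni a}=\sqrt{\ell}$. The triangle inequality gives $\normtwo{u-\uni u}\le\sqrt{\ell-1}+\sqrt{\ell}<2\sqrt{\ell}$, and the asymmetry $\sqrt{\ell-1}<\sqrt{\ell}$ is exactly what makes the bound strict.

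Finally I would assemble both parts from the reverse triangle inequality $\bigl|\normtwo{\uni u}-\normtwo{u}\bigr|\le\normtwo{u-\uni u}<2\sqrt{\ell}$. For part (i), substitute $\normtwo{u}=(\ell-1)S$ to obtain $\ell(\uni S-S)=\bigl(\normtwo{\uni u}-\normtwo{u}\bigr)-S$, whence $\ell\,|\uni S-S|<2\sqrt{\ell}+S$; dividing by $\ell$ gives the first inequality. For part (ii), instead substitute $\normtwo{\uni u}=\ell\,\uni S$ to obtain $(\ell-1)(\uni S-S)=\bigl(\normtwo{\uni u}-\normtwo{u}\bigr)-\uni S$, whence $(\ell-1)\,|\uni S-S|<2\sqrt{\ell}+\uni S$; dividing by $\ell-1$ (positive since $\ell>1$) gives the second inequality. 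The two statements are thus the same estimate, redistributed once with the $-S$ term absorbed on the $\ell$ side and once with the $-\uni S$ term absorbed on the $\ell-1$ side.

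The main obstacle is the $2\sqrt{\ell}$ estimate rather than the surrounding algebra. The pointwise estimate of \cref{Ulrich} gives only $|u_s-\uni u_s|\le 2$, and since the difference may be nonzero on roughly $2\ell$ shifts this would yield the too-weak (and non-strict) bound $\normtwo{u-\uni u}\le 2\sqrt{2\ell}$. The decomposition above is what recovers the correct constant: it reorganizes the perturbation into two pieces whose norms are governed \emph{exactly} by $\normtwo{b}$ and $\normtwo{\uni a}$, and performing the split in the order that pairs the modified term of $f$ with the \emph{unmodified} $g$ (giving the smaller norm $\sqrt{\ell-1}$) is precisely what keeps the total strictly below $2\sqrt{\ell}$.
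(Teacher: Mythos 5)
Your proof is correct and follows essentially the same route as the paper's: the paper identifies aperiodic sequences with Laurent polynomials, invokes $\CDF(a,b)=\normtwosq{ab}/(\normtwosq{a}\normtwosq{b})$, and bounds $\bigl|\normtwo{\uni{f}^{(r)}\uni{g}^{(r')}}-\normtwo{f^{(r)}g^{(r')}}\bigr|$ by the same add-and-subtract split into two products, each containing one single-term difference with unimodular coefficient, to get $\sqrt{\ell}+\sqrt{\ell-1}<2\sqrt{\ell}$ (its observation that multiplication by a unimodular monomial preserves modulus on the unit circle is exactly your ``reflected shift'' collapse), and then divides by $\ell$ and by $\ell-1$ just as you do. One small correction to your closing remark: the order of the split is immaterial, since the alternative pairing $(a-\uni{a})\conj{\uni{b}}+a(\conj{b}-\conj{\uni{b}})$ produces norms $\sqrt{\ell}$ and $\sqrt{\ell-1}$ as well, so every such decomposition yields the same total $\sqrt{\ell-1}+\sqrt{\ell}$.
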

\begin{proof}
We identify each aperiodic sequence $a\colon \Z\to \C$ with the Laurent polynomial $a(z)=\sum_{j \in \Z} a_j z^j$.
Then we note that the $L^2$ norm this Laurent polynomial on the complex unit circle
\[
\frac{1}{2\pi} \left(\int_{0}^{2\pi} |a(\exp(i\theta))|^2 d\theta \right)^{1/2}
\]
is exactly equal to our usual $l^2$ norm, $\normtwo{a}$.
In \cite[eq.~(13)]{Katz}, it is shown that if $a$ and $b$ are aperiodic sequences, then
\begin{equation}\label{Celeste}
\CDF(a,b)=\normtwosq{a b}/(\normtwosq{a} \normtwosq{b}).
\end{equation}

Two applications of the triangle inequality yield
\begin{align*}
\big|\normtwo{\uni{f}^{(r)} \uni{g}^{(r')}}-\normtwo{f^{(r)} g^{(r')}}\big|
& \leq \normtwo{\uni{f}^{(r)} \uni{g}^{(r')}-f^{(r)} g^{(r')}} \\
& \leq  \normtwo{(\uni{f}^{(r)}-f^{(r)}) \uni{g}^{(r')}} + \normtwo{f^{(r)}(\uni{g}^{(r')}-g^{(r')})},
\end{align*}
but also note that $\uni{f}^{(r)}(z)-f^{(r)}(z)$ (resp., $\uni{g}^{(r')}(z)-g^{(r')}(z)$) is a single monomial with a unimodular coefficient.
Multiplication of a Laurent polynomial by a single monomial with a unimodular coefficient does not change its absolute value anywhere on the unit circle, so
\begin{equation}\label{Anastasia}
\big|\normtwo{\uni{f}^{(r)} \uni{g}^{(r')}}-\normtwo{f^{(r)} g^{(r')}}\big| \leq \normtwo{\uni{g}^{(r')}} + \normtwo{f^{(r)}} =\sqrt{\ell}+\sqrt{\ell-1} < 2\sqrt{\ell},
\end{equation}
since $\uni{g}^{(r')}$ is unimodular of length $\ell$, and $f^{(r)}$ has $\ell-1$ nonvanishing terms, all unimodular.

If we divide \eqref{Anastasia} through by $\normtwo{\uni{f}^{(r)}}\normtwo{\uni{g}^{(r')}}=\ell$ and use \eqref{Celeste} to convert this into inequalities involving crosscorrelation demerit factors, we obtain
\[
\left|\sqrt{\CDF(\uni{f}^{(r)},\uni{g}^{(r')})} - \sqrt{\CDF(f^{(r)},g^{(r')})} \left(1-\frac{1}{\ell}\right)\right| < \frac{2\sqrt{\ell}}{\ell},
\]
from which one obtains our first result, \eqref{Valerie}.
On the other hand, if we divide \eqref{Anastasia} through by $\normtwo{f}\normtwo{g}=\ell-1$ and use \eqref{Celeste} to convert this into inequalities involving crosscorrelation demerit factors, we obtain
\[
\left|\sqrt{\CDF(\uni{f}^{(r)},\uni{g}^{(r')})} \left(1+\frac{1}{\ell-1}\right) - \sqrt{\CDF(f^{(r)},g^{(r')})}\right| < \frac{2\sqrt{\ell}}{\ell-1},
\]
from which one obtains our second result, \eqref{Timothy}.
\end{proof}
Now we show that unimodularization does not affect the asymptotic crosscorrelation demerit factor of a family of codebooks.
\begin{lemma}\label{Karl}
Let $\{P_\iota\}_{\iota \in I}$ be an infinite family of uniform-length, unimodularizable periodic codebooks with the same number of sequences in each codebook.
For each $\iota \in I$, let $\ell_\iota=\len(P_\iota)$, and suppose that $\{\ell_\iota: \iota \in I\}$ is unbounded.
For each $\iota$, let $\uni{P}_\iota=\{\uni{f}: f \in P_\iota\}$, let $A_\iota=\{f^{(r_f)}: f \in P_\iota\}$, and let $\uni{A}_\iota=\{\uni{f}^{(r_f)}: f \in P_\iota\}$, where for each $f \in P_\iota$, we use $\uni{f}$ to denote some unimodularization of $f$ and $r_f$ to denote some integer, and suppose that $|P_\iota|=|\uni{P}_\iota|=|A_{\iota}|=|\uni{A}_\iota|$.
Then as $\ell_\iota\to\infty$, the quantity $\CDF(A_\iota)$ tends to a real number if and only if $\CDF(\uni{A}_\iota)$ does, in which case they tend to the same limit.
\end{lemma}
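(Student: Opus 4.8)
The plan is to reduce the statement about whole codebooks to a uniform statement about the individual pairwise crosscorrelation demerit factors, and then to feed each convergence hypothesis back in, not as convergence, but as the uniform \emph{boundedness} that activates \cref{Judith}. First I would set $N=|P_\iota|$, which is a single constant by the ``same number of sequences'' hypothesis, and use the cardinality assumptions $|P_\iota|=|A_\iota|=|\uni{A}_\iota|=N$ to observe that $f\mapsto f^{(r_f)}$ and $f\mapsto\uni{f}^{(r_f)}$ are bijections of $P_\iota$ onto $A_\iota$ and $\uni{A}_\iota$. Unwinding the codebook demerit factor $\CDF(F)=|F|^{-2}\sum_{a,b\in F}\CDF(a,b)$, this rewrites both quantities as averages indexed by the same $N^2$ ordered pairs:
\[
\CDF(A_\iota)=\frac{1}{N^2}\sum_{f,g\in P_\iota}\CDF(f^{(r_f)},g^{(r_g)}), \qquad \CDF(\uni{A}_\iota)=\frac{1}{N^2}\sum_{f,g\in P_\iota}\CDF(\uni{f}^{(r_f)},\uni{g}^{(r_g)}).
\]
It then suffices to prove that, under either hypothesis, the maximal pairwise discrepancy $\maxs{f,g\in P_\iota}|\CDF(\uni{f}^{(r_f)},\uni{g}^{(r_g)})-\CDF(f^{(r_f)},g^{(r_g)})|$ tends to $0$ as $\ell_\iota\to\infty$; averaging $N^2$ such terms forces $\CDF(\uni{A}_\iota)-\CDF(A_\iota)\to0$, which delivers both the equivalence of convergence and the equality of limits simultaneously.

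The hard part, and the reason one cannot simply quote \cref{Judith} termwise, is that \cref{Judith} only controls the difference of the \emph{square roots} $\sqrt{\CDF}$ at size $O(\ell_\iota^{-1/2})$, while the only cheap a priori bound is $\CDF\le\len$ (which follows from \eqref{Celeste} together with $|a(\exp(i\theta))|\le\len(a)$ for a unimodular or nearly unimodular $a$). Since the square roots can be as large as $\sqrt{\ell_\iota}$, the difference of the squares themselves is only controlled to $O(1)$ in general; unimodularization genuinely can shift a large demerit factor by a bounded amount, so the discrepancy does \emph{not} vanish unconditionally. The way around this is that we never need the unconditional statement. The crucial observation is that each pairwise $\CDF$ is nonnegative, there are only $N^2$ of them with $N$ fixed, and a convergent average of finitely many nonnegative numbers forces every summand to be bounded: if $\CDF(A_\iota)\to L$ then for all sufficiently large $\ell_\iota$ each $\CDF(f^{(r_f)},g^{(r_g)})\le M$ for a constant $M$ depending only on $L$ and $N$, and symmetrically if $\CDF(\uni{A}_\iota)\to L$.

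With a uniform bound in hand I would finish by writing, for a generic pair with $c=\CDF(f^{(r_f)},g^{(r_g)})$ and $\tilde c=\CDF(\uni{f}^{(r_f)},\uni{g}^{(r_g)})$, the identity $|\tilde c-c|=|\sqrt{\tilde c}-\sqrt c|\,(\sqrt{\tilde c}+\sqrt c)$. In the direction $\CDF(A_\iota)\to L$, where $c\le M$, part~\eqref{Valerie} of \cref{Judith} gives $|\sqrt{\tilde c}-\sqrt c|<\sqrt{c}/\ell_\iota+2/\sqrt{\ell_\iota}\to0$ uniformly; hence $\sqrt{\tilde c}$ is uniformly bounded as well and the product tends to $0$ uniformly over the pairs. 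In the converse direction $\CDF(\uni{A}_\iota)\to L$, where instead $\tilde c\le M$ is what we know, I would invoke part~\eqref{Timothy} of \cref{Judith}, whose bound $|\sqrt{\tilde c}-\sqrt c|<\sqrt{\tilde c}/(\ell_\iota-1)+2\sqrt{\ell_\iota}/(\ell_\iota-1)$ is phrased in terms of $\sqrt{\tilde c}$ rather than $\sqrt{c}$; this is precisely why the lemma was recorded in two complementary forms. The identical product estimate again yields uniform vanishing. In either case $\maxs{f,g\in P_\iota}|\tilde c-c|\to0$, which completes the reduction and the proof.
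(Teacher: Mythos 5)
Your proposal is correct and takes essentially the same approach as the paper's own proof: both extract a uniform bound on the individual pairwise crosscorrelation demerit factors from the assumed convergence of the codebook average (using nonnegativity of the $N^2$ summands), then apply \cref{Judith}\eqref{Valerie} in one direction and \cref{Judith}\eqref{Timothy} in the converse to make the square roots uniformly close, and finally convert this into closeness of the demerit factors themselves. Your explicit identity $|\tilde c-c|=|\sqrt{\tilde c}-\sqrt{c}\,|\,(\sqrt{\tilde c}+\sqrt{c})$ is just a sharpened rendering of the paper's appeal to the (uniform) continuity of $x\mapsto x^2$ on the bounded range.
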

\begin{proof}
Since this is an asymptotic result in the limit as $\ell_\iota$ tends to infinity, we may, without loss of generality, assume that $\ell_\iota > 1$ for all $\iota \in I$.
Let $N$ denote the number of sequences in each of the codebooks $P_\iota$ (and $\uni{P}_\iota$ and $A_\iota$ and $\uni{A}_\iota$).

Suppose that as $\ell_\iota\to\infty$, the quantity $\CDF(A_\iota)$ tends to a real number $C$.
This means that there is some positive $L$ such that whenever $\ell_\iota \geq L$, we have $\CDF(A_\iota) < C+1$, and thus $\CDF(f^{(r_f)},g^{(r_g)}) < N^2(C+1)$ for every $f^{(r_f)},g^{(r_g)} \in A_\iota$; we restrict our attention to values of $\iota$ with $\ell_\iota \geq L$ for the rest of this paragraph.
Therefore, if $f^{(r_f)},g^{(r_g)} \in A_\iota$ and $\uni{f}^{(r_f)}$ and $\uni{g}^{(r_g)}$ are the corresponding sequences in $\uni{A}_\iota$, we use \cref{Judith}\eqref{Valerie} to obtain
\[
\left|\sqrt{\CDF(\uni{f}^{(r_f)},\uni{g}^{(r_g)})}-\sqrt{\CDF(f^{(r_f)},g^{(r_g)})}\right| < \frac{N \sqrt{C+1}}{\ell_\iota} + \frac{2\sqrt{\ell_\iota}}{\ell_\iota}.
\]
Thus, by continuity of the function $x\mapsto x^2$ on $\R$, for any $\epsilon > 0$, there is some $L_\epsilon$ such that if $\iota \in I$ with $\ell_\iota \geq L_\epsilon$, then every pair of sequences $f^{(r_f)},g^{(r_g)} \in A_\iota$ with corresponding sequences $\uni{f}^{(r_f)}, \uni{g}^{(r_g)} \in \uni{A}_\iota$ has
\[
\left|\CDF(\uni{f}^{(r_f)},\uni{g}^{(r_g)})-\CDF(f^{(r_f)},g^{(r_g)})\right| < \epsilon,
\] and so by the triangle inequality $|\CDF(\uni{A}_\iota)-\CDF(A_\iota)| < \epsilon$, and so we conclude that $\CDF(\uni{A}_\iota)$ tends to the same limit as $\CDF(A_\iota)$ does when $\ell_\iota \to \infty$.

If $\CDF(\uni{A}_\iota)$ tends to some real number as $\ell_\iota\to\infty$, then the proof that $\CDF(A_\iota)$ tends to the same number proceeds along very much the same lines as the proof in the previous paragraph, but uses part \eqref{Timothy} of \cref{Judith} in place of part \eqref{Valerie}.
\end{proof}

\section{Growth of greatest undesirable correlation for codebooks from cyclotomic plans}\label{Bernard}

In this section, we prove our first main result, \cref{Penelope}.
To set the stage, we introduce additive characters and Gauss sums in \cref{Gerald}.
Then we bound the aperiodic crosscorrelation of sequences derived from character patterns in \cref{William}.
Then we prove \cref{Penelope} in \cref{Margaret}.

\subsection{Additive characters and Gauss sums}\label{Gerald}

If $p$ is a prime, then an {\it additive character of $\Fp$} is a homomorphism from the additive group $\Fp$ into $\C^*$.
Let $\achars$ be the set of all additive characters of $\Fp$, which forms a group under pointwise multiplication: $(\kappa\lambda)(a)=\kappa(a)\lambda(a)$ for $\kappa,\lambda \in \achars$ and $a\in\Fp$.
The {\it canonical additive character of $\Fp$} is $\epsilon_p\colon \Fp \to \C$ with $\epsilon_p(j)=\exp(2\pi i j/p)$ for every $j \in \Fp$.
The canonical additive character has order $p$ and generates $\achars$, so $\achars=\{\epsilon_p^0,\ldots,\epsilon_p^{p-1}\}$.
Then the map $j\mapsto \epsilon_p^j$ gives an isomorphism of cyclic groups from the additive group of $\Fp$ to the group $\achars$.
Since character values lie on the complex unit circle, if $\kappa\in\achars$, then we write $\conj{\kappa}$ to mean the inverse of $\kappa$ in the group $\achars$, so that $\kappa^{-1}(a)=\conj{\kappa}(a)=\conj{\kappa(a)}$ for every $a \in \Fp$.

If $\psi \in \achars$ and $\chi \in \mchars$, then the {\it Gauss sum for $\psi$ and $\chi$} is
\[
G(\psi,\chi)=\sum_{x \in \Fpu} \psi(x) \chi(x).
\]
If $a \in \Fp$, we use the shorthand $G_a(\chi)$ to mean $G(\epsilon_p^a,\chi)$ and the shorthand $G(\chi)$ to mean $G_1(\chi)=G(\epsilon_p,\chi)$.
We record some useful facts about Gauss sums.
\begin{lemma}\label{Gertrude}
Let $p$ be a prime, let $a \in \Fp$, let $\chi\in\mchars$, and recall that $\chi_0$ denotes the trivial multiplicative character.  Then
\begin{enumerate}[(i).]
\item\label{Alphonse} $G_0(\chi_0)=p-1$,
\item\label{Bogdan} $G_0(\chi)=0$ if $\chi\not=\chi_0$,
\item\label{Clarence} $G_a(\chi_0)=-1$ if $a\not=0$,
\item\label{Dorothy} $|G_a(\chi)|=\sqrt{p}$ if $a\not=0$ and $\chi\not=\chi_0$, 
\item\label{Elaine} $G_a(\chi)=\conj{\chi}(a) G(\chi)$ if $a\not=0$ or if $\chi\not=\chi_0$, and
\item\label{Felicia} $G(\conj{\chi})=\chi(-1) \conj{G(\chi)}$.
\end{enumerate}
\end{lemma}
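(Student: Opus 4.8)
The plan is to derive all six identities from the orthogonality relations for characters of the cyclic groups $\Fpu$ and $\Fp$, treating part~(iv) as the one substantive computation and obtaining the remaining parts by direct evaluation or by a change of variable in the summation index. First I would record the two orthogonality facts I will use repeatedly: for a nontrivial $\chi \in \mchars$ we have $\sum_{x \in \Fpu} \chi(x) = 0$, and for $a \neq 0$ the additive character $\epsilon_p^a$ is nontrivial, so $\sum_{x \in \Fp} \epsilon_p^a(x) = 0$.

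Parts~(i), (ii), and~(iii) then fall out immediately. For~(i), both characters are trivial on $\Fpu$, so $G_0(\chi_0)$ is a sum of $p-1$ ones. For~(ii), $G_0(\chi) = \sum_{x \in \Fpu}\chi(x)$ is a sum of a nontrivial multiplicative character over its whole group, hence zero. For~(iii), $G_a(\chi_0) = \sum_{x\in\Fpu}\epsilon_p^a(x)$, and subtracting the absent $x=0$ term (which contributes $1$) from the vanishing full sum over $\Fp$ gives $-1$.

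The main obstacle, and the only real computation, is part~(iv). Here I would expand $|G_a(\chi)|^2 = G_a(\chi)\conj{G_a(\chi)} = \sum_{x,y \in \Fpu} \epsilon_p^a(x - y)\,\chi(x)\conj{\chi(y)}$ and then substitute $x = ty$ with $t$ ranging over $\Fpu$, which collapses $\chi(x)\conj{\chi(y)}$ to $\chi(t)$ and rewrites the exponent as $\epsilon_p^a((t-1)y)$. The inner sum over $y$ is $p-1$ when $t=1$ and, since $a \neq 0$, equals $-1$ for each $t \neq 1$ by the computation behind part~(iii). This leaves $|G_a(\chi)|^2 = (p-1) - \sum_{t \neq 1}\chi(t)$, and applying part~(ii) to rewrite $\sum_{t\neq 1}\chi(t) = -\chi(1) = -1$ yields $|G_a(\chi)|^2 = p$. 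The delicate points are keeping the two hypotheses $a\neq 0$ and $\chi\neq\chi_0$ active exactly where each is needed, and confirming that $x = ty$ is a genuine bijection of $\Fpu$ for each fixed $y$.

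Finally, parts~(v) and~(vi) are obtained by reindexing. For~(v) with $a\neq 0$, the substitution $u = ax$ gives $\epsilon_p^a(x) = \epsilon_p(u)$ and $\chi(x) = \conj{\chi}(a)\chi(u)$, so that $G_a(\chi) = \conj{\chi}(a)G(\chi)$; the remaining case $a=0$, $\chi\neq\chi_0$ is immediate from part~(ii) together with the convention $\conj{\chi}(0)=0$. For~(vi), I would conjugate $G(\chi)$ termwise, use $\conj{\epsilon_p(x)} = \epsilon_p(-x)$, and substitute $u=-x$; after extracting $\conj{\chi}(-1)$ and noting that $\chi(-1)=\pm 1$ is real and self-inverse, this rearranges to $G(\conj{\chi}) = \chi(-1)\conj{G(\chi)}$. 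I expect these last two to be routine once the orthogonality toolkit is assembled and part~(iv) is in hand.
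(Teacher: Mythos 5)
Your proof is correct, but it differs from the paper in an important structural way: the paper does not prove \cref{Gertrude} at all---its entire ``proof'' is a citation to Lidl and Niederreiter \cite[Theorem 5.11, 5.12(i),(iii)]{Lidl-Niederreiter}. What you have written is essentially the standard textbook argument that lives behind that citation, reconstructed correctly and self-containedly: parts (i)--(iii) from the orthogonality relations, part (iv) by expanding $|G_a(\chi)|^2 = \sum_{x,y\in\Fpu}\epsilon_p^a(x-y)\chi(x)\conj{\chi(y)}$ and reparameterizing via $x=ty$ (your bookkeeping is right---$a\neq 0$ is what makes the inner sum over $y$ equal $-1$ for $t\neq 1$, and $\chi\neq\chi_0$ is what makes $\sum_{t\neq 1}\chi(t)=-1$, yielding $|G_a(\chi)|^2=(p-1)+1=p$), part (v) by the substitution $u=ax$ with the degenerate case $a=0$, $\chi\neq\chi_0$ handled by part (ii) together with the paper's convention $\chi(0)=0$ (which the paper does impose, extending every multiplicative character, including the conjugate, by zero at $0$), and part (vi) by termwise conjugation, the substitution $u=-x$, and the observation that $\chi(-1)\in\{1,-1\}$ is real and self-inverse. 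Each hypothesis is invoked exactly where it is needed, and the one point worth being careful about---that $x\mapsto ty$ is a bijection of $\Fpu$ for fixed $y$---you flag and it is immediate. What your approach buys is self-containment: a reader gets a complete two-page-reference-free verification of all six facts. What the paper's approach buys is brevity and delegation of a completely standard result to the standard reference; for a research paper that is the conventional choice, but your reconstruction would serve perfectly well as an appendix-style proof.
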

\begin{proof}
See \cite[Theorem 5.11, 5.12(i),(iii)]{Lidl-Niederreiter}.
\end{proof}
Gauss sums provide a way of expressing multiplicative characters as linear combinations of additive characters.
\begin{lemma}\label{Horace}
If $\chi\in\mchars$ and $b \in \Fp$, then
\[
\chi(b)=\frac{1}{p} \sum_{a \in \Fp} G_a(\chi) \conj{\epsilon_p^a}(b).
\]
\end{lemma}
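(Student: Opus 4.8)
The plan is to expand the right-hand side using the definition of the Gauss sum, interchange the two finite summations, and then invoke orthogonality of the additive characters.

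First I would write $G_a(\chi)=G(\epsilon_p^a,\chi)=\sum_{x\in\Fpu}\epsilon_p^a(x)\chi(x)$ directly from the definition, and observe that for each $a\in\Fp$ the factor $\conj{\epsilon_p^a}(b)$ equals $\exp(-2\pi i ab/p)$, so that $\epsilon_p^a(x)\conj{\epsilon_p^a}(b)=\exp(2\pi i a(x-b)/p)=\epsilon_p(a(x-b))$. Substituting into the right-hand side and exchanging the sums over $a$ and $x$ yields
\[
\frac{1}{p}\sum_{a\in\Fp}G_a(\chi)\conj{\epsilon_p^a}(b)=\frac{1}{p}\sum_{x\in\Fpu}\chi(x)\sum_{a\in\Fp}\epsilon_p(a(x-b)).
\]

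Next I would evaluate the inner sum by the standard orthogonality relation for the canonical additive character: since $\epsilon_p$ generates $\achars$ and has order $p$, the sum $\sum_{a\in\Fp}\epsilon_p(ac)$ is a finite geometric series equal to $p$ when $c=0$ and $0$ otherwise. Applying this with $c=x-b$, the only surviving term is the one with $x=b$. I would then finish by cases: if $b\not=0$, then $b\in\Fpu$ arises as a value of $x$, so the right-hand side collapses to $\tfrac{1}{p}\cdot\chi(b)\cdot p=\chi(b)$; if $b=0$, then no $x\in\Fpu$ satisfies $x=b$, so every inner sum vanishes and the right-hand side is $0$, which agrees with the convention $\chi(0)=0$ extending $\chi$ to $\Fp$.

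There is no serious obstacle here; the identity is a direct Fourier-inversion computation. The only points requiring care are bookkeeping ones: one must keep the inner sum over the \emph{full} additive group $\Fp$ (including $a=0$) so that orthogonality applies cleanly, and one must handle $b=0$ separately in order to confirm that the left-hand value $\chi(0)=0$ is reproduced.
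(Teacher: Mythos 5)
Your proof is correct. The paper does not supply its own argument for \cref{Horace} -- it simply cites Lemma 8 of \cite{Katz} -- and your computation (expanding $G_a(\chi)$, interchanging the finite sums, and applying orthogonality of the additive characters $\sum_{a \in \Fp}\epsilon_p(ac) = p\,\delta_{c,0}$, with the case $b=0$ handled by the convention $\chi(0)=0$) is exactly the standard Fourier-inversion argument that underlies the cited result, carried out with appropriate care at the two bookkeeping points you flag.
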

\begin{proof}
See \cite[Lemma 8]{Katz}.
\end{proof}

\subsection{Bounds on peak correlation of sequences from character patterns}\label{William}

Our results depend on two character sum bounds.
The first is Weil's bound \cite{Weil} on character sums with polynomial arguments, and the other is Sarwate's version \cite[Lemma 1]{Sarwate-1984-Upper} of a bound of Vinogradov.
We prove a slight generalization of Sarwate's bound for use in this paper.
\begin{lemma}\label{Samantha}
Let $N \in \Z^+$ and let $B$ be any finite set of consecutive integers.
Then $\sum_{a=1}^{N-1} \left|\sum_{b \in B} \exp(2\pi i a b/N) \right| < \frac{2 N}{\pi} \log\left(\frac{4 N}{\pi}\right).$
\end{lemma}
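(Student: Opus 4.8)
The plan is to reduce the claim to a bound on the cosecant sum $\sum_{a=1}^{N-1} 1/\sin(\pi a/N)$ and then to control that sum by comparison with a carefully chosen integral so that the constant $\tfrac{2N}{\pi}\log(4N/\pi)$ emerges exactly. First I would use that $B$ is a set of consecutive integers to recognize the inner sum as a geometric progression: writing $B=\{c,c+1,\dots,c+L-1\}$, for each $a$ with $1\le a\le N-1$ we have $e^{2\pi i a/N}\ne 1$, so
\[
\Bigl|\sum_{b\in B} e^{2\pi i ab/N}\Bigr| = \Bigl|\sum_{j=0}^{L-1} e^{2\pi i aj/N}\Bigr| = \frac{|\sin(\pi aL/N)|}{|\sin(\pi a/N)|} \le \frac{1}{\sin(\pi a/N)},
\]
using $0<\pi a/N<\pi$ so that $\sin(\pi a/N)>0$ (the case $N=1$ is trivial, the sum being empty). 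Summing over $a$ reduces everything to showing $\sum_{a=1}^{N-1}\csc(\pi a/N) < \tfrac{2N}{\pi}\log(4N/\pi)$.

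For the cosecant sum, the key step is a midpoint--convexity comparison. The function $x\mapsto\csc(\pi x/N)$ is convex on $(0,N)$, since its second derivative equals a positive constant times $\csc(\pi x/N)(\csc^2(\pi x/N)+\cot^2(\pi x/N))$. For a convex $f$, Jensen's inequality gives $f(a)\le\int_{a-1/2}^{a+1/2}f$, and the unit intervals centered at $a=1,\dots,N-1$ tile $[\tfrac12,N-\tfrac12]$ exactly, so
\[
\sum_{a=1}^{N-1}\csc\frac{\pi a}{N} \le \int_{1/2}^{N-1/2}\csc\frac{\pi x}{N}\,dx = \frac{N}{\pi}\Bigl[\log\tan\frac{\pi x}{2N}\Bigr]_{1/2}^{\,N-1/2} = -\frac{2N}{\pi}\log\tan\frac{\pi}{4N},
\]
where the evaluation at the upper endpoint uses $\tan(\tfrac{\pi}{2}-\theta)=\cot\theta$. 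Finally, the elementary inequality $\tan u>u$ on $(0,\pi/2)$ gives $-\log\tan(\pi/4N) < -\log(\pi/4N)=\log(4N/\pi)$, which yields the stated bound with strict inequality, uniformly for all $N\ge 2$.

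The main obstacle is pinning down the exact constant. A naive route---isolating the singular term $a=1$ and applying Jordan's inequality $\sin(\pi a/N)\ge 2a/N$ to the remaining terms---produces a leading term of order $N\log N$ with coefficient $1$, which exceeds the target coefficient $2/\pi$ and leaves no slack to absorb the $a=1$ term. The midpoint--convexity trick is precisely what repairs this: centering the comparison intervals on the integers shifts the integration limits to $\tfrac12$ and $N-\tfrac12$, so the antiderivative $\tfrac{N}{\pi}\log\tan(\pi x/2N)$ evaluates symmetrically to $\mp\log\tan(\pi/4N)$, capturing the singular behavior near both endpoints at once and producing exactly the factor $4N/\pi$ inside the logarithm. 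I would only need to verify that the comparison interval $[\tfrac12,N-\tfrac12]$ lies inside $(0,N)$, so that convexity applies, and note that because this argument never splits the sum into symmetric halves it treats $N$ of both parities in one stroke.
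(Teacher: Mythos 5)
Your proof is correct, but it takes a genuinely different route from the paper's. The paper's proof is a reduction-plus-citation: it sums the geometric progression to see that the magnitude of the inner sum depends only on the residue $m$ of $|B|$ modulo $N$, disposes of $m=0$ trivially, and then recognizes the remaining double sum as $N$ times the quantity Sarwate calls $\Gamma_{N-m,N}$, invoking his Lemma 1 for the bound $\frac{2}{\pi}\log\left(\frac{4N}{\pi}\right)$ (the lemma is explicitly billed as ``a slight generalization of Sarwate's bound''). You instead prove everything from scratch: your crude estimate $|\sin(\pi a L/N)|\le 1$ makes the reduction of $L$ modulo $N$ unnecessary, and the entire burden shifts to the cosecant sum $\sum_{a=1}^{N-1}\csc(\pi a/N)$, which you control by the midpoint (left Hermite--Hadamard) inequality for the convex function $x\mapsto \csc(\pi x/N)$ on $(0,N)$, integrating over $[\tfrac12,N-\tfrac12]$ and finishing with $\tan u>u$. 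All the steps check out: the convexity computation, the exact tiling of $[\tfrac12,N-\tfrac12]$ by unit intervals centered at $a=1,\dots,N-1$, the antiderivative $\tfrac{N}{\pi}\log\tan\frac{\pi x}{2N}$, the symmetric endpoint evaluations giving $-\tfrac{2N}{\pi}\log\tan\frac{\pi}{4N}$, the strictness via $\tan u>u$, and the degenerate cases ($N=1$, empty $B$) are all handled correctly. What the paper's route buys is brevity and an audit trail to the literature; what yours buys is self-containedness --- you have in effect inlined a proof of the cited lemma, and your derivation makes transparent exactly where the constant $\frac{4N}{\pi}$ inside the logarithm comes from, namely the half-integer integration limits produced by centering the comparison intervals on the integers. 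Your side remark is also accurate: the naive route via Jordan's inequality $\sin(\pi a/N)\ge 2a/N$ yields a leading coefficient of $1$ on $N\log N$ rather than $\frac{2}{\pi}$, so the midpoint trick (or an equivalent integral comparison) is genuinely needed to hit the stated constant.
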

\begin{proof}
Let $b_0 \in \Z$ and $\ell \in \N$ be such that $B=\{b \in \Z: b_0 \leq b < b_0+\ell\}$.
Then for each $a \in \{1,\ldots,N-1\}$, we have
\[
\sum_{b \in B} \exp(2\pi i a b/N) = \frac{\exp(2\pi i a b_0/N)- \exp(2\pi i a (b_0+\ell)/N)}{1-\exp(2\pi i a/N)},
\]
so that if $m$ is the least nonnegative integer with $m \equiv \ell \pmod{N}$, we have
\[
\left|\sum_{b \in B} \exp(2\pi i a b/N)\right| = \left|\frac{1-\exp(2\pi i a m/N)}{1-\exp(2\pi i a/N)}\right|,
\]
so that $|\sum_{b \in B} \exp(2\pi i a b/N)|=|\sum_{b=0}^{m-1} \exp(2\pi i a b/N)|$, and so the sum we wish to bound equals $\sum_{a=1}^{N-1} \left|\sum_{b=0}^{m-1} \exp(2\pi i a b/N) \right|$.
If $m=0$, then the inner sums are empty, and so the desired bound is trivial; otherwise, the last double sum is $N$ times the sum that Sarwate \cite[Lemma 1]{Sarwate-1984-Upper} calls $\Gamma_{N-m,N}$ and upper bounds by $(2/\pi)\log(4 N/\pi)$.
\end{proof}
Now we are ready to prove bounds on the correlation of aperiodic sequences obtained by rotating periodic sequences given by single multiplicative characters.
\begin{lemma}\label{Esmeralda}
Let $p$ be a prime, let $\phi, \chi \in \mchars$, let $f$ and $g$ be periodic sequences of length $p$ given by $f_j=\phi(j)$ and $g_j=\chi(j)$ for all $j \in \Fp$, and let $r,r',s \in \Z$ with $|s|<p$.
\begin{enumerate}[(i)]
\item\label{apple} If $\phi$ and $\chi$ are both the trivial character, then $\AC_{f^{(r)},g^{(r')}}(s) \in \{p-|s|,p-|s|-1, p-|s|-2\}$.
\item\label{banana} If precisely one of $\phi$ or $\chi$ is trivial, we have $|\AC_{f^{(r)},g^{(r')}}(s)| \leq  1 + \frac{2}{\pi} \sqrt{p} \log\left(\frac{4 p}{\pi}\right)$.
\item\label{cherry} If $\phi$ and $\chi$ are the same nontrivial character and $s \equiv r'-r \pmod{p}$, then $\AC_{f^{(r)},g^{(r')}}(s) \in \{p-|s|,p-|s|-1\}$.
\item\label{date} If $\phi$ and $\chi$ are the same nontrivial character and $s \not\equiv r'-r \pmod{p}$, then $|\AC_{f^{(r)},g^{(r')}}(s)| \leq 2 \sqrt{p} + \frac{4}{\pi} \sqrt{p} \log\left(\frac{4 p}{\pi}\right)$.
\item\label{elderberry} If $\phi, \chi$ are distinct nontrivial characters, then $|\AC_{f^{(r)},g^{(r')}}(s)| \leq 2 \sqrt{p} + \frac{4}{\pi} \sqrt{p} \log\left(\frac{4 p}{\pi}\right)$.
\end{enumerate}
Therefore, if we define
\[
B=\begin{cases}
p-|s| & \text{if $\phi=\chi=\chi_0$,}\\
p-|s| & \text{if $\phi=\chi$ and $s\equiv r'-r \pmod{p}$,}\\
0 & \text{otherwise,}
\end{cases}
\]
then
\[
|\AC_{f^{(r)},g^{(r')}}(s)-B| \leq 2 \sqrt{p} + \frac{4}{\pi} \sqrt{p} \log\left(\frac{4 p}{\pi}\right).
\]
\end{lemma}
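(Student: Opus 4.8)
The plan is to reduce each of the five cases to an estimate for an incomplete character sum over an interval, and then to apply the two tools at hand: Gauss sums together with \cref{Samantha} when only one of the characters is nontrivial, and Weil's bound when both are. First I would rewrite the correlation explicitly. Since $f^{(r)}_{j+s}=\phi(r+s+j)$ and $\conj{g^{(r')}_j}=\conj{\chi}(r'+j)$, and the summand vanishes unless both $j$ and $j+s$ lie in $\{0,\dots,p-1\}$, the substitution $h=r'+j$ gives
\[
\AC_{f^{(r)},g^{(r')}}(s)=\sum_{h\in H}\phi(h+t)\,\conj{\chi}(h),
\]
where $t=r+s-r'$ and $H$ is a set of exactly $p-|s|$ consecutive integers; note that $t\equiv 0\pmod{p}$ precisely when $s\equiv r'-r\pmod{p}$. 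Thus the whole lemma becomes a problem about character sums over an interval of length $p-|s|$, with the five cases distinguished by which of $\phi,\chi$ is trivial and whether $t\equiv0$.

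For the cases where the summand degenerates I would argue by counting residues. When $\phi=\chi=\chi_0$ the summand equals $1$ except at the residues $h\equiv0$ and $h\equiv -t$, of which at most two (at most one if $t\equiv0$) lie in $H$, giving part \ref{apple}; when $\phi=\chi$ is nontrivial and $t\equiv0$ the summand is $|\phi(h)|^2$, which is $1$ off the single residue $h\equiv0$, giving part \ref{cherry}. For part \ref{banana} the symmetry $\AC_{f^{(r)},g^{(r')}}(s)=\conj{\AC_{g^{(r')},f^{(r)}}(-s)}$ lets me assume $\chi=\chi_0$ and $\phi$ nontrivial; discarding the lone term with $h\equiv0$ (of modulus at most $1$) leaves $\sum_{h\in H}\phi(h+t)$, which I expand using \cref{Horace}. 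Because $G_0(\phi)=0$ and $|G_a(\phi)|=\sqrt{p}$ for $a\ne0$ by \cref{Gertrude}, this is at most $\tfrac{\sqrt{p}}{p}\sum_{a=1}^{p-1}\bigl|\sum_{h\in H}\exp(-2\pi i a h/p)\bigr|$, and \cref{Samantha} bounds the inner sum to yield $1+\tfrac{2}{\pi}\sqrt{p}\log(4p/\pi)$.

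The substantive work is in parts \ref{date} and \ref{elderberry}, where both characters are nontrivial, so the single-character expansion no longer collapses the sum. Here I would complete the sum by expanding the indicator of the residues of $H$ through its additive transform, obtaining
\[
\AC_{f^{(r)},g^{(r')}}(s)=\frac1p\sum_{a\in\Fp}W_a\,T_a,\qquad W_a=\sum_{h\in H}\exp(-2\pi i a h/p),\qquad T_a=\sum_{c\in\Fp}\exp(2\pi i a c/p)\,\phi(c+t)\,\conj{\chi}(c).
\]
For $a\ne0$ each $T_a$ is a complete sum of a nontrivial additive character against $\phi(c+t)\conj{\chi}(c)$, whose ramification occurs at the two distinct points $c=-t$ and $c=0$ (distinct exactly because $t\not\equiv0$), so Weil's bound gives $|T_a|\le 2\sqrt{p}$; the term $a=0$ is the complete multiplicative sum $T_0=\sum_{c\in\Fp}\phi(c+t)\conj{\chi}(c)$, a Jacobi-type sum of modulus at most $\sqrt{p}$ (and equal to $-1$ when $\phi=\chi$), so $\tfrac1p|W_0T_0|\le\sqrt{p}$ since $|W_0|=p-|s|\le p$. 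Bounding $\tfrac1p\sum_{a=1}^{p-1}|W_a|\,|T_a|\le\tfrac{2\sqrt{p}}{p}\sum_{a=1}^{p-1}|W_a|<\tfrac{4}{\pi}\sqrt{p}\log(4p/\pi)$ via \cref{Samantha} and adding the $a=0$ contribution gives $|\AC_{f^{(r)},g^{(r')}}(s)|<\sqrt{p}+\tfrac{4}{\pi}\sqrt{p}\log(4p/\pi)$, comfortably inside the claimed $2\sqrt{p}+\tfrac{4}{\pi}\sqrt{p}\log(4p/\pi)$. I expect the main obstacle to be invoking Weil's theorem in exactly the right form for these mixed additive--multiplicative sums: checking that the additive character is nontrivial, that the two ramification points are distinct (which is where the hypothesis $t\not\equiv0$ is essential), and pinning the constant so that $|T_a|\le2\sqrt{p}$, together with the separate evaluation of the $a=0$ Jacobi sum.

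Finally, I would assemble the consolidated inequality by matching each case to the definition of $B$. In the two cases where $B=p-|s|$, namely $\phi=\chi=\chi_0$ and ($\phi=\chi$ nontrivial with $s\equiv r'-r$), parts \ref{apple} and \ref{cherry} give $|\AC_{f^{(r)},g^{(r')}}(s)-B|\le2$; in every remaining case $B=0$ and parts \ref{banana}, \ref{date}, \ref{elderberry} bound $|\AC_{f^{(r)},g^{(r')}}(s)|$ by $2\sqrt{p}+\tfrac{4}{\pi}\sqrt{p}\log(4p/\pi)$. Since $2\le 2\sqrt{p}+\tfrac{4}{\pi}\sqrt{p}\log(4p/\pi)$ for every prime $p$, the single stated bound holds uniformly.
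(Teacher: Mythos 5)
Your overall strategy is the same as the paper's---reduce everything to incomplete mixed character sums over an interval, dispose of the degenerate cases \eqref{apple} and \eqref{cherry} by counting residues, handle \eqref{banana} with \cref{Horace}, \cref{Gertrude}, and \cref{Samantha}, and attack \eqref{date} and \eqref{elderberry} with Weil plus \cref{Samantha}---but you organize the crucial completion step in a genuinely different (transposed) way. The paper never completes the interval sum directly: it expands \emph{both} multiplicative characters through Gauss sums via \cref{Horace}, reparameterizes the double sum by $(a,b)\mapsto(a,a+c)$, and lands on complete sums $W_c=\sum_{a\in\Fp}\epsilon_p^{r'-r-s}(a)\conj{\phi}(a)\chi(a+c)$ in which the additive character is \emph{fixed} and possibly trivial, while the multiplicative ramification points $a=0$ and $a=-c$ may collide at $c=0$; all degeneracies are then swept up in a six-case table for $|W_c|$, and the leftover interval sums $\sum_{j\in J}\epsilon_p(cj)$ are bounded by \cref{Samantha}. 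You instead complete the interval sum by additive-character orthogonality (the P\'olya--Vinogradov device), so your complete sums $T_a=\sum_{c\in\Fp}\epsilon_p^a(c)\phi(c+t)\conj{\chi}(c)$ carry a varying additive character that is \emph{always nontrivial} for $a\neq 0$, with the single degenerate frequency $a=0$ isolated as a Jacobi-type sum. Your evaluation of that term is correct (including $T_0=-1$ when $\phi=\chi$ and $t\not\equiv 0$, since $T_0=-\phi(t)\conj{\phi}(t)$), and your version in fact gives the slightly sharper bound $\sqrt{p}+\tfrac{4}{\pi}\sqrt{p}\log(4p/\pi)$ in the two Weil cases. Parts \eqref{apple}, \eqref{banana}, \eqref{cherry} and the final assembly match the paper's argument essentially verbatim, up to your handling the conjugate-symmetric half of \eqref{banana} in the opposite order.

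There is one genuine, though easily repaired, gap. In part \eqref{elderberry} nothing forces $t=r+s-r'\not\equiv 0\pmod{p}$: the hypothesis $s\not\equiv r'-r\pmod{p}$ is available only in part \eqref{date}. Your appeal to Weil's bound rests on the parenthetical claim that the ramification points $c=0$ and $c=-t$ are ``distinct exactly because $t\not\equiv 0$,'' so as written the argument fails for distinct nontrivial $\phi,\chi$ when $t\equiv 0\pmod{p}$. Fortunately the sum degenerates in your favor there: $\phi(c+t)\conj{\chi}(c)=(\phi\conj{\chi})(c)$ with $\phi\conj{\chi}$ nontrivial, so for $a\neq 0$ each $T_a$ is a Gauss sum of modulus $\sqrt{p}$ by \cref{Gertrude}\eqref{Dorothy}, and $T_0=0$ by \cref{Gertrude}\eqref{Bogdan}, yielding the even better bound $\tfrac{2}{\pi}\sqrt{p}\log(4p/\pi)$. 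The paper's six-case table for $|W_c|$ handles the dual degeneracy (trivial additive character when $s\equiv r'-r$, colliding points when $c=0$) explicitly; you should add the corresponding one-line subcase, after which your proof is complete.
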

\begin{proof}
We have
\begin{equation}\label{Henry}
\begin{aligned}
\AC_{f^{(r)},g^{(r')}}(s)
& = \sum_{j \in \Z} f^{(r)}_{j+s} \conj{g^{(r')}_j} \\
& = \sum_{j \in J} \phi(j+s+r) \conj{\chi}(j+r') 
\end{aligned}
\end{equation}
where $J=\{0,1,\ldots,p-1-s\}$ if $s \geq 0$, but $J=\{-s,-s+1,\ldots,p-1\}$ when $s \leq 0$.
(The first equality is due to the definition of aperiodic autocorrelation and the second is by the definitions of $f^{(r)}$ and $g^{(r')}$.)

If $\phi$ and $\chi$ are both trivial, then the summand $\phi(j+s+r) \conj{\chi}(j+r')$ of \eqref{Henry} is $1$ unless $j \equiv -(s+r) \pmod{p}$ or $j\equiv -r' \pmod{p}$, in which case the summand instead becomes $0$.  Since $J$ is a set of $p-|s|$ consecutive integers, each of these congruences can be satisfied for at most one $j$ in $J$, and so $\AC_{f^{(r)},g^{(r')}}(s)$ is in $\{|J|, |J|-1, |J|-2\}$, and so \eqref{apple} follows.

If $\phi=\chi$ and $s\equiv r'-r \pmod{p}$, \eqref{Henry} yields $\AC_{f^{(r)},g^{(r')}}(s)  = \sum_{j \in J} \phi(j+r') \conj{\phi}(j+r')$, whose $j$th term is $1$ unless $j \equiv -r'\pmod{p}$, in which case the $j$th term is $0$.
Since $J$ is a set of $p-|s|$ consecutive integers, this means that $j \equiv -r' \pmod{p}$ occurs for at most one $j$ in $J$, whence \eqref{cherry} follows.

Now suppose that $\phi$ is trivial and $\chi$ is not, and consider \eqref{Henry}.
We have $\phi(j+s+r)=1$ except when $j\equiv -(s+r)\pmod{p}$, which can happen for at most one $j \in J$, and if this does happen, the $j$th summand of \eqref{Henry} is $0$ rather than $\conj{\chi}(j+r')$.  So there is some $E \in \C$ with $|E| \leq 1$ such that
\begin{align*}
\AC_{f^{(r)},g^{(r')}}(s)
& = E+ \sum_{j \in J} \conj{\chi}(j+r') \\
& = E+ \sum_{j \in J} \frac{1}{p} \sum_{b \in \Fp} \conj{G_b(\chi)} \epsilon_p^b(j+r') && \text{by \cref{Horace}} \\
& = E+ \frac{\conj{G(\chi)}}{p} \sum_{b \in \Fp} \chi(b) \sum_{j \in J} \epsilon_p^b(j+r')  && \text{by \cref{Gertrude}\eqref{Elaine}},
\end{align*}
so we may use the triangle inequality to obtain
\begin{align*}
\left|\AC_{f^{(r)},g^{(r')}}(s)\right|
& \leq 1 + \frac{|G(\chi)|}{p} \sum_{b \in \Fpu} \left|\sum_{j \in J} \epsilon_b^p(j+r')\right| \\
& = 1 + \frac{1}{\sqrt{p}} \sum_{b \in \Fpu} \left|\sum_{j \in J} \epsilon_b^p(j+r')\right| && \text{by \cref{Gertrude}\eqref{Dorothy}} \\
& = 1 + \frac{2}{\pi} \sqrt{p} \log\left(\frac{4 p}{\pi}\right) && \text{by \cref{Samantha}.}
\end{align*}

On the other hand, if $\chi$ is trivial and $\phi$ is not, then \[|\AC_{f^{(r)},g^{(r')}}(s)|=|\conj{\AC_{g^{(r')},f^{(r)}}(-s)}| \leq 1 + \frac{2}{\pi} \sqrt{p} \log\left(\frac{4 p}{\pi}\right)\] by the previous paragraph; this finishes the proof of \eqref{banana}.

For the rest of this proof we assume that both $\phi$ and $\chi$ are nontrivial, and use \cref{Horace} in \eqref{Henry} to obtain
\[
\AC_{f^{(r)},g^{(r')}}(s) = \sum_{j\in J} \frac{1}{p^2} \sum_{a, b \in \Fp} G_a(\phi) \conj{\epsilon_p^a(j+s+r)} \conj{G_b(\chi) \conj{\epsilon_p^b}(j+r')}.
\]
Then we use \cref{Gertrude}\eqref{Elaine} to obtain
\begin{align*}
\AC_{f^{(r)},g^{(r')}}(s)
& =  \frac{G(\phi) \conj{G(\chi)}}{p^2} \sum_{a, b \in \Fp}  \conj{\phi}(a) \chi(b) \epsilon_p(b r'-a r- a s)  \sum_{j\in J} \epsilon_p((b-a)j) \\
& =  \frac{G(\phi) \conj{G(\chi)}}{p^2}  \sum_{c \in \Fp}  \epsilon_p(c r')   \sum_{j\in J} \epsilon_p(c j) \sum_{a \in \Fp} \epsilon_p^{r'-r-s}(a) \conj{\phi}(a) \chi(a+c),
\end{align*}
where we have reparameterized the sum of $(a,b)$ over $\Fp\times\Fp$ using the bijection $(a,c) \mapsto (a,a+c)$ in the second equality.
The group $\mchars$ is cyclic of order $p-1$ with generator $\omega_p$.
Write $\phi=\omega_p^u$ and $\chi=\omega_p^v$ for some integers $u,v \in \{1,\ldots,p-2\}$.
Then the sum over $a$ in the last expression becomes
\[
W_c=\sum_{a \in \Fp} \epsilon_p^{r'-r-s}(a) \omega_p(a^{p-1-u} (a+c)^v),
\]
so that
\[
\AC_{f^{(r)},g^{(r')}}(s) = \frac{G(\phi) \conj{G(\chi)}}{p^2}  \sum_{c \in \Fp}  \epsilon_p(c r')   W_c \sum_{j\in J} \epsilon_p(c j).
\]
Weil's bound (or, in some cases, direct evaluation) tells us that
\[
|W_c| \leq
\begin{cases}
2\sqrt{p} & \text{if $r'-r\not\equiv s \pmod{p}$ and $c\not=0$;} \\
\sqrt{p} & \text{if $r'-r\equiv s \pmod{p}$ and $c\not=0$;} \\
\sqrt{p} & \text{if $r'-r\not\equiv s \pmod{p}$, $c=0$, and $u\not=v$;} \\
0 & \text{if $r'-r\equiv s \pmod{p}$, $c=0$, and $u\not=v$;} \\
1 & \text{if $r'-r\not\equiv s \pmod{p}$, $c=0$, and $u=v$; and} \\
p-1 & \text{if $r'-r\equiv s \pmod{p}$, $c=0$, and $u=v$.} 
\end{cases}
\]
Therefore, if $r'-r \not\equiv s \pmod{p}$ or $\phi\not=\chi$ (i.e., $u\not=v$), we have
\begin{align*}
|\AC_{f^{(r)},g^{(r')}}(s)|
& \leq \frac{|G(\phi)| |G(\chi)|}{p^2} \sum_{c \in \Fp} 2 \sqrt{p} \left|\sum_{j\in J} \epsilon_p(c j)\right|  \\
& = \frac{2}{\sqrt{p}} \sum_{c \in \Fp} \left|\sum_{j\in J} \epsilon_p(c j)\right| && \text{by Lemma \ref{Gertrude}\eqref{Dorothy}} \\
& \leq 2\sqrt{p} + \frac{2}{\sqrt{p}} \sum_{c \in \Fpu} \left|\sum_{j\in J} \epsilon_p(c j)\right| && \text{since $|J|=p-|s|\leq p$} \\
& \leq 2 \sqrt{p} + \frac{4}{\pi} \sqrt{p} \log\left(\frac{4 p}{\pi}\right) && \text{by \cref{Samantha}}.
\end{align*}
This finishes the proof of \eqref{date} and \eqref{elderberry}.
\end{proof}
\begin{remark}\label{Rebecca}
\cref{Esmeralda} gives an upper bound of $2 \sqrt{p} + \frac{4}{\pi} \sqrt{p} \log\left(\frac{4 p}{\pi}\right)$ on the peak sidelobe level of any rotation of the sequence given by the quadratic character $\eta$ (i.e., the Legendre symbol).
The Legendre sequence is usually defined by first unimodularizing: one replaces the $\eta(0)=0$ entry with a $1$; \cref{Ulrich} shows that this (or any other unimodularization) will produce a unimodularized instance with $\PSL$ upper bounded by $2+2 \sqrt{p} + \frac{4}{\pi} \sqrt{p} \log\left(\frac{4 p}{\pi}\right)$, which is less than the bound of $2+18 \sqrt{p} \log p$ that comes from the paper of Mauduit and S\'ark{\"o}zy \cite[Cor.~1]{Mauduit-Sarkozy}.
\end{remark}
Now we use the bound of \cref{Esmeralda} to bound correlation values for sequences from character patterns.
\begin{proposition}\label{Edna}
Let $n \in \Z^+$ and let $e$ and $e'$ be character patterns of index $n$.
Let $p$ be a prime with $p\equiv 1 \pmod{n}$, and let $f$ and $g$ be periodic sequences of length $p$ derived from character patterns $e$ and $e'$, respectively.
For $r,r',s \in \Z$ with  $|s| < p$, let
\[
B = \begin{cases}
(p-|s|) \ip{e}{e'} & \text{if $s \equiv r'-r \pmod{p}$,} \\
(p-|s|) e_0 \conj{e'_0} & \text{otherwise.}
\end{cases}
\]
Then
\[
|\AC_{f^{(r)},g^{(r')}}(s)-B| \leq \normone{e} \normone{e'} \left(2 \sqrt{p} + \frac{4}{\pi} \sqrt{p} \log\left(\frac{4 p}{\pi}\right)\right).
\]
\end{proposition}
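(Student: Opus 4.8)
The plan is to express the correlation of sequences from character patterns as a linear combination of correlations of sequences from single characters, and then apply \cref{Esmeralda} termwise. Recall that the periodic sequence of length $p$ derived from a character pattern $e$ is $f_h = \sum_{j \in \Zn} e_j \omega_p^{(p-1)j/n}(h)$, so $f$ is a sum over the characters $\chi_j = \omega_p^{(p-1)j/n}$ weighted by the coefficients $e_j$. Writing $f = \sum_j e_j f^{(j)}_{\mathrm{char}}$ and $g = \sum_{j'} e'_{j'} g^{(j')}_{\mathrm{char}}$ (where the inner sequences come from single characters), bilinearity of crosscorrelation in the first argument and conjugate-linearity in the second gives
\[
\AC_{f^{(r)},g^{(r')}}(s) = \sum_{j,j' \in \Zn} e_j \conj{e'_{j'}} \AC_{u_j^{(r)},v_{j'}^{(r')}}(s),
\]
where $u_j$ and $v_{j'}$ are the length-$p$ sequences given by the single characters $\chi_j$ and $\chi_{j'}$.

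The key step is matching the main-term bookkeeping. \cref{Esmeralda} tells us that each summand $\AC_{u_j^{(r)},v_{j'}^{(r')}}(s)$ is within $2\sqrt{p} + \frac{4}{\pi}\sqrt{p}\log(4p/\pi)$ of an explicit quantity $B_{j,j'}$, where $B_{j,j'} = p-|s|$ when $\chi_j = \chi_{j'}$ is nontrivial and $s \equiv r'-r \pmod p$, and $B_{j,j'} = p-|s|$ when both characters are trivial (i.e. $j = j' = 0$, since $\chi_0 = \omega_p^0$ is the one trivial character in the subgroup), and $B_{j,j'} = 0$ otherwise. First I would verify that $\sum_{j,j'} e_j \conj{e'_{j'}} B_{j,j'}$ collapses to the claimed $B$. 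When $s \equiv r'-r \pmod p$, every diagonal term $j = j'$ contributes $(p-|s|) e_j \conj{e'_j}$ (the characters $\chi_j$ are distinct for distinct $j$, so $\chi_j = \chi_{j'}$ forces $j = j'$), and summing over $j$ gives $(p-|s|)\ip{e}{e'}$. When $s \not\equiv r'-r \pmod p$, the only surviving contribution is the trivial-trivial case $j = j' = 0$, giving $(p-|s|) e_0 \conj{e'_0}$. This exactly reproduces the two cases in the definition of $B$.

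Then I would bound the error by the triangle inequality:
\[
|\AC_{f^{(r)},g^{(r')}}(s) - B| \leq \sum_{j,j' \in \Zn} |e_j|\,|e'_{j'}|\,\bigl|\AC_{u_j^{(r)},v_{j'}^{(r')}}(s) - B_{j,j'}\bigr| \leq \left(\sum_{j} |e_j|\right)\left(\sum_{j'} |e'_{j'}|\right)\left(2\sqrt{p} + \frac{4}{\pi}\sqrt{p}\log\left(\frac{4p}{\pi}\right)\right),
\]
and the factored sums are exactly $\normone{e}\,\normone{e'}$, which is the claimed constant.

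The main obstacle I anticipate is not the estimate itself but the careful alignment of the $B_{j,j'}$ accounting with the global $B$, specifically checking that \cref{Esmeralda}'s case structure (which distinguishes same-versus-different characters and the trivial character specially) aggregates correctly. One subtlety worth stating explicitly is that the character $\chi_j = \omega_p^{(p-1)j/n}$ is trivial precisely when $j \equiv 0 \pmod n$, and that the $n$ characters indexed by $j \in \Zn$ are genuinely distinct, so no two off-diagonal terms can accidentally coincide and contribute an unintended main term. Once that is confirmed the bound follows mechanically; the only quantitative input is \cref{Esmeralda}, whose uniform error bound is independent of the particular characters involved, which is exactly what lets the common factor pull out of the double sum.
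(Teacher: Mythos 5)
Your proof is correct and takes essentially the same route as the paper's: the paper likewise writes $\phi=\omega_p^{(p-1)/n}$, expands $\AC_{f^{(r)},g^{(r')}}(s)=\sum_{u,v\in\Zn}e_u\conj{e'_v}\,\AC_{(\phi^u)^{(r)},(\phi^v)^{(r')}}(s)$, defines the same main terms $B_{u,v}$ (equal to $p-|s|$ exactly when $u=v=0$, or $u=v$ with $s\equiv r'-r\pmod{p}$, and $0$ otherwise), checks they aggregate to $B$, and applies \cref{Esmeralda} termwise with the triangle inequality to extract the factor $\normone{e}\normone{e'}$. Your explicit remarks that the $n$ characters $\omega_p^{(p-1)j/n}$ are pairwise distinct and that only $j=0$ yields the trivial character are precisely the facts the paper's case bookkeeping relies on, so no gap remains.
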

\begin{proof}
Let $\phi=\omega_p^{(p-1)/n}$, so that $\{\phi^j: j \in \Zn\}$ is the unique cyclic subgroup of order $n$ in $\mchars$.
We have
\begin{align*}
\AC_{f^{(r)},g^{(r')}}(s)
& = \sum_{j \in \Z} f^{(r)}_{j+s} \conj{g^{(r')}_j} \\
& = \sum_{j \in J} \sum_{u \in \Zn} e_u \phi^u(j+s+r) \conj{\sum_{v \in \Zn} e'_v \phi^v(j+r')},
\end{align*}
where $J=\{0,1,\ldots,p-1-s\}$ if $s \geq 0$, but $J=\{-s,-s+1,\ldots,p-1\}$ when $s \leq 0$.
Thus, we have
\[
\AC_{f^{(r)},g^{(r')}}(s) = \sum_{u,v \in \Zn} e_u \conj{e'_v}  \AC_{(\phi^u)^{(r)},(\phi^v)^{(r')}}(s).
\]
For each $(u,v) \in \Zn\times\Zn$, we define
\[
B_{u,v}=\begin{cases}
p-|s| & \text{if $u=v=0$,}\\
p-|s| & \text{if $u=v$ and $s\equiv r'-r \pmod{p}$,}\\
0 & \text{otherwise,}
\end{cases}
\]
and
\[
E_{u,v} =  \AC_{(\phi^u)^{(r)},(\phi^v)^{(r')}}(s) - B_{u,v}.
\]
Then \cref{Esmeralda} shows that $|E_{u,v}| \leq 2 \sqrt{p} + \frac{4}{\pi} \sqrt{p} \log\left(\frac{4 p}{\pi}\right)$ for every $u,v$.
So
\begin{align*}
\AC_{f^{(r)},g^{(r')}}(s)
& = \sum_{u,v \in \Zn} e_u \conj{e'_v}  (B_{u,v} + E_{u,v}) \\
& = \sum_{u \in \Zn} e_u \conj{e'_u} B_{u,u}  + \sum_{u,v \in \Zn} e_u \conj{e'_v} E_{u,v} \\
& = B  + \sum_{u,v \in \Zn} e_u \conj{e'_v} E_{u,v},
\end{align*}
and therefore
\begin{align*}
|\AC_{f^{(r)},g^{(r')}}(s)-B|
& \leq \sum_{u,v \in \Zn} |e_u| |\conj{e'_v}| |E_{u,v}| \\
& \leq \normone{e} \normone{e'} \left(2 \sqrt{p} + \frac{4}{\pi} \sqrt{p} \log\left(\frac{4 p}{\pi}\right)\right).\qedhere
\end{align*}
\end{proof}
We specialize this result for autocorrelation.
\begin{corollary}\label{Mildred}
Let $n \in \Z^+$ and let $e$ be a character pattern of index $n$.
Let $p$ be a prime with $p \equiv 1 \pmod{n}$ and let $f$ be the periodic sequence of length $p$ from $e$.
For $r,s \in \Z$ with  $|s| < p$, let
\[
B = \begin{cases}
p \normtwosq{e} & \text{if $s=0$,} \\
(p-|s|) |e_0|^2 & \text{otherwise.}
\end{cases}
\]
Then
\[
|\AC_{f^{(r)},f^{(r)}}(s)-B| \leq \normonesq{e}  \left(2 \sqrt{p} + \frac{4}{\pi} \sqrt{p} \log\left(\frac{4 p}{\pi}\right)\right).
\]
\end{corollary}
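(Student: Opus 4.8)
The plan is to obtain \cref{Mildred} as a direct specialization of \cref{Edna}, setting $e'=e$, $g=f$, and $r'=r$, so that $f^{(r')}=g^{(r')}=f^{(r)}$ and the crosscorrelation $\AC_{f^{(r)},g^{(r')}}(s)$ becomes the autocorrelation $\AC_{f^{(r)},f^{(r)}}(s)$. The strategy is simply to track how the case distinction and the bound in \cref{Edna} collapse under this identification, so no new estimation is required; all the analytic work (the Weil and Vinogradov--Sarwate bounds feeding into \cref{Esmeralda} and then \cref{Edna}) has already been done.

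First I would invoke \cref{Edna} with the substitutions above, which is legitimate since $f$ derived from $e$ is of length $p$ with $p\equiv 1\pmod n$ and $|s|<p$. The quantity $B$ from \cref{Edna} is governed by whether $s\equiv r'-r\pmod p$; with $r'=r$ this congruence reads $s\equiv 0\pmod p$, and the standing hypothesis $|s|<p$ forces this to be equivalent to $s=0$. Thus the two cases of \cref{Edna} become exactly the $s=0$ and $s\neq 0$ cases of \cref{Mildred}.

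Next I would evaluate the two branches explicitly. In the $s=0$ branch, $B=(p-|s|)\ip{e}{e}=p\,\normtwosq{e}$, using $\ip{e}{e}=\normtwosq{e}$ and $|s|=0$. In the $s\neq 0$ branch, $B=(p-|s|)\,e_0\conj{e_0}=(p-|s|)|e_0|^2$. These match the definition of $B$ in \cref{Mildred} verbatim. For the error bound, the factor $\normone{e}\normone{e'}$ from \cref{Edna} becomes $\normone{e}\normone{e}=\normonesq{e}$, yielding
\[
|\AC_{f^{(r)},f^{(r)}}(s)-B| \leq \normonesq{e}\left(2\sqrt{p}+\frac{4}{\pi}\sqrt{p}\log\left(\frac{4p}{\pi}\right)\right),
\]
which is precisely the claimed inequality.

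There is essentially no obstacle here: the only point requiring a moment's care is the reduction of the congruence $s\equiv r'-r\pmod p$ to the equality $s=0$, which is exactly where the hypothesis $|s|<p$ is used, and the bookkeeping that $\ip{e}{e}=\normtwosq{e}$ and $e_0\conj{e_0}=|e_0|^2$ so that the two case descriptions coincide. Everything else is a mechanical substitution into the already-established \cref{Edna}.
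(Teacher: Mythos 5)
Your proposal is correct and is exactly the paper's intended argument: the paper presents \cref{Mildred} as an immediate specialization of \cref{Edna} with $e'=e$, $g=f$, $r'=r$, where the congruence $s\equiv r'-r\pmod{p}$ collapses to $s=0$ via $|s|<p$ and the bound factor becomes $\normonesq{e}$. Your handling of the case reduction and the identifications $\ip{e}{e}=\normtwosq{e}$ and $e_0\conj{e_0}=|e_0|^2$ matches the paper's reading precisely.
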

We now use \cref{Edna} and \cref{Mildred} to give bounds on the peak correlation measures for individual instances or pairs of instances from cyclotomic patterns.
\begin{lemma}\label{Martha}
Let $n \in \Z^+$ let $d$ and $d'$ be cyclotomic patterns of index $n$, and let $p$ be a prime with $p\equiv 1\pmod{n}$.
Let $f$ (resp., $g$) be a $p$-instance of $d$ (resp., $d'$).
If $d$ (resp., $d'$) is unimodular, let $\uni{f}$ (resp., $\uni{g}$) be a unimodularized $p$-instance of $d'$.
\begin{enumerate}[(i).]
\item\label{Barbara} If $d$ is balanced, then
\[
\PSL(f) \leq \normonesq{\ft{d}}  \left(2 \sqrt{p} + \frac{4}{\pi} \sqrt{p} \log\left(\frac{4 p}{\pi}\right)\right)
\]
and if $d$ is also unimodular, then 
\[
\PSL(\uni{f}) \leq \normonesq{\ft{d}}  \left(2 \sqrt{p} + \frac{4}{\pi} \sqrt{p} \log\left(\frac{4 p}{\pi}\right)\right)+2.
\]
\item\label{Ursula} If $d$ is not balanced, then $\ft{d}_0\not=0$ and
\[
\PSL(f) \geq |\ft{d}_0|^2 (p-1) - \normonesq{\ft{d}} \left(2 \sqrt{p} + \frac{4}{\pi} \sqrt{p} \log\left(\frac{4 p}{\pi}\right)\right).
\]
and if $d$ is also unimodular, then 
\[
\PSL(\uni{f}) \geq |\ft{d}_0|^2 (p-1) - \normonesq{\ft{d}} \left(2 \sqrt{p} + \frac{4}{\pi} \sqrt{p} \log\left(\frac{4 p}{\pi}\right)\right)-2.
\]
\item\label{Orestes} If $d$ and $d'$ are orthogonal to each other and at least one of them is balanced, then
\[
\PCC(f,g) \leq \normone{\ft{d}}\normone{\ft{d'}}  \left(2 \sqrt{p} + \frac{4}{\pi} \sqrt{p} \log\left(\frac{4 p}{\pi}\right)\right),
\]
and if $d$ and $d'$ are also both unimodular, then
\[
\PCC(\uni{f},\uni{g}) \leq \normone{\ft{d}}\normone{\ft{d'}}  \left(2 \sqrt{p} + \frac{4}{\pi} \sqrt{p} \log\left(\frac{4 p}{\pi}\right)\right)+2.
\]
\item\label{Nestor} If neither $d$ nor $d'$ is balanced, then $\ft{d}_0\ft{d'}_0\not=0$, and
\[
\PCC(f,g) \geq |\ft{d}_0 \ft{d'}_0| p - \normone{\ft{d}}\normone{\ft{d'}} \left(2 \sqrt{p} + \frac{4}{\pi} \sqrt{p} \log\left(\frac{4 p}{\pi}\right)\right),
\]
and if $d$ and $d'$ are also both unimodular, then
\[
\PCC(\uni{f},\uni{g}) \geq |\ft{d}_0 \ft{d'}_0| p - \normone{\ft{d}}\normone{\ft{d'}} \left(2 \sqrt{p} + \frac{4}{\pi} \sqrt{p} \log\left(\frac{4 p}{\pi}\right)\right)-2,
\]
\item\label{Cyprian} If $d$ and $d'$ are not orthogonal to each other, then $\ip{\ft{d}}{\ft{d'}}\not=0$ and
\[
\PCC(f,g) \geq \ip{\ft{d}}{\ft{d'}}  \frac{p}{2} - \normone{\ft{d}}\normone{\ft{d'}} \left(2 \sqrt{p} + \frac{4}{\pi} \sqrt{p} \log\left(\frac{4 p}{\pi}\right)\right),
\]
and if $d$ and $d'$ are also both unimodular, then
\[
\PCC(\uni{f},\uni{g}) \geq \ip{\ft{d}}{\ft{d'}} \frac{p}{2} - \normone{\ft{d}}\normone{\ft{d'}} \left(2 \sqrt{p} + \frac{4}{\pi} \sqrt{p} \log\left(\frac{4 p}{\pi}\right)\right)-2.
\]
\end{enumerate}
\end{lemma}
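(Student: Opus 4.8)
The plan is to reduce every part to the character--pattern estimates of \cref{Edna} and \cref{Mildred}. Write $f=h^{(r)}$ and $g=h'^{(r')}$, where $h$ (resp.\ $h'$) is the periodic sequence of length $p$ from $d$ (resp.\ $d'$); by \cref{Wally} this same sequence is derived from the character pattern $e=\ft{d}$ (resp.\ $e'=\ft{d'}$), so \cref{Edna} and \cref{Mildred} apply verbatim with $e_0=\ft{d}_0$, $\ip{e}{e'}=\ip{\ft{d}}{\ft{d'}}$, and $\normone{e}=\normone{\ft{d}}$. Throughout I would abbreviate the common error term as $\Xi=2\sqrt{p}+\frac{4}{\pi}\sqrt{p}\log(4p/\pi)$, and translate the hypotheses into Fourier language using \cref{Greta} ($d$ balanced $\iff\ft{d}_0=0$) together with the isometry $\ip{\ft{d}}{\ft{d'}}=n^{-1}\ip{d}{d'}$ (so $d,d'$ orthogonal $\iff\ip{\ft{d}}{\ft{d'}}=0$).

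For the two upper bounds \eqref{Barbara} and \eqref{Orestes} I would show that the main term $B$ vanishes at every shift. In \eqref{Barbara}, balance forces $\ft{d}_0=0$, so for every nonzero $s$ with $|s|<p$ the term $B=(p-|s|)|\ft{d}_0|^2$ is zero and \cref{Mildred} gives $|\AC_{f,f}(s)|\le\normonesq{\ft{d}}\,\Xi$; since the correlation vanishes for $|s|\ge p$, maximizing over $s\ne0$ yields the $\PSL$ bound. In \eqref{Orestes}, orthogonality kills the $s\equiv r'-r$ branch of \cref{Edna} while balance of one pattern kills the $\ft{d}_0\conj{\ft{d'}_0}$ branch, so $B=0$ at all $s$ and the same maximization gives the $\PCC$ bound.

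For the three lower bounds \eqref{Ursula}, \eqref{Nestor}, \eqref{Cyprian} I would instead exhibit a single well-chosen shift at which $|B|$ is large and apply $|\AC(s)|\ge|B|-(\text{error})$. In \eqref{Ursula} the pattern is unbalanced, so $\ft{d}_0\ne0$; taking $s=1$ in \cref{Mildred} gives $|B|=(p-1)|\ft{d}_0|^2$, whence $\PSL(f)\ge|\AC_{f,f}(1)|\ge(p-1)|\ft{d}_0|^2-\normonesq{\ft{d}}\,\Xi$. In \eqref{Nestor} both patterns are unbalanced, so $\ft{d}_0\conj{\ft{d'}_0}\ne0$; choosing $s=0$ (which lies in the $B=(p-|s|)\ft{d}_0\conj{\ft{d'}_0}$ branch precisely when $r'\not\equiv r\pmod p$) makes the coefficient exactly $p$. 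In \eqref{Cyprian} the patterns are non-orthogonal, so $\ip{\ft{d}}{\ft{d'}}\ne0$, and here one is forced into the $s\equiv r'-r$ branch; the least-magnitude representative of $r'-r$ modulo $p$ satisfies $|s|\le p/2$, so $p-|s|\ge p/2$, which is exactly the origin of the factor $p/2$ in place of $p$.

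The unimodularized statements in every part follow at once from \cref{Ulrich}: because $|\PSL(f)-\PSL(\uni{f})|\le2$ and $|\PCC(f,g)-\PCC(\uni{f},\uni{g})|\le2$, each unimodularized bound is obtained from its bare counterpart by adding $2$ (upper bounds) or subtracting $2$ (lower bounds). The main obstacle is the combinatorial bookkeeping in the lower bounds \eqref{Nestor} and \eqref{Cyprian}: one must confirm that the selected $s$ genuinely lands in the intended branch of \cref{Edna}, which is governed by the residue $r'-r\bmod p$, and control $|s|$ accordingly. The delicate edge case is \eqref{Nestor} when $r'\equiv r\pmod p$, where no admissible shift has $|s|=0$ and the single-shift method only yields coefficient $p-1$; reconciling this with the stated coefficient $p$ (or restricting to the generic advancements with $r'\not\equiv r$, which is all that is needed downstream in \cref{Penelope}) is the point that requires care, as is the worst case $r'-r\equiv\lfloor p/2\rfloor$ that caps \eqref{Cyprian} at $p/2$.
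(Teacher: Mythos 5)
Your proposal is correct and takes essentially the same route as the paper's proof: reduce everything to \cref{Edna} and \cref{Mildred} with $e=\ft{d}$, $e'=\ft{d'}$ (via \cref{Wally}, \cref{Greta}, and Fourier isometry), obtain the upper bounds by noting the main term $B$ vanishes at all shifts (with correlations at $|s|\ge p$ vanishing by support), obtain the lower bounds from the shifts $s=1$, $s=0$, and the least-magnitude representative of $r'-r$ modulo $p$ (so $p-|s|\ge p/2$), and transfer to unimodularized instances via \cref{Ulrich} at a cost of $\pm 2$. The edge case you flag in part \eqref{Nestor} --- when $r\equiv r'\pmod{p}$, the shift $s=0$ falls into the $\ip{\ft{d}}{\ft{d'}}$ branch of \cref{Edna} and the single-shift method only yields coefficient $p-1$ --- is a genuine subtlety, and you may take some satisfaction in having spotted it, since the paper's own proof simply invokes $|\AC_{f,g}(0)|$ without addressing it.
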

\begin{proof}
All correlations between $p$-instances (or unimodularized $p$-instances) of cyclotomic patterns at shifts $s$ with $|s| \geq p$ vanish because the instances are supported on $\{0,1,\ldots,p-1\}$ so the $(j+s)$th term of one instance and the $j$th term of the other cannot simultaneously be nonvanishing.
Thus, the results follow from \cref{Edna} and \cref{Mildred}, where we set $e=\ft{d}$ and $e'=\ft{d'}$, use \cref{Greta} to see that $d$ (resp., $d'$) is balanced if and only if $\ft{d}_0=0$ (resp., $\ft{d'}_0=0$), and use the isometry property of the Fourier transform to see that $d$ and $d'$ are orthogonal if and only if $\ip{\ft{d}}{\ft{d'}}=0$.  
Then, in particular,
\begin{itemize}
\item the bounds on $\PSL(f)$ follow from \cref{Mildred}, where we use $|\AC_{f,f}(1)|$ as the lower bound when $d$ is not balanced;
\item the bounds on $\PSL(\uni{f})$ then follow from those on $\PSL(f)$ using \cref{Ulrich};
\item the bounds on $\PCC(f,g)$ follow from \cref{Edna}, where we use $|\AC_{f,g}(0)|$ as the lower bound when neither $d$ nor $d'$ is balanced, but when $d$ and $d'$ are not orthogonal, we use $|\AC_{f,g}(s)|$ for a shift $s$ in $\{s \in \Z: |s| \leq p/2\}$ that satisfies the congruence $s\equiv r'-r$, where $r$ (resp., $r'$) is the advancement used when rotating to obtain $f$ (resp., $g$) from the periodic sequence of length $p$ from $d$ (resp., $d'$); and
\item the bounds on $\PCC(\uni{f},\uni{g})$ then follow from those on $\PCC(f,g)$ using \cref{Ulrich}.  \qedhere
\end{itemize}
\end{proof}
We can apply this lemma to bound the correlations in codebooks from cyclotomic plans.
\begin{corollary}\label{Evelyn}
Let $n \in \Z^+$, let $D$ be a cyclotomic plan of index $n$, and let $p$ be a prime with $p\equiv 1\pmod{n}$.
Let $F$ be a $p$-instance of $D$ and, if $D$ is unimodular, let $\uni{F}$ be a unimodularized $p$-instance of $D$.
\begin{enumerate}[(i).]
\item\label{Theresa} If $D$ is balanced and orthogonal, then
\[
\GUC(F) \leq \max_{d \in D} \normonesq{\ft{d}} \left(2 \sqrt{p} + \frac{4}{\pi} \sqrt{p} \log\left(\frac{4 p}{\pi}\right)\right),
\]
and if $D$ is also unimodular, then
\[
\GUC(\uni{F}) \leq \max_{d \in D} \normonesq{\ft{d}} \left(2 \sqrt{p} + \frac{4}{\pi} \sqrt{p} \log\left(\frac{4 p}{\pi}\right)\right)+2.
\]
\item\label{Bertrand} If $D$ is not balanced, then $\ft{d}_0\not=0$ for some $d \in D$, and we have
\[
\GUC(F) \geq |\ft{d}_0|^2 (p-1) - \normonesq{\ft{d}} \left(2 \sqrt{p_\iota} + \frac{4}{\pi} \sqrt{p_\iota} \log\left(\frac{4 p_\iota}{\pi}\right)\right),
\]
and if $D$ is also unimodular, then
\[
\GUC(\uni{F}) \geq |\ft{d}_0|^2 (p-1) - \normonesq{\ft{d}} \left(2 \sqrt{p_\iota} + \frac{4}{\pi} \sqrt{p_\iota} \log\left(\frac{4 p_\iota}{\pi}\right)\right)-2.
\]
\item\label{Ophelia} If $D$ is not orthogonal, then $\ip{\ft{d}}{\ft{d'}}\not=0$ for some pair of distinct $d,d' \in D$, and if $|F|=|D|$, then we have
\[
\GUC(F) \geq \ip{\ft{d}}{\ft{d'}} \frac{p}{2} - \normone{\ft{d}}\normone{\ft{d'}} \left(2 \sqrt{p} + \frac{4}{\pi} \sqrt{p} \log\left(\frac{4 p}{\pi}\right)\right),
\]
and if $D$ is also unimodular with $|\uni{F}|=|D|$, then
\[
\GUC(\uni{F}) \geq \ip{\ft{d}}{\ft{d'}} \frac{p}{2} - \normone{\ft{d}}\normone{\ft{d'}} \left(2 \sqrt{p} + \frac{4}{\pi} \sqrt{p} \log\left(\frac{4 p}{\pi}\right)\right)-2.
\]
\end{enumerate}
\end{corollary}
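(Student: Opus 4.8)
The plan is to reduce the greatest undesirable correlation of a codebook to a maximum of peak sidelobe levels and peak crosscorrelations of its constituent sequences, and then to invoke \cref{Martha} term by term. Writing $F=\{f_d : d \in D\}$, where $f_d$ denotes the $p$-instance derived from the pattern $d$, the definition of $\GUC$ separates into the off-peak autocorrelation contributions (the triples $(f_d,f_d,s)$ with $s\not=0$) and the crosscorrelation contributions (the triples $(f_d,f_{d'},s)$ with $f_d\not=f_{d'}$), so that
\[
\GUC(F)=\max\left(\max_{d \in D} \PSL(f_d),\; \max_{\substack{d,d' \in D \\ f_d \not= f_{d'}}} \PCC(f_d,f_{d'})\right),
\]
and likewise for $\uni{F}$. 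With this identity in hand, each part of the corollary is obtained by applying the matching case of \cref{Martha} to the individual terms, using \cref{Greta} to convert the hypotheses of balance and orthogonality on $D$ into the conditions $\ft{d}_0=0$ and $\ip{\ft{d}}{\ft{d'}}=0$.

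For part \eqref{Theresa}, since $D$ is balanced, every $d$ is balanced, so \cref{Martha}\eqref{Barbara} bounds each $\PSL(f_d)$ by $\normonesq{\ft{d}}(\ldots)$, where $(\ldots)$ abbreviates the common factor $2\sqrt{p}+\tfrac{4}{\pi}\sqrt{p}\log(4p/\pi)$; and since $D$ is both balanced and orthogonal, any two distinct patterns are orthogonal with at least one balanced, so \cref{Martha}\eqref{Orestes} bounds each $\PCC(f_d,f_{d'})$ by $\normone{\ft{d}}\normone{\ft{d'}}(\ldots)$. Taking the maximum and invoking the elementary inequality $\normone{\ft{d}}\normone{\ft{d'}} \le \max_{d''\in D}\normonesq{\ft{d''}}$ (valid because $ab \le \max(a^2,b^2)$ for nonnegative reals) collapses both families of bounds into the single stated estimate $\max_{d}\normonesq{\ft{d}}(\ldots)$. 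The unimodularized bound then follows at once by applying the $+2$ refinements of these same estimates in \cref{Martha} to each term.

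For the lower bounds in parts \eqref{Bertrand} and \eqref{Ophelia}, I would simply observe that $\GUC(F)$ dominates any single quantity appearing in the maximum above. In part \eqref{Bertrand}, non-balance of $D$ gives, via \cref{Greta}, some $d$ with $\ft{d}_0\not=0$; the sequence $f_d$ lies in $F$ and its off-peak autocorrelation is always counted, so $\GUC(F)\ge\PSL(f_d)$ and the lower bound of \cref{Martha}\eqref{Ursula} transfers directly, with no cardinality hypothesis needed. In part \eqref{Ophelia}, non-orthogonality gives distinct $d,d'$ with $\ip{\ft{d}}{\ft{d'}}\not=0$; here the hypothesis $|F|=|D|$ guarantees $f_d\not=f_{d'}$, so that $\PCC(f_d,f_{d'})$ is genuinely counted among the undesirable correlations rather than being the excluded peak autocorrelation, whence $\GUC(F)\ge\PCC(f_d,f_{d'})$ and \cref{Martha}\eqref{Cyprian} applies. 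The corresponding unimodularized statements follow identically from the $+2$ refinements in \cref{Martha}.

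The routine content lies entirely in this bookkeeping; the only point demanding genuine care is the role of the cardinality hypothesis $|F|=|D|$ in part \eqref{Ophelia} and its unimodularized analogue. This condition is exactly what prevents a crosscorrelation lower bound from being vacuously inadmissible when two patterns happen to rotate to the same aperiodic sequence, so that their mutual correlation would be an excluded peak rather than a counted crosscorrelation — a degeneracy which, as noted in \cref{Una}, can occur only for finitely many small primes and is therefore irrelevant to the asymptotic conclusions of \cref{Penelope}.
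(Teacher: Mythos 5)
Your proof is correct and takes essentially the same route as the paper's: both reduce $\GUC$ to the individual $\PSL$ and $\PCC$ bounds of \cref{Martha} (with \cref{Greta} translating balance and orthogonality into conditions on $\ft{d}$), and both impose $|F|=|D|$ only in part \eqref{Ophelia}, precisely to rule out distinct patterns yielding coinciding instances whose crosscorrelation would be an excluded peak. Your explicit decomposition of $\GUC$ into a maximum of $\PSL$s and $\PCC$s and the inequality $\normone{\ft{d}}\normone{\ft{d'}}\le\max_{d''\in D}\normonesq{\ft{d''}}$ simply make visible bookkeeping that the paper leaves implicit.
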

\begin{proof}
Part \eqref{Theresa} follows from \cref{Martha}\eqref{Barbara} and \eqref{Orestes}.
Part \eqref{Bertrand} follows from \cref{Martha}\eqref{Ursula}.
Part \eqref{Ophelia} imposes the condition $|F|=|D|$ (resp., $|\uni{F}|=|D|$), because if $d$ and $d'$ are distinct patterns in $D$, then it is possible that their $p$-instances (resp., unimodularized $p$-instances) in $F$ (resp., $\uni{F}$) coincide, and then we should not be including their crosscorrelations when bounding $\GUC(F)$ (resp., $\GUC(\uni{F})$) from below.
Once this condition is imposed to prevent such coincidences, then the results follow from \cref{Martha}\eqref{Cyprian}.
\end{proof}
Part \eqref{Ophelia} of \cref{Evelyn} is complicated by the possibility that distinct cyclotomic patterns can produce the same $p$-instances (or unimodularized $p$-instances) for a prime $p$.
(For example, $d=(1,0)$ and $d'=(0,1)$ produce the length $p=3$ periodic sequences $f=(0,1,0)$ and $g=(0,0,1)$, respectively, and $f^{(1)}=g^{(2)}$.)
Fortunately, we can use \cref{Edna} show that $p$-instances and unimodularized $p$-instances from different cyclotomic patterns must be different if $p$ is sufficiently large.
\begin{lemma}\label{Egbert}
Let $n\in\Z^+$, let $d,d'$ be distinct cyclotomic patterns of index $n$, and for each prime $p$ with $p\equiv 1 \pmod{n}$, let $S_p$ (resp., $S'_p$) be the set of all $p$-instances of $d$ (resp., $d'$) and also all unimodularized $p$-instances of $d$ (resp., $d'$) if $d$ (resp., $d'$) is unimodular.
Then there is some $N$ such that for every prime $p$ with $p\equiv 1 \pmod{n}$ and $p \geq N$, the set $S_p\cap S'_p$ is empty.
\end{lemma}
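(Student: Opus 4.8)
The plan is to argue by contradiction: suppose $S_p\cap S'_p$ is nonempty for arbitrarily large primes $p\equiv 1\pmod n$, and for each such $p$ fix a common element $a=a_p$. Write $e=\ft{d}$ and $e'=\ft{d'}$; these are distinct character patterns, since the Fourier transform is injective and $d\neq d'$. Every $a\in S_p$ is a rotation of the length-$p$ periodic sequence $f$ from $d$ or of a unimodularization $\uni{f}$ of it, and likewise every $a\in S'_p$ arises from $g$ or $\uni{g}$ (the sequences from $d'$). The first invariant I would exploit is the squared norm $\AC_{a,a}(0)=\normtwosq{a}$: by \cref{Abigail} this is $(p-1)\normtwosq{e}$ for a non-unimodularized instance of $d$ and $(p-1)\normtwosq{e}+1$ for a unimodularized one, and analogously with $e'$ on the $d'$-side.

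First I would dispose of the case $\normtwosq{e}\neq\normtwosq{e'}$ by norms alone: the two possible values of $\normtwosq{a}$ coming from $d$ both lie within distance $1$ of $(p-1)\normtwosq{e}$, and those from $d'$ within distance $1$ of $(p-1)\normtwosq{e'}$, so the two value sets are separated by $(p-1)|\normtwosq{e}-\normtwosq{e'}|-1$, which is positive once $p$ is large, precluding a common element. Hence I may assume $\sigma^2:=\normtwosq{e}=\normtwosq{e'}$, with $\sigma>0$ (otherwise $e=e'=0$ and $d=d'$). The same comparison shows the type must match, since a unimodularized and a non-unimodularized instance differ in squared norm by exactly $1$; so on both sides $a$ is non-unimodularized, or on both sides unimodularized, and in either case $\normtwosq{a}=p\sigma^2+O(1)$ with a $p$-independent constant.

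The heart of the argument is the shift-$0$ crosscorrelation. Writing $a=\hat{f}^{(r)}=\hat{g}^{(r')}$ (hats denoting the chosen, possibly unimodularized, sequences), I would invoke part (ii) of \cref{Ulrich} to obtain $|\AC_{a,a}(0)-\AC_{f^{(r)},g^{(r')}}(0)|\leq 2$, trading the unimodularized sequences for the plain ones at an additive cost. Feeding this into \cref{Edna} at $s=0$ gives $\AC_{a,a}(0)=B_0+O(\sqrt{p}\log p)$, where $B_0=p\ip{e}{e'}$ if $r\equiv r'\pmod p$ and $B_0=p\,e_0\conj{e'_0}$ otherwise, the implied constant again $p$-independent. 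Comparing with $\AC_{a,a}(0)=p\sigma^2+O(1)$ and passing to a subsequence on which one of the two forms of $B_0$ occurs for infinitely many $p$, the coefficient of $p$ must agree in the limit, so either $\ip{e}{e'}=\sigma^2$ or $e_0\conj{e'_0}=\sigma^2$ exactly (the difference is $p$-independent yet $o(1)$). Crucially, since $\AC_{a,a}(0)$ is a real positive norm, the value is pinned to the real number $\sigma^2>0$, not merely to that modulus.

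Finally I would extract $e=e'$ from the equality case. In the first branch, $\ip{e}{e'}=\sigma^2=\normtwo{e}\normtwo{e'}$ is the equality case of Cauchy--Schwarz, so $e'=\zeta e$ with $|\zeta|=1$; then $\ip{e}{e'}=\conj{\zeta}\sigma^2=\sigma^2$ forces $\zeta=1$. In the second branch, $\sigma^2=e_0\conj{e'_0}\leq|e_0||e'_0|\leq\normtwo{e}\normtwo{e'}=\sigma^2$ forces $|e_0|=\normtwo{e}$ and $|e'_0|=\normtwo{e'}$, so $e$ and $e'$ are supported only at index $0$, and $e_0\conj{e'_0}=\sigma^2$ with $|e_0|=|e'_0|=\sigma$ then gives $e_0=e'_0$. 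Either way $e=e'$, hence $d=d'$ by injectivity of the Fourier transform, the desired contradiction. The main obstacle I anticipate is the bookkeeping in this equality-case analysis: one must use that the norm is real and positive, not just its modulus bound, to eliminate the phase-shifted patterns $e'=\zeta e$ (equivalently $d'=\zeta d$), and this is precisely where distinctness of $d$ and $d'$ is used beyond the coarse norm comparison.
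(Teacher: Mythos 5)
Your proof is correct, and while it runs on the same engine as the paper's --- the shift-zero correlation compared against \cref{Edna}, the at-most-$2$ cost of unimodularization, and the norm values from \cref{Abigail} --- your case organization is genuinely different. The paper splits on whether $d'=d\circ\pi$ for a permutation $\pi$ of $\Zn$: in the non-permutation case it counts multiplicities of a common value, getting the explicit threshold $N=n+2$; in the permutation case it may assert $\ft{d}_0=\ft{d'}_0$ and the strict inequality $|\ft{d}_0|<\normtwo{\ft{d}}$ (since neither pattern is constant), it disposes of $r\equiv r'\pmod{p}$ directly by comparing entries of the underlying periodic sequences, and it finishes with the single explicit inequality \eqref{Sally}; it also reproves the ``differ by at most $2$'' bound by hand because its setup allows mixed types, whereas your preliminary type-matching reduction makes \cref{Ulrich} directly applicable. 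You instead split on whether $\normtwosq{\ft{d}}=\normtwosq{\ft{d'}}$, which is exactly the dichotomy the shift-zero invariant detects, and then treat both branches of \cref{Edna}'s dichotomy uniformly: along a subsequence, the $p$-independent quantity $\ip{\ft{d}}{\ft{d'}}$ or $\ft{d}_0\conj{\ft{d'}_0}$ is forced to equal the common squared norm $\sigma^2$ exactly (phase included, since the left-hand side is a real norm), and the equality case of Cauchy--Schwarz, respectively the support-at-zero analysis, yields $\ft{d}=\ft{d'}$, contradicting $d\neq d'$. What your route buys is uniformity --- no permutation case, no separate treatment of $r\equiv r'$, and a clean elimination of phase-rotated pairs $d'=\zeta d$, which are not permutation-related and which the paper must catch in its counting case; what it costs is effectivity, since your $N$ arises from a subsequence contradiction rather than an explicit bound. (One immaterial slip: two sets each of radius $1$ about centers $(p-1)\normtwosq{\ft{d}}$ and $(p-1)\normtwosq{\ft{d'}}$ are separated by $(p-1)\left|\normtwosq{\ft{d}}-\normtwosq{\ft{d'}}\right|-2$, not $-1$; the conclusion is unchanged.)
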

\begin{proof}
First suppose that there is no permutation $\pi$ of $\Zn$ such that $d'=d\circ \pi$.
This means that there is some nonzero $a \in \C$ such that the number of $j \in \Zn$ with $d_j=a$ is less than the number of $j \in \Zn$ with $d'_j=a$.
Thus, a $p$-instance or unimodularized $p$-instance of $d'$ has at least $(p-1)/n-1$ more terms equal to $a$ than a $p$-instance or unimodularized $p$-instance of $d$, and so setting $N=n+2$ gives us what we seek.

Now suppose that there is a permutation $\pi$ of $\Zn$ such that $d'=d\circ \pi$.
This means that $\normtwo{d'}=\normtwo{d}$ (hence $\normtwo{\ft{d}}=\normtwo{\ft{d'}}$) and $\ft{d'}_0=\ft{d}_0$.
Since $d\not=d'$, this also means that neither $d$ nor $d'$ is of the form $(a,a,\ldots,a)$, and so neither $\ft{d}$ nor $\ft{d'}$ is of the form $(b,0,0,\ldots,0)$, and thus $|\ft{d'}_0|=|\ft{d}_0| < \normtwo{\ft{d}}=\normtwo{\ft{d'}}$.
Thus, there is some $N$ such that for every $p\geq N$, we have
\begin{equation}\label{Sally}
\frac{2+\normtwosq{\ft{d}}}{p}  + \normone{\ft{d}} \normone{\ft{d'}} \left(\frac{2}{\sqrt{p}} + \frac{4}{\pi \sqrt{p}} \log\left(\frac{4 p}{\pi}\right)\right)  < \normtwosq{\ft{d}}-|\ft{d}_0|^2.
\end{equation}
Let $p$ be a prime with $p\equiv 1 \pmod{n}$ and $p\geq N$, and suppose that $S_p \cap S'_p\not=\emptyset$ to show a contradiction.
Let $f \in S_p\cap S'_p$.
Let $g$ and $h$ be periodic sequences of length $p$ from $d$ and $d'$, respectively, and let $r, r' \in \Z$ be such that $f$ is the rotation by $r$ of either $g$ or a unimodularization of $g$, and such that $f$ is also the rotation by $r'$ of $h$ or a unimodularization of $h$.
If $r\equiv r'\pmod{p}$, then we would have $g_{j+r}=f_j=h_{j+r'}=h_{j+r}$ for all $j\in\{0,1,\ldots,p-1\}$ with $j\not\equiv -r \pmod{p}$, i.e., $g_k=h_k$ for all $k\in\Fpu$, which is impossible since $g$ and $h$ are derived from different cyclotomic patterns.
So we must have $r\not\equiv r' \pmod{p}$.
We have
\[
\AC_{f,f}(0)-\AC_{g^{(r)},h^{(r')}}(0) = \sum_{j=0}^{p-1} \left(|f_j|^2 - g_{j+r} \conj{h_{j+r'}}\right),
\]
and $f_j=g_{j+r}$ (resp., $f_j=h_{j+r'}$) for all $j\in\{0,1,\ldots,p-1\}$ with the possible exception of the one value with $j\equiv -r \pmod{p}$ (resp., $j \equiv -r' \pmod{p}$).
If we let $K$ be the set of integers in $\{0,1,\ldots,p-1\}$ such that $j\equiv -r \pmod{p}$ or $j\equiv -r' \pmod{p}$, then $|K|=2$ and
\[
\AC_{f,f}(0)-\AC_{g^{(r)},h^{(r')}}(0) = \sum_{j \in K} \left(|f_j|^2 - g_{j+r} \conj{h_{j+r'}}\right) = \sum_{j \in K} |f_j|^2,
\]
where the last equality uses the fact that $g$ and $h$ are periodic sequences derived from cyclotomic patterns.
When $j\equiv -r \pmod{p}$ (resp. $j\equiv -r' \pmod{p}$), then either $f_j=g_{j+r}=g_0=0$ (resp., $f_j=h_{j+r'}=0$) or else $|f_j|=1$ if $f$ was obtained by rotating a unimodularization of $g$ (resp., $h$).
So $|f_j|^2 \leq 1$ for all $j \in K$, and so $|\AC_{f,f}(0)-\AC_{g^{(r)},h^{(r')}}(0)| \leq |K|=2$.
\cref{Edna} (and the facts that $r\not\equiv r'\pmod{p}$ and $\ft{d}_0=\ft{d'}_0$) tells us that
\[
|\AC_{g^{(r)},h^{(r')}}(0)| \leq p |\ft{d}_0|^2 + \normone{\ft{d}} \normone{\ft{d'}} \left(2 \sqrt{p} + \frac{4}{\pi} \sqrt{p} \log\left(\frac{4 p}{\pi}\right)\right),
\]
and so
\begin{equation}\label{Desdemona}
|\AC_{f,f}(0)| \leq 2  + p |\ft{d}_0|^2 + \normone{\ft{d}} \normone{\ft{d'}} \left(2 \sqrt{p} + \frac{4}{\pi} \sqrt{p} \log\left(\frac{4 p}{\pi}\right)\right) .
\end{equation}
If $f$ is a $p$-instance of $d$ (resp., $d'$), then \cref{Abigail} shows that $\AC_{f,f}(0)=\normtwosq{f}=(p-1) \normtwosq{\ft{d}}$ (resp., $\AC_{f,f}(0)=\normtwosq{f}=(p-1) \normtwosq{\ft{d'}}$), but if $f$ is a unimodularized $p$-instance of $d$ (resp., $d'$), then $\AC_{f,f}(0)=1+(p-1) \normtwosq{\ft{d}}$ (resp., $\AC_{f,f}(0)=1+(p-1) \normtwosq{\ft{d'}}$), and since $\normtwosq{\ft{d}}=\normtwosq{\ft{d'}}$ we have, in any case, $|\AC_{f,f}(0)| \geq (p-1) \normtwosq{\ft{d}}$.
Combining this inequality transitively with \eqref{Desdemona} and rearranging, we see that we contradict \eqref{Sally}, and so in fact $S_p\cap S'_p=\emptyset$.
\end{proof}
The consequence of this last lemma is that, for sufficiently large primes, instances of cyclotomic plans contain one sequence for each pattern in the plan.
\begin{corollary}\label{Nora}
Let $D$ be a cyclotomic plan of index $n$.
There is some $N$ such that for every prime $p$ with $p\equiv 1 \pmod{n}$ and $p \geq N$, every $p$-instance of $D$ (and if $D$ is unimodular, every unimodularized $p$-instance of $D$) contains precisely $|D|$ sequences.
\end{corollary}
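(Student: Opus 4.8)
The plan is to deduce \cref{Nora} directly from \cref{Egbert} by taking a maximum over the finitely many pairs of distinct patterns in $D$. The crucial observation is that a $p$-instance (or unimodularized $p$-instance) of $D$ fails to have $|D|$ sequences precisely when two distinct patterns in $D$ give rise, after rotation (and possibly unimodularization), to one and the same aperiodic sequence; this is exactly the kind of coincidence that \cref{Egbert} rules out for all sufficiently large $p$.

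More concretely, first I would recall that a $p$-instance of $D$ is a codebook of the form $\{f_d^{(r_d)} : d \in D\}$, where for each $d \in D$ the sequence $f_d$ is the periodic sequence of length $p$ from $d$ and $r_d \in \Z$ is an advancement, and that a unimodularized $p$-instance has the same form with each $f_d$ replaced by a unimodularization. Such a codebook has exactly $|D|$ sequences if and only if the assignment $d \mapsto f_d^{(r_d)}$ (resp., its unimodularized analogue) is injective. Now for each unordered pair $\{d, d'\}$ of distinct patterns in $D$, \cref{Egbert} furnishes a threshold $N_{d,d'}$ such that whenever $p \equiv 1 \pmod{n}$ and $p \geq N_{d,d'}$, the set $S_p$ of all $p$-instances and unimodularized $p$-instances of $d$ is disjoint from the corresponding set $S'_p$ for $d'$.

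I would then set $N = \max N_{d,d'}$, the maximum taken over the finitely many pairs of distinct patterns of $D$ (and $N = 1$, say, when $|D| \leq 1$, in which case the conclusion is immediate). Fix a prime $p$ with $p \equiv 1 \pmod{n}$ and $p \geq N$, and suppose for contradiction that some $p$-instance $\{f_d^{(r_d)} : d \in D\}$ has fewer than $|D|$ sequences. Then $f_d^{(r_d)} = f_{d'}^{(r_{d'})}$ for some distinct $d, d' \in D$; this common aperiodic sequence lies in $S_p$ (being a $p$-instance of $d$) and also in $S'_p$ (being a $p$-instance of $d'$), contradicting $S_p \cap S'_p = \emptyset$. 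The identical argument applies to a unimodularized $p$-instance, using that \cref{Egbert} builds unimodularized instances into $S_p$ and $S'_p$ whenever the patterns are unimodular. Hence every $p$-instance, and (when $D$ is unimodular) every unimodularized $p$-instance, of $D$ has precisely $|D|$ sequences.

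There is no real analytic obstacle here, since \cref{Egbert} has already absorbed all of the character-sum estimates; the only points requiring care are the bookkeeping that translates ``fewer than $|D|$ sequences'' into membership of a single sequence in $S_p \cap S'_p$, and the observation that the threshold $N$ can be chosen uniformly because $D$ is finite and so contains only finitely many pairs of distinct patterns.
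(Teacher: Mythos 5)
Your proposal is correct and follows exactly the paper's argument: invoke \cref{Egbert} for each of the finitely many pairs of distinct patterns in $D$, take $N$ to be the maximum of the resulting thresholds $N_{d,d'}$, and note that a $p$-instance (or unimodularized $p$-instance) with fewer than $|D|$ sequences would exhibit a common element of $S_p \cap S'_p$ for some distinct pair, contradicting \cref{Egbert}. Your explicit handling of the trivial case $|D| \leq 1$ and the bookkeeping translating the sequence-count condition into a coincidence of instances are sound refinements of the same proof.
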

\begin{proof}
For each pair $d,d'$ of distinct cyclotomic patterns in $D$, \cref{Egbert} provides a number $N_{d,d'}$ such that whenever $p\geq N_{d,d'}$, no $p$-instance or unimodularized $p$-instance of $d$ can equal any $p$-instance or unimodularized $p$-instance of $d'$, so we set $N$ to be the maximum $N_{d,d'}$ over all distinct pairs $d,d' \in D$.
\end{proof}

\subsection{Balance and orthogonality regulate growth of $\GUC$}\label{Margaret}

We prove \cref{Penelope} in this section.
We first prove that balance and orthogonality of a cyclotomic plan implies well regulation of the growth of $\GUC$ for instances of that plan.
\begin{proposition}\label{Winston}
Let $D$ be a cyclotomic plan with $0\not\in D$.
Let $\calA$ be an infinite family of instances of $D$.
Then $\calA$ has well regulated growth of $\GUC$ if and only if $D$ is both balanced and orthogonal.
\end{proposition}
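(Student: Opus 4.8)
The plan is to read both directions off the quantitative bounds already assembled in \cref{Evelyn}, after first pinning down the relationship between $\SDC$ and the underlying prime. First I would set up the asymptotic dictionary. For a fixed prime $p$, a $p$-instance of $D$ is determined by choosing an advancement modulo $p$ for each of the $|D|$ patterns, so there are only finitely many $p$-instances; hence an infinite family $\calA$ must come from infinitely many distinct primes $p \equiv 1 \pmod{n}$, and the primes underlying $\calA$ are unbounded. By \cref{Abigail}, a $p$-instance $f$ derived from a pattern $d$ has $\AC_{f,f}(0) = \frac{p-1}{n}\normtwosq{d}$, so if $A \in \calA$ is the $p$-instance and $m = \min_{d \in D}\normtwosq{d}$ (which is positive, since $0 \notin D$), then $\SDC(A) = \frac{(p-1)m}{n}$. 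In particular $\{\SDC(A): A \in \calA\}$ is unbounded, so the notion of well regulated growth of $\GUC$ applies, and $\SDC(A)$ grows linearly in $p$ as $p \to \infty$ along $\calA$.

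For the forward implication, suppose $D$ is balanced and orthogonal. Then \cref{Evelyn}\eqref{Theresa} gives $\GUC(A) \leq C\left(2\sqrt{p} + \frac{4}{\pi}\sqrt{p}\log(4p/\pi)\right)$ with $C = \max_{d \in D}\normonesq{\ft{d}}$, so $\GUC(A)$ is bounded by a constant times $\sqrt{p}\log p$. Fixing any real $\epsilon > 0$ and using $\SDC(A) = \frac{(p-1)m}{n}$, the ratio
\[
\frac{\GUC(A)}{\SDC(A)^{1/2+\epsilon}} \leq \frac{C\left(2\sqrt{p} + \frac{4}{\pi}\sqrt{p}\log(4p/\pi)\right)}{\left(\frac{(p-1)m}{n}\right)^{1/2+\epsilon}}
\]
behaves like a constant times $(\log p)/p^{\epsilon}$, which tends to $0$ as $p \to \infty$. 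Since this holds for every $\epsilon > 0$, the family $\calA$ has well regulated growth of $\GUC$.

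For the converse I would argue the contrapositive with $\epsilon = 1/4$: if $D$ fails to be balanced or fails to be orthogonal, then $\calA$ does not have well regulated growth. If $D$ is not balanced, \cref{Evelyn}\eqref{Bertrand} supplies $d \in D$ with $\ft{d}_0 \neq 0$ and the lower bound $\GUC(A) \geq |\ft{d}_0|^2(p-1) - \normonesq{\ft{d}}\left(2\sqrt{p} + \frac{4}{\pi}\sqrt{p}\log(4p/\pi)\right)$, whose leading term is a nonzero multiple of $p$. If instead $D$ is not orthogonal, \cref{Evelyn}\eqref{Ophelia} gives distinct $d,d' \in D$ with $\ip{\ft{d}}{\ft{d'}} \neq 0$ and an analogous lower bound linear in $p$ with nonzero leading coefficient, valid whenever $|A| = |D|$. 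In either case the lower bound is eventually a positive quantity growing linearly in $p$, so along the infinite subfamily of large primes we get
\[
\frac{\GUC(A)}{\SDC(A)^{3/4}} \geq \frac{c\,p - O(\sqrt{p}\log p)}{\left(\frac{(p-1)m}{n}\right)^{3/4}},
\]
which grows like a constant times $p^{1/4}$ and hence tends to $\infty$. Thus the ratio does not tend to $0$ for $\epsilon = 1/4$, and $\calA$ lacks well regulated growth.

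The genuinely new content here is slight: the character-sum estimates behind \cref{Evelyn} do the real work, and what remains is bookkeeping. The one delicate point is the non-orthogonal case, where the lower bound of \cref{Evelyn}\eqref{Ophelia} requires $|A| = |D|$, i.e.\ that distinct patterns not collapse to the same aperiodic sequence after rotation. This is exactly what \cref{Nora} guarantees for all sufficiently large $p$; since $\calA$ meets infinitely many such primes, restricting attention to that infinite subfamily is enough to defeat well regulated growth, and I would be careful to phrase the argument so that it only needs this subfamily rather than the whole of $\calA$. I would also keep the constant $m = \min_{d \in D}\normtwosq{d}$ explicit throughout, to ensure the estimate $\SDC(A) = \Theta(p)$ genuinely uses $0 \notin D$ and does not silently rely on a pattern degenerating.
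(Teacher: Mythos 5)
Your proposal is correct and follows essentially the same route as the paper's proof: both directions are read off \cref{Evelyn} (upper bound in the balanced--orthogonal case, linear-in-$p$ lower bounds with exponent $3/4$, i.e.\ $\epsilon=1/4$, otherwise), with \cref{Abigail} giving $\SDC=\Theta(p)$ via $0\notin D$ and \cref{Nora} supplying the $|A|=|D|$ hypothesis needed for the non-orthogonality lower bound on the cofinite subfamily of large primes. Your handling of the restriction to that subfamily matches the paper's reduction to $J=\{\iota: p_\iota\geq N\}$, and your explicit normalization $\SDC(A)=(p-1)m/n$ is the correct form of the paper's equation there.
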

\begin{proof}
Write $\calA=\{A_\iota\}_{\iota \in I}$.
For each $\iota \in I$, let $p_\iota$ denote a prime such that $A_\iota$ is an $p_\iota$-instance of $D$.
Note that $\{p_\iota: \iota \in I\}$ is unbounded since there are only finitely many distinct $p$-instances of $D$ for any given $p$ (since there at most $p$ distinct rotations of any given periodic sequence of length $p$).
\cref{Abigail} shows that
\begin{equation}\label{Justin}
\SDC(A_\iota)=\frac{p_\iota-1}{|D|} \min_{d \in D} \normtwosq{d},
\end{equation}
and since $\min_{d \in D} \normtwosq{d}$ is positive (because $0\not\in D$), taking the limit as $\SDC(A_\iota)\to\infty$ is the same as taking the limit as $p_\iota\to\infty$.

\cref{Nora} furnishes an $N$ such that whenever $p\geq N$, a $p$-instance of $D$ must contain precisely $|D|$ sequences.
We let $J=\{\iota \in I: p_\iota \geq N\}$, and then it suffices to prove that the family $\{A_\iota\}_{\iota \in J}$ has well regulated growth of $\GUC$, because \eqref{Justin} shows that $\{\SDC(A_\iota): \iota \in I\smallsetminus J\}$ is a bounded set (with upper bound $(N-2) \min_{d \in D} \normtwosq{d}/|D|$), so restricting from $I$ to $J$ does not have any effect when we take the limit as $\SDC(A_\iota)$ tends to infinity.
Restricting from $I$ to $J$ allows us to apply all three parts of \cref{Evelyn}.

If $D$ is not balanced, then \cref{Evelyn}\eqref{Bertrand} says that there is some $d\in D$ with $\ft{d}_0\not=0$, and for each $\iota \in J$,
\[
\GUC(A_\iota) \geq |\ft{d}_0|^2 (p_\iota-1)- \normonesq{\ft{d}} \left(2 \sqrt{p_\iota} + \frac{4}{\pi} \sqrt{p_\iota} \log\left(\frac{4 p_\iota}{\pi}\right)\right).
\]
Since $|\ft{d}_0|^2$ and $\min_{d \in D} \normtwosq{d}$ are positive, this inequality and \eqref{Justin} show that $\GUC(A_\iota)/\SDC(A_\iota)^{3/4}$ tends to infinity as $\SDC(A_\iota)\to\infty$, so that $\{A_\iota\}_{\iota \in J}$ does not have well regulated growth of $\GUC$.

If $D$ is not orthogonal, then by \cref{Evelyn}\eqref{Ophelia} there are distinct $d,d' \in D$ with $\ip{\ft{d}}{\ft{d'}}\not=0$, and for each $\iota \in J$,
\begin{align*}
\GUC(A_\iota) \geq \ip{\ft{d}}{\ft{d'}} \frac{p_\iota}{2} - \normone{\ft{d}}\normone{\ft{d'}} \left(2 \sqrt{p_\iota} + \frac{4}{\pi} \sqrt{p_\iota} \log\left(\frac{4 p_\iota}{\pi}\right)\right).
\end{align*}
Since $\ip{\ft{d}}{\ft{d'}}\not=0$ and $\min_{d \in D} \normtwosq{d}$ are positive, this inequality and \eqref{Justin} show that $\GUC(A_\iota)/\SDC(A_\iota)^{3/4}$ tends to infinity as $\SDC(A_\iota)\to\infty$, so that $\{A_\iota\}_{\iota \in J}$ does not have well regulated growth of $\GUC$.

If $D$ is balanced and orthogonal, then for $\iota \in J$, \cref{Evelyn}\eqref{Theresa} shows that
\[
\GUC(A_\iota) \leq \max_{d \in D} \normonesq{\ft{d}} \left(2 \sqrt{p_\iota} + \frac{4}{\pi} \sqrt{p_\iota} \log\left(\frac{4 p_\iota}{\pi}\right)\right).
\]
Since $\min_{d \in D} \normtwosq{d}$ is positive, this inequality and \eqref{Justin} show that if $\epsilon>0$, then $\GUC(A_\iota)/\SDC(A_\iota)^{1/2+\epsilon}$ tends to zero as $\SDC(A_\iota)\to\infty$, so that $\{A_\iota\}_{\iota \in J}$ has well regulated growth of $\GUC$.
\end{proof}
We now restate and prove our first main result, \cref{Penelope}, which is the analogue of \cref{Winston} for unimodular instances of a cyclotomic plan.
\begin{theorem}\label{Samuel}
Let $D$ be a unimodular cyclotomic plan.
Let $\calA$ be a family of unimodularized instances of $D$ such that the $\{\len(A): A \in \calA\}$ is unbounded.
Then $\calA$ has well regulated growth of $\GUC$ if and only if $D$ is both balanced and orthogonal.
\end{theorem}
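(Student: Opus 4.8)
The plan is to reduce \cref{Samuel} to its already-established non-unimodularized counterpart \cref{Winston}, using the results of \cref{Estelle} to transfer the asymptotic behavior of $\GUC$ across unimodularization. First I would record a structural observation: because $D$ is a \emph{unimodular} cyclotomic plan, every pattern $d \in D$ has all entries of modulus $1$, so $d \neq 0$; hence $0 \notin D$ and $D$ is unimodularizable, which is exactly what is needed to invoke \cref{Winston} at the end.

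Next I would set up the bookkeeping. Write $\calA = \{\uni{A}_\iota\}_{\iota \in I}$, where each $\uni{A}_\iota$ is a unimodularized $p_\iota$-instance of $D$ for some prime $p_\iota \equiv 1 \pmod{n}$. By definition $\uni{A}_\iota$ is a rotation of a unimodularization of the periodic codebook $P_\iota$ of length $p_\iota$ derived from $D$, so for each $f \in P_\iota$ I can fix a unimodularization $\uni{f}$ and an advancement $r_f$ with $\uni{A}_\iota = \{\uni{f}^{(r_f)} : f \in P_\iota\}$. Applying the \emph{same} advancements to the non-unimodularized sequences produces the $p_\iota$-instance $A_\iota = \{f^{(r_f)} : f \in P_\iota\}$ of $D$. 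Since $\len(\uni{A}_\iota) = p_\iota$, the hypothesis that $\{\len(A): A \in \calA\}$ is unbounded says precisely that $\{p_\iota : \iota \in I\}$ is unbounded.

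Then I would control cardinalities and pass to the limit. \cref{Nora} supplies an $N$ so that for every prime $p \geq N$ with $p \equiv 1 \pmod{n}$, every $p$-instance and every unimodularized $p$-instance of $D$ contains exactly $|D|$ sequences. Restricting to $J = \{\iota \in I : p_\iota \geq N\}$ therefore guarantees $|A_\iota| = |\uni{A}_\iota| = |P_\iota| = |D|$, and this restriction is harmless for the asymptotics because, by \cref{Melanie}, $\SDC(\uni{A}_\iota) = p_\iota$ and $\SDC(A_\iota) = p_\iota - 1$, so the discarded indices carry bounded $\SDC$ and do not affect the limit as $\SDC$ tends to infinity. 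Now \cref{Melanie}, applied to $\{P_\iota\}_{\iota \in J}$ with these choices of unimodularizations and advancements, shows that $\{A_\iota\}_{\iota \in J}$ has well regulated growth of $\GUC$ if and only if $\{\uni{A}_\iota\}_{\iota \in J}$ does. Finally, $\{A_\iota\}_{\iota \in J}$ is an infinite family of (non-unimodularized) instances of $D$ with $0 \notin D$ and unbounded $\SDC$, so \cref{Winston} equates its well regulated growth of $\GUC$ with $D$ being both balanced and orthogonal. Chaining the two equivalences yields the theorem.

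Since the argument is a reduction, the main obstacle is conceptual alignment rather than computation: one must ensure that a single choice of unimodularization $\uni{f}$ and advancement $r_f$ per sequence $f \in P_\iota$ simultaneously realizes the given $\uni{A}_\iota$ and a legitimate partner $A_\iota$ to which both \cref{Melanie} and \cref{Winston} apply. The one genuine hypothesis that requires care is the cardinality condition $|A_\iota| = |\uni{A}_\iota| = |P_\iota|$ demanded by \cref{Melanie}; this is exactly where the coincidence phenomenon noted after \cref{Evelyn} (distinct patterns producing the same rotated instance for small $p$) could interfere, and it is precisely what passing to $p_\iota \geq N$ via \cref{Nora} resolves.
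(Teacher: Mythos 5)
Your proposal is correct and follows essentially the same route as the paper's own proof: both reduce to \cref{Winston} by constructing the companion non-unimodularized instances $\{f^{(r_f)}: f \in P_\iota\}$ from the same unimodularizations and advancements, restrict via \cref{Nora} to primes large enough that all cardinalities equal $|D|$ (which licenses \cref{Melanie}), and then chain the two equivalences. The only differences are notational (you call the given family $\uni{A}_\iota$ and the companion $A_\iota$, where the paper uses $A_\iota$ and $B_\iota$), and you correctly identify the cardinality hypothesis of \cref{Melanie} as the one point requiring care.
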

\begin{proof}
Write $\calA=\{A_\iota\}_{\iota \in I}$.
For each $\iota\in I$, let $p_\iota=\len(A_\iota)$, let $P_\iota$ be the periodic codebook of length $p_\iota$ derived from $D$, and for each $f \in P_\iota$, let $\uni{f}$ be a unimodularization of $f$ and let $r_f$ be an integer such that $\{\uni{f}^{(r_f)}: f \in P_\iota\}=A_\iota$, and then set $\uni{P}_\iota=\{\uni{f}: f \in P_\iota\}$ and $B_\iota=\{f^{(r_p)}: f \in P_\iota\}$.
By \cref{Nora}, there is some $N$ such that whenever $p_\iota \geq N$, we have $|A_\iota|=|B_\iota|=|D|$, and we set $J=\{\iota \in I: p_\iota \geq N\}$.
Since well regulated growth of $\GUC$ for $\{A_\iota\}_{\iota \in I}$ concerns the limit as $\SDC(A_\iota)=\len(A_\iota)=p_\iota$ tends to infinity, it suffices for us to prove that $\{A_\iota\}_{\iota \in J}$ has well regulated growth of $\GUC$.
Then $\{B_\iota\}_{\iota \in J}$ must be an infinite family of instances of $D$ because $\{\len(A_\iota): \iota \in J\}=\{p_\iota: \iota \in J\}$ is unbounded.
The unimodularity of $D$ implies that $0 \not\in D$, so we may apply \cref{Winston} to see that $\{B_\iota\}_{\iota \in J}$ has well regulated growth of $\GUC$ if and only if $D$ is both balanced and orthogonal.
And \cref{Melanie} (with $\{B_\iota\}_{\iota \in J}$ and $\{A_\iota\}_{\iota \in J}$ here taking the respective roles of $\{A_\iota\}_{\iota\in I}$ and $\{\uni{A}_\iota\}_{\iota\in I}$ in the corollary) shows that $\{B_\iota\}_{\iota \in J}$ has well regulated growth of $\GUC$ if and only if $\{A_\iota\}_{\iota \in J}$ does.
\end{proof}

\section{Demerit factor of codebooks from cyclotomic plans}\label{Genevieve}

This section is dedicated to proving \cref{Mary}.
\cref{Irving} introduces periodic correlation, which has a profound connection with aperiodic correlation, which will become manifest in results like \cref{Adam} and \cref{Aaron}.
In \cref{Katherine} we use cyclotomic numbers to compute periodic correlation for sequences from cyclotomic patterns.
These results are used in \cref{Jane} to approximate aperiodic demerit factors for instances of cyclotomic patterns.
In \cref{Madeleine} we specifically consider aperiodic demerit factors for instances of Hadamard plans, and prove \cref{Mary} as a corollary of \cref{Victor}, which allows for more freedom than \cref{Mary} as to how one rotates sequences, but at the expense of a restriction on the primes used to make instances.

\subsection{Periodic correlation}\label{Irving}

There is a periodic version of correlation, which we shall use to elucidate facts about the aperiodic correlation.
If $\ell \in \Z^+$, $f$ and $g$ are periodic sequences of length $\ell$, and $s \in \Zell$, then the {\it periodic crosscorrelation of $f$ with $g$ at shift $s$} is defined to be
\[
\PC_{f,g}(s) = \sum_{j \in \Zell} f_{j+s} \conj{g_j}.
\]
If $s \in \Z$, we use the convention that $\PC_{f,g}(s)$ means $\PC_{f,g}(s+\ell\Z)$; here, $s+\ell\Z$ is the reduction of $s$ modulo $\ell$.
Observe that the periodic crosscorrelation of two periodic sequences of length $\ell$ is deducible the aperiodic crosscorrelation: for every $r, s \in \Z$ with $0 \leq s < \ell$, we have
\[
\PC_{f,g}(s)= \AC_{f^{(r)},g^{(r)}}(s) + \AC_{f^{(r)},g^{(r)}}(s-\ell).
\]
In particular, note that $\PC_{f,g}(0)=\AC_{f^{(r)},g^{(r)}}(0)$ since $\AC_{f^{(r)},g^{(r)}}(-\ell)=0$ because $f^{(r)}_{j-\ell}$ and $g^{(r)}_j$ cannot simultaneously have their indices in the set $\{0,1,\ldots,\ell-1\}$ upon which $f^{(r)}$ and $g^{(r)}$ are supported.
The periodic correlation of sequences at shift $0$ is the same as their inner product, and indeed $\PC_{f,g}(0)=\ip{f}{g}=\ip{f^{(r)}}{g^{(r)}}=\AC_{f^{(r)},g^{(r)}}(0)$ for every $r \in \Z$.
The {\it periodic autocorrelation of $f$ at shift $s$} is the periodic crosscorrelation of $f$ with itself at shift $s$, that is, $\PC_{f,f}(s)$, and note that $\PC_{f,f}(0)=\normtwosq{f}=\normtwosq{f^{(r)}}=\AC_{f^{(r)},f^{(r)}}(0)$ for every $r \in \Z$.

We have a {\it periodic crosscorrelation demerit factor of $f$ with $g$} for two nonzero periodic sequences, $f$ and $g$, of the same length $\ell$,
\[
\PCDF(f,g)=\frac{\sum_{s \in \Zell} |\PC_{f,g}(s)|^2}{\PC_{f,f}(0) \PC_{g,g}(0)},
\]
and a {\it periodic autocorrelation demerit factor of $f$} for $f$ a nonzero periodic sequence of length $\ell$,
\[
\PADF(f)=\frac{\sums{s \in \Zell \\ s\not=0} |\PC_{f,f}(s)|^2}{\PC_{f,f}(0)^2} = \PCDF(f,f)-1,
\]
and one can define periodic merit factors as the reciprocals of periodic demerit factors (when they are nonzero).
For a periodic codebook $F$, we define the {\it periodic (crosscorrelation) demerit factor of $F$} to be
\[
\PCDF(F)=\frac{1}{|F|^2} \sum_{f,g \in F} \PCDF(f,g).
\]

We prove an upper bound on the periodic crosscorrelation demerit factor of a sequence.
\begin{lemma}\label{Raymond}
If $f$ and $g$ are periodic sequences of length $\ell$, then we have  $\PCDF(f,g) \leq \ell$.
\end{lemma}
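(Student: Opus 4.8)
The plan is to pass to the frequency domain using the Fourier transform of \cref{Marlene}, where periodic crosscorrelation becomes pointwise multiplication and the whole bound collapses to an elementary inequality between nonnegative numbers.

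First I would set $a$ to be the periodic sequence $s\mapsto \PC_{f,g}(s)$ and compute its Fourier transform. Substituting $\PC_{f,g}(s)=\sum_{j\in\Zell} f_{j+s}\conj{g_j}$ into the definition $\ft{a}_t = \ell^{-1}\sum_{s\in\Zell}\exp(-2\pi i ts/\ell)\,a_s$ and reindexing the inner sum with $m=j+s$ factors the double sum, yielding
\[
\ft{a}_t = \ell\,\ft{f}_t\,\conj{\ft{g}_t}
\]
for every $t \in \Zell$. The only care needed here is tracking the normalization $\ell^{-1}$ and the complex conjugation that turns the $\conj{g_j}$ factor into $\conj{\ft{g}_t}$; this is the step where scaling and sign errors are easiest to make, so I would carry it out explicitly.

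Next I would apply the $l^2$ isometry property recorded in \cref{Marlene}, namely $\ip{a}{a}=\ell\,\ip{\ft{a}}{\ft{a}}$, to this sequence $a$. Combined with the displayed identity this gives
\[
\sum_{s\in\Zell} |\PC_{f,g}(s)|^2 = \ell \sum_{t\in\Zell} |\ft{a}_t|^2 = \ell^3 \sum_{t\in\Zell} |\ft{f}_t|^2\,|\ft{g}_t|^2 .
\]
For the denominator I would use $\PC_{f,f}(0)=\normtwosq{f}=\ip{f}{f}=\ell\sum_{t} |\ft{f}_t|^2$ and the analogous expression for $g$, so that $\PC_{f,f}(0)\,\PC_{g,g}(0) = \ell^2 \bigl(\sum_{t} |\ft{f}_t|^2\bigr)\bigl(\sum_{t} |\ft{g}_t|^2\bigr)$. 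Writing $u_t=|\ft{f}_t|^2\ge 0$ and $v_t=|\ft{g}_t|^2\ge 0$, the desired bound $\PCDF(f,g)\le\ell$ reduces to
\[
\sum_{t} u_t v_t \le \Bigl(\sum_{t} u_t\Bigr)\Bigl(\sum_{t} v_t\Bigr).
\]

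This final inequality is immediate: expanding the right-hand side as $\sum_{t,t'} u_t v_{t'}$, every term is nonnegative and the left-hand side is exactly the subsum over $t=t'$. Hence there is no substantial obstacle; the result is a short Fourier-domain calculation, and the only place demanding genuine attention is the convolution-theorem computation of $\ft{a}$ together with the bookkeeping of the powers of $\ell$. (One could instead phrase the argument through the isomorphism of periodic sequences of length $\ell$ with $\quotring{\ell}$, since the Fourier transform diagonalizes multiplication there, but the direct transform computation is the most economical route.)
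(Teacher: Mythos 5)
Your proof is correct, and I verified the two places you flagged as delicate: the convolution identity does come out as $\ft{a}_t=\ell\,\ft{f}_t\conj{\ft{g}_t}$ (reindexing $m=j+s$ splits the double sum into $\ell^{-1}\cdot\ell\ft{f}_t\cdot\ell\conj{\ft{g}_t}$), and the powers of $\ell$ in $\sum_{s}|\PC_{f,g}(s)|^2=\ell^3\sum_t|\ft{f}_t|^2|\ft{g}_t|^2$ and $\PC_{f,f}(0)\PC_{g,g}(0)=\ell^2\bigl(\sum_t|\ft{f}_t|^2\bigr)\bigl(\sum_t|\ft{g}_t|^2\bigr)$ are right. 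But your route is genuinely different from the paper's. The paper stays in the time domain: each $\PC_{f,g}(s)$ is the inner product of $f$ with a cyclic shift of $g$, so Cauchy--Schwarz gives $|\PC_{f,g}(s)|\leq\normtwo{f}\normtwo{g}$ uniformly in $s$, and summing over the $\ell$ shifts gives $\sum_s|\PC_{f,g}(s)|^2\leq\ell\,\normtwosq{f}\normtwosq{g}$, which is the claim after dividing by $\PC_{f,f}(0)\PC_{g,g}(0)=\normtwosq{f}\normtwosq{g}$ --- three lines and one inequality. Your argument first establishes an \emph{exact} Parseval identity for the sum of squared correlations (a Fourier-side restatement of the paper's identity \eqref{Hubert}, $\sum_s|\PC_{f,g}(s)|^2=\normtwosq{fg}$ in $\quotring{\ell}$, which you correctly note in your closing parenthetical) and only then discards the off-diagonal terms $u_t v_{t'}$, $t\neq t'$. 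What the paper's route buys is brevity; what yours buys is sharper information: the exact formula itself, and a clean equality characterization --- equality in $\PCDF(f,g)\leq\ell$ forces $u_t v_{t'}=0$ for all $t\neq t'$, so for nonzero $f,g$ both transforms are supported on one common frequency, i.e., $f$ and $g$ are nonzero scalar multiples of a single character $j\mapsto\exp(2\pi i t_0 j/\ell)$ --- which the worst-case-per-shift bound obscures. One shared pedantic point: $\PCDF(f,g)$ is defined in the paper only for nonzero $f$ and $g$, an assumption both proofs use implicitly through the nonvanishing denominator.
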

\begin{proof}
For each $s \in \Zell$, the correlation $\PC_{f,g}(s)$ is the inner product of $f$ and some sequence $h$ which is a cyclically shifted version of $g$.
Thus, by the Cauchy--Schwarz inequality, we have $|\PC_{f,g}(s)| \leq \normtwo{f} \normtwo{h}=\normtwo{f}\normtwo{g}$.
Thus, $\sum_{s \in \Zell} |\PC_{f,g}(s)|^2 \leq \ell \normtwosq{f} \normtwosq{g}$, and since $\PC_{f,f}(0)=\normtwosq{f}$ and $\PC_{g,g}(0)=\normtwosq{g}$, this means that $\PCDF(f,g) \leq \ell$.
\end{proof}

We identify periodic sequences of length $\ell$ with elements of the quotient ring $\quotring{\ell}$, where $z$ is an indeterminate and $(z^\ell-1)$ is the ideal in $\C[z]$ generated by $z^\ell-1$, and we let $y=z+(z^\ell-1)$.
Since $y$ is an element of order $\ell$ in $\quotring{\ell}$, we may write $y^j$ when $j \in \Zell$ without ambiguity, and every element $f \in \quotring{\ell}$ can be written uniquely as $f=\sum_{j \in \Zell} f_j y^j$, where each $f_j \in \C$; we call this the {\it canonical representation of $f$} and we identify it with the periodic sequence $j\mapsto f_j$ (i.e., $(f_0,f_1,\ldots,f_{\ell-1})$).
Thus, if $h\in\quotring{\ell}$ and $j\in\Zell$, the notation $h_j$ means the coefficient of $y^j$ in the canonical representation of $h$.
If $f=\sum_{j \in \Zell} f_j y^j \in \quotring{\ell}$, then we set $\conj{f}=\sum_{j \in \Zell} \conj{f_j} y^{-j}$, and note that $f \mapsto \conj{f}$ is an involutional automorphism of our ring $\quotring{\ell}$.
Then it is not difficult to show that if $f,g \in \quotring{\ell}$ represent periodic sequences of length $\ell$, then we have
\begin{equation}\label{Isidore}
f \conj{g} = \sum_{s \in \Zell} \PC_{f,g}(s) y^s,
\end{equation}
and in particular, this means that $\ip{f}{g}=(f\conj{g})_0$, and so $\PC_{f,f}(0)=\normtwosq{f}=(f\conj{f})_0$.
Thus,
\begin{equation}\label{Hubert}
\sum_{s \in \Zell} |\PC_{f,g}(s)|^2 = \normtwosq{f\conj{g}} = (f\conj{f}g\conj{g})_0=\normtwosq{f g},
\end{equation}
and so if $f,g\not=0$, we have
\[
\PCDF(f,g)=\frac{\normtwosq{f g}}{\normtwosq{f} \normtwosq{g}},
\]
and
\[
\PADF(f)=\frac{\normtwosq{f^2}}{\normtwo{f}^4}-1.
\]

If $F$ is a uniform-length periodic codebook in which every sequence has the same $l^2$ norm, then one can calculate its periodic crosscorrelation demerit factor entirely from autocorrelation values.
\begin{lemma}\label{Bruno}
Let $F$ be a periodic codebook of length $\ell$.
Then
\[
\sum_{f,g \in F} \sum_{s \in \Zell} |\PC_{f,g}(s)|^2 = \sum_{s \in \Zell} \left|\sum_{f \in F} \PC_{f,f}(s) \right|^2,
\]
and furthermore, if there is some nonzero $C$ such that $\PC_{f,f}(0)=C$ for every $f \in F$, then
\[
\PCDF(F)=1+\frac{1}{|F|^2 C^2} \sums{s \in \Zell\\ s\not=0} \left|\sum_{f \in F} \PC_{f,f}(s) \right|^2.
\]
\end{lemma}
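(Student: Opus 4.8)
The plan is to work entirely inside the ring $\quotring{\ell}$, exploiting identities \eqref{Isidore} and \eqref{Hubert} together with the commutativity of the ring and the conjugation involution $f\mapsto\conj{f}$, so that the double sum collapses to the squared $l^2$ norm of a single self-conjugate element.

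First I would record the pointwise fact supplied by \eqref{Hubert}: for any two periodic sequences $f,g$ of length $\ell$, we have $\sum_{s\in\Zell}|\PC_{f,g}(s)|^2=(f\conj f g\conj g)_0$, where $(\cdot)_0$ denotes extraction of the coefficient of $y^0$ in the canonical representation. Summing this over all ordered pairs $(f,g)\in F\times F$ and using the linearity of coefficient extraction, the left-hand side of the first claimed identity becomes $\big(h\,h\big)_0$, where $h=\sum_{f\in F}f\conj f$.

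The key step is to identify $(h\,h)_0$ with $\sum_{s\in\Zell}\big|\sum_{f\in F}\PC_{f,f}(s)\big|^2$. For this I would first note, from \eqref{Isidore}, that the $y^s$-coefficient of $h$ is exactly $a_s=\sum_{f\in F}\PC_{f,f}(s)$. Then I would observe that $h$ is self-conjugate: since $\quotring{\ell}$ is commutative, $\conj{f\conj f}=\conj f\,f=f\conj f$ for each $f$, so $\conj h=h$. Consequently $(h\,h)_0=(h\conj h)_0=\normtwosq{h}=\sum_{s\in\Zell}|a_s|^2$, which is precisely the right-hand side. (Equivalently, one checks directly that $a_{-s}=\conj{a_s}$, using $\PC_{f,f}(-s)=\conj{\PC_{f,f}(s)}$, and applies $(h\,h)_0=\sum_s a_s a_{-s}$.) This self-conjugacy observation is the only genuine content of the argument; everything else is bookkeeping, so I expect no real obstacle here.

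For the second assertion, I would expand the definition of $\PCDF(F)$ and substitute $\PC_{f,f}(0)=\PC_{g,g}(0)=C$ into the denominator of each summand, pulling the constant $1/C^2$ out of the double sum to get $\PCDF(F)=\frac{1}{|F|^2C^2}\sum_{f,g\in F}\sum_{s\in\Zell}|\PC_{f,g}(s)|^2$. Applying the first identity rewrites the numerator as $\sum_{s\in\Zell}|a_s|^2$, and then isolating the $s=0$ term, where $a_0=\sum_{f\in F}\PC_{f,f}(0)=|F|C$ gives $|a_0|^2=|F|^2C^2$, yields the stated formula $\PCDF(F)=1+\frac{1}{|F|^2C^2}\sums{s\in\Zell\\ s\neq 0}|a_s|^2$.
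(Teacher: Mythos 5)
Your proof is correct and takes essentially the same route as the paper's: both use \eqref{Hubert} to rewrite the double sum as the constant coefficient of a product involving $h=\sum_{f\in F}f\conj{f}$, identify that coefficient with $\normtwosq{h}$ (your explicit self-conjugacy observation $\conj{h}=h$ is precisely what the paper uses implicitly when it writes the product as $h\,\conj{h}$), and then read off the coefficients of $h$ via \eqref{Isidore}. Your treatment of the second claim --- substituting $C$ into the denominators and separating the $s=0$ term, where $a_0=|F|C$ --- is also exactly the paper's argument.
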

\begin{proof}
We have
\begin{align*}
\sum_{f,g \in F} \sum_{s \in \Zell} |\PC_{f,g}(s)|^2
& = \sum_{f,g \in F} (f\conj{f} g\conj{g})_0 && \text{from \eqref{Hubert}}\\
& = \left(\left(\sum_{f \in F} f\conj{f}\right) \conj{\left(\sum_{g \in F} g\conj{g}\right)}\right)_0 \\
& = \Normtwosq{\sum_{f \in F} f\conj{f}} \\
& = \Normtwosq{\sum_{f \in F} \sum_{s \in \Zell} \PC_{f,f}(s) y^s} && \text{from \eqref{Isidore}} \\
& = \sum_{s \in \Zell} \left|\sum_{f \in F} \PC_{f,f}(s)\right|^2,
\end{align*}
which proves the first claim.
If there is some $C\not=0$ such that $\PC_{f,f}(0)=C$ for every $f \in F$, then divide this equation through by $C^2 |F|^2$ and separate out the $s=0$ term from the sum to obtain the second claim.
\end{proof}
A {\it periodically complementary} codebook is uniform-length periodic codebook $F$ such that $\sum_{f \in F} \PC_{f,f}(s)=0$ for all nonzero $s$.
Our last result shows that among the uniform-length periodic codebooks in which all sequences have the same $l^2$ norm (these include all the uniform-length, unimodular periodic codebooks), the periodically complementary ones are the ones with smallest $\PCDF$.
\begin{corollary}\label{Milton}
Let $F$ be a uniform-length periodic codebook in which all the sequences have the same periodic autocorrelation at shift $0$.
Then we have $\PCDF(F)\geq 1$, with equality if and only if $F$ is periodically complementary.
\end{corollary}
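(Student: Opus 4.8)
The plan is to derive \cref{Milton} directly from \cref{Bruno} by recognizing that the nonnegative sum appearing in the formula from \cref{Bruno} is exactly what controls whether $\PCDF(F)$ exceeds $1$. The hypothesis of \cref{Milton} is that all sequences in $F$ share a common periodic autocorrelation at shift $0$; writing this common value as $C=\PC_{f,f}(0)$ for every $f\in F$, I first note that $C=\normtwosq{f}>0$ for each $f$, since $F$ is a codebook of nonzero periodic sequences (being uniform-length), so in particular $C\neq 0$. This is precisely the nonvanishing condition needed to invoke the second formula of \cref{Bruno}.

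With that formula in hand, \cref{Bruno} gives
\[
\PCDF(F)=1+\frac{1}{|F|^2 C^2} \sums{s \in \Zell\\ s\not=0} \left|\sum_{f \in F} \PC_{f,f}(s) \right|^2.
\]
The second term is a sum of squared magnitudes divided by the positive quantity $|F|^2 C^2$, hence it is nonnegative. This immediately yields $\PCDF(F)\geq 1$, which is the first assertion.

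For the equality characterization, I observe that the nonnegative second term vanishes if and only if every summand vanishes, i.e. if and only if $\sum_{f\in F}\PC_{f,f}(s)=0$ for every nonzero $s\in\Zell$. By the definition of \emph{periodically complementary} given just before the statement, this condition is exactly the assertion that $F$ is periodically complementary. Thus $\PCDF(F)=1$ if and only if $F$ is periodically complementary, completing the proof.

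The argument is essentially a one-line deduction once the hypotheses are matched to those of \cref{Bruno}, so there is no serious obstacle. The only point requiring a moment of care is verifying that the common shift-$0$ autocorrelation $C$ is nonzero, which is needed to apply the second part of \cref{Bruno}; this follows because $\PC_{f,f}(0)=\normtwosq{f}$ and the sequences in a codebook are nonzero. Everything else is routine nonnegativity of a sum of squares.
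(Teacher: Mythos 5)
Your proof is correct and takes exactly the paper's route: the paper states \cref{Milton} without a written proof precisely because it is this immediate consequence of the second formula in \cref{Bruno}, with the nonnegativity of the sum of squared magnitudes giving $\PCDF(F)\geq 1$ and its vanishing being, term by term, the definition of a periodically complementary codebook. One small quibble that does not affect the argument: being uniform-length does not by itself make the sequences nonzero; what guarantees $C\neq 0$ is that $\PCDF(F)$ is only defined when every sequence in $F$ is nonzero, whence $C=\PC_{f,f}(0)=\normtwosq{f}>0$.
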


\subsection{Periodic correlation of sequences from cyclotomic patterns}\label{Katherine}

To compute periodic correlations of sequences from cyclotomic patterns, we use cyclotomic numbers.
For $p$ a prime, $n$ a positive integer with $n\mid p-1$, and $j,k \in \Zn$, we define the {\it $j,k$ cyclotomic number over $\Fp$} to be
\[
\cyc{j}{k} = \left|(1+\alpha_p^j \Fpn) \cap (\alpha_p^k \Fpn)\right|,
\]
where we recall that $\alpha_p$ is our fixed primitive element for $\Fp$, where $\alpha_p^j\Fpn$ is shorthand for $(\alpha_p\Fpn)^j$ (which is well defined since $\alpha_p\Fpn$ has order $n$), and where we use $1+\alpha_p^j\Fpn$ to mean $\{1+b: b \in \alpha_p^j\Fpn\}$.
We use the convention that if we write a cyclotomic number $\cyc{j}{k}$ with an element of $\Z$ rather than $\Zn$ for either $j$ or $k$, then we really mean to replace that integer $u$ with $u+\Zn$.

We first give some basic identities for cyclotomic numbers that we need.
\begin{lemma}\label{Magnus}
Let $n \in \Z^+$ and let $p$ be a prime with $p\equiv 1 \pmod{n}$.
\begin{enumerate}[(i).]
\item\label{Leonard} For every $j,k \in \Zn$, we have $\cyc{j}{k}=\cyc{-j}{k-j}$.
\item\label{Muriel} For $k \in \Zn$, we have
\[
\sum_{j \in \Zn} \cyc{j}{k} = \begin{cases}
-1+\frac{p-1}{n} & \text{if $k=0$}, \\
\frac{p-1}{n} & \text{otherwise.}
\end{cases}
\]
\item\label{James} We have $\sum_{j \in \Zn} \cyc{j}{j}=-1+(p-1)/n$.
\item\label{Eric} We have $\sum_{j,k\in\Zn} \cyc{j}{k}=p-2$.
\end{enumerate}
\end{lemma}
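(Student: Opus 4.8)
The plan is to read every cyclotomic number combinatorially and then let that interpretation do all the work. By definition $\cyc{j}{k}$ counts the elements $x$ of the cyclotomic class $\alpha_p^j\Fpn$ for which $1+x$ lies in the class $\alpha_p^k\Fpn$; equivalently, it counts solutions of $1+x=y$ with $x\in\alpha_p^j\Fpn$ and $y\in\alpha_p^k\Fpn$. The single fact I will lean on repeatedly is that the $n$ cyclotomic classes $\alpha_p^0\Fpn,\dots,\alpha_p^{n-1}\Fpn$ partition $\Fpu$, each having size $(p-1)/n$, and that $1\in\Fpn=\alpha_p^0\Fpn$.

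For \eqref{Leonard}, I would produce a bijection between the two counted sets. Starting from $1+x=y$ with $x\in\alpha_p^j\Fpn$ and $y\in\alpha_p^k\Fpn$, multiply through by $x^{-1}$ to get $1+x^{-1}=yx^{-1}$. Since inversion sends $\alpha_p^j\Fpn$ to $\alpha_p^{-j}\Fpn$ and $yx^{-1}\in\alpha_p^{k-j}\Fpn$, the map $(x,y)\mapsto(x^{-1},yx^{-1})$ carries solutions counted by $\cyc{j}{k}$ to solutions counted by $\cyc{-j}{k-j}$, and it is visibly invertible. This gives $\cyc{j}{k}=\cyc{-j}{k-j}$.

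For \eqref{Muriel}, I would sum over $j$ and use the partition. Because $\bigcup_{j\in\Zn}\alpha_p^j\Fpn=\Fpu$ disjointly, $\sum_{j\in\Zn}\cyc{j}{k}$ counts exactly the $x\in\Fpu$ with $1+x\in\alpha_p^k\Fpn$; writing $y=1+x$, this is the number of $y\in\alpha_p^k\Fpn$ with $y\neq1$. Since $|\alpha_p^k\Fpn|=(p-1)/n$ and $1$ lies in $\alpha_p^k\Fpn$ precisely when $k=0$, the sum equals $(p-1)/n-1$ for $k=0$ and $(p-1)/n$ otherwise. Then \eqref{Eric} follows by summing \eqref{Muriel} over $k\in\Zn$ (the lone $-1$ from the $k=0$ term survives, giving $n\cdot(p-1)/n-1=p-2$); alternatively one counts directly the $x\in\Fpu$ with $1+x\neq0$. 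Finally \eqref{James} follows by combining \eqref{Leonard} and \eqref{Muriel}: setting $k=j$ in \eqref{Leonard} gives $\cyc{j}{j}=\cyc{-j}{0}$, so $\sum_{j\in\Zn}\cyc{j}{j}=\sum_{j\in\Zn}\cyc{j}{0}$ after reindexing $-j\mapsto j$, and \eqref{Muriel} with $k=0$ evaluates this to $-1+(p-1)/n$.

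None of the four parts presents a genuine obstacle; the entire lemma reduces to one substitution and one counting argument, with \eqref{James} and \eqref{Eric} falling out as corollaries. The only place to be careful is the bookkeeping of the degenerate elements: I must consistently track that $x=y-1$ ranges over $\Fpu$ (forcing $y\neq1$) rather than over all of $\Fp$, and record that $1$ sits in the class $\alpha_p^0\Fpn$, since that single membership is exactly what produces the $-1$ discrepancies in \eqref{Muriel}, \eqref{James}, and \eqref{Eric}.
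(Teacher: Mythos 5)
Your proof is correct, and for parts \eqref{Muriel}, \eqref{James}, and \eqref{Eric} it is essentially identical to the paper's: the paper likewise sums over $j$ using the partition of $\Fpu$ into cyclotomic classes to get $(1+\Fpu)\cap\alpha_p^k\Fpn=\alpha_p^k\Fpn\smallsetminus\{1\}$ for \eqref{Muriel}, derives \eqref{James} by setting $k=j$ in \eqref{Leonard} and reindexing, and gets \eqref{Eric} by the same count (the paper counts $|(1+\Fpu)\cap\Fpu|=p-2$ directly rather than summing \eqref{Muriel} over $k$, a distinction without a difference). The one genuine divergence is \eqref{Leonard}: the paper simply cites Dickson for this identity, whereas you supply a proof via the bijection $x\mapsto x^{-1}$ on the set $\{x\in\alpha_p^j\Fpn : 1+x\in\alpha_p^k\Fpn\}$, using $1+x^{-1}=(1+x)x^{-1}$, which correctly sends the set counted by $\cyc{j}{k}$ onto the set counted by $\cyc{-j}{k-j}$ (note $1+x\not=0$ is automatic since $1+x$ lies in a cyclotomic class). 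Your bookkeeping of the degenerate elements --- $y=1+x$ with $x\in\Fpu$ forces $y\not=1$ while $y=0$ is excluded automatically, and $1\in\alpha_p^0\Fpn$ accounts for every $-1$ in the lemma --- matches the paper exactly. So your write-up is self-contained where the paper's is not; otherwise the two arguments coincide.
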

\begin{proof}
The identity in \eqref{Leonard} is well known: see \cite[eq.(11)]{Dickson}.
Since the set of cyclotomic classes form a partition of $\Fpu$, the sum $\sum_{j \in \Zn} \cyc{j}{k}$ is the number of elements in $(1+\Fpu)\cap \alpha_p^k\Fpn = \alpha_p^k\Fpn\smallsetminus\{1\}$.
Each cyclotomic class has $(p-1)/n$ elements, and $1 \in \alpha_p^0\Fpn$, so that \eqref{Muriel} follows.
By the same principle, $\sum_{j,k \in \Zn} \cyc{j}{k}$ is  $|(1+\Fpu)\cap\Fpu|=|\Fpu\smallsetminus\{1\}|=p-2$; this establishes \eqref{Eric}.
By \eqref{Leonard}, we have $\sum_{j \in \Zn} \cyc{j}{j}=\sum_{j \in \Zn} \cyc{-j}{0}$, whose value is given by \eqref{Muriel}; this establishes \eqref{James}.
\end{proof}
Now we can calculate the periodic correlation for a pair of sequences from cyclotomic patterns.
\begin{lemma}\label{Paul}
Let $n\in\Z^+$, let $d,d'$ be cyclotomic patterns of index $n$, let $p$ be a prime with $p\equiv 1 \pmod{n}$, and let $f$ and $g$ be the periodic sequences of length $p$ from $d$ and $d'$, respectively.  Then
\[
\PC_{f,g}(0)= \frac{p-1}{n} \ip{d}{d'} =(p-1)\ip{\ft{d}}{\ft{d'}},
\]
and for $u \in \Z$, we have
\[
\PC_{f,g}(\alpha_p^u)= \sum_{j,k \in \Z/n\Z} \cyc{k}{j} d_{j+u} \conj{d'_{k+u}}.
\]
\end{lemma}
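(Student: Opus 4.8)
The plan is to treat the shift-zero case and the nonzero-shift case separately, since the former is essentially already done. By the observations collected in \cref{Irving}, we have $\PC_{f,g}(0)=\ip{f}{g}$, and \cref{Abigail} (applied with advancement $r=0$, so that $f^{(0)}$ and $g^{(0)}$ agree with $f$ and $g$ on their supports) gives $\ip{f}{g}=\frac{p-1}{n}\ip{d}{d'}=(p-1)\ip{\ft{d}}{\ft{d'}}$. Thus the first displayed identity is immediate, and all of the real content lies in the formula for $\PC_{f,g}(\alpha_p^u)$.

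For the nonzero shift I would start from the definition $\PC_{f,g}(\alpha_p^u)=\sum_{h\in\Fp} f_{h+\alpha_p^u}\conj{g_h}$ and exploit the multiplicative structure of the cyclotomic construction. The key move is the substitution $h=\alpha_p^u t$, which is a bijection of $\Fp$ (multiplication by the unit $\alpha_p^u$) and turns the summand into $f_{\alpha_p^u(t+1)}\conj{g_{\alpha_p^u t}}$. Since $f_0=g_0=0$, the terms with $t=0$ or $t=-1$ vanish, so the effective sum runs over those $t\in\Fpu$ with $t\ne -1$. Now if $t\in\alpha_p^k\Fpn$ then $\alpha_p^u t\in\alpha_p^{u+k}\Fpn$, whence $\conj{g_{\alpha_p^u t}}=\conj{d'_{k+u}}$; and if $t+1\in\alpha_p^j\Fpn$ then $\alpha_p^u(t+1)\in\alpha_p^{u+j}\Fpn$, whence $f_{\alpha_p^u(t+1)}=d_{j+u}$.

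The final step is to collect the surviving terms according to the pair of class labels $(j,k)\in\Zn\times\Zn$. For fixed $j,k$, the number of admissible $t$ with $t\in\alpha_p^k\Fpn$ and $t+1\in\alpha_p^j\Fpn$ equals the number of $w\in\alpha_p^j\Fpn$ with $w-1\in\alpha_p^k\Fpn$ (set $w=t+1$), which by definition is $|(1+\alpha_p^k\Fpn)\cap\alpha_p^j\Fpn|=\cyc{k}{j}$. Weighting $d_{j+u}\conj{d'_{k+u}}$ by these counts yields exactly
\[
\PC_{f,g}(\alpha_p^u)=\sum_{j,k\in\Zn}\cyc{k}{j}\,d_{j+u}\conj{d'_{k+u}},
\]
as claimed. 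I expect the only genuine obstacle to be index bookkeeping: keeping the two class labels from being transposed and tracking the additive shift by $u$ so that the cyclotomic number emerges as $\cyc{k}{j}$ rather than $\cyc{j}{k}$. The identities in \cref{Magnus} (which involve $\cyc{j}{k}$ and $\cyc{-j}{k-j}$) give a convenient sanity check that the subscripts are in the intended order, and a quick verification on the case $d=d'$ against \cref{Abigail} confirms the $s=0$ normalization is consistent.
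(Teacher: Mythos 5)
Your proposal is correct and follows essentially the same route as the paper's proof: the zero-shift case via $\PC_{f,g}(0)=\AC_{f^{(0)},g^{(0)}}(0)$ and \cref{Abigail}, then for nonzero shifts the same reparameterization by multiplication by $\alpha_p^{-u}$, the same observation that the terms at $t\in\{0,-1\}$ vanish because $f_0=g_0=0$, and the same identification of the count of admissible $t$ with $\cyc{k}{j}=\left|(1+\alpha_p^k\Fpn)\cap\alpha_p^j\Fpn\right|$. Your index bookkeeping, including the order of the class labels in $\cyc{k}{j}$ and the shift by $u$, matches the paper exactly.
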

\begin{proof}
The equalities for $\PC_{f,g}(0)$ follow from \cref{Abigail} and the fact that $\PC_{f,g}(0)=\AC_{f^{(0)},g^{(0)}}(0)$.
For $u \in \Z$, we have
\[
\PC_{f,g}(\alpha_p^u) = \sum_{t \in \Fp} f_{t+\alpha_p^u} \conj{g_t}.
\]
Reparameterize with $s=\alpha_p^{-u} t$ to obtain  
\[
\PC_{f,g}(\alpha_p^u) = \sum_{s \in \Fp} f_{\alpha_p^u (s+1)} \conj{g_{\alpha_p^u s}}.
\]
If $s \in \{0,-1\}$, then $f_{\alpha_p^v (s+1)} \conj{g_{\alpha_p^v s}}=0$ because $f_0=g_0=0$; otherwise, there is a unique pair $j,k \in \{0,1,\ldots,n-1\}$ such that $s \in \alpha_p^k \Fpn$ while $s+1 \in \alpha_p^j\Fpn$, in which case $f_{\alpha_p^u (s+1)} \conj{g_{\alpha_p^v s}}=d_{j+u} \conj{d'_{k+u}}$.
Thus
\begin{align*}
\PC_{f,g}(\alpha_p^u)
& = \sum_{j=0}^{n-1} \sum_{k=0}^{n-1} |(\alpha_p^k\Fpn+1)\cap\alpha_p^j\Fpn| d_{j+u} \conj{d'_{k+u}} \\
& = \sum_{j=0}^{n-1} \sum_{k=0}^{n-1} \cyc{k}{j} d_{j+u} \conj{d'_{k+u}}. \qedhere
\end{align*}
\end{proof}
We now wish to compute the periodic crosscorrelation demerit factor of a periodic codebook derived from a Hadamard plan.
First we need a technical lemma.
\begin{lemma}\label{Oswald}
Let $D$ be a Hadamard plan of index $n$ and $j,k \in \Zn$.
Then
\[
\sum_{d \in D} d_j \conj{d_k} = \begin{cases}
n-1 & \text{if $j=k$,} \\
-1 & \text{otherwise}
\end{cases}
\]
\end{lemma}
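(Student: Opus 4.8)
The plan is to package the three defining properties of a Hadamard plan into a single matrix identity and then read off the claim as a statement about \emph{columns} rather than rows. First I would form the $n \times n$ matrix $M$ whose rows are the $n-1$ patterns of $D$ together with the extra pattern $\mathbf{1}=(1,1,\ldots,1)$: index the rows by $\{0\}\cup D$, putting row $0$ equal to $\mathbf{1}$ and the remaining $n-1$ rows equal to the patterns $d\in D$, and index the columns by $j\in\Zn$, so that $M_{0j}=1$ and $M_{dj}=d_j$ for $d\in D$.

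Next I would verify that $MM^*=nI$. The rows are pairwise orthogonal: two distinct patterns of $D$ are orthogonal because $D$ is an orthogonal plan, and each pattern of $D$ is orthogonal to $\mathbf{1}$ precisely because $D$ is balanced (a balanced sequence is by definition orthogonal to the all-ones sequence). Moreover each row has squared $l^2$ norm equal to $n$: every pattern of $D$ is unimodular of length $n$, so $\normtwosq{d}=n$, and $\normtwosq{\mathbf{1}}=n$ as well. These two facts together are exactly the statement $MM^*=nI$.

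The key step is then to pass from orthogonality of rows to orthogonality of columns. Since $M$ is square and $MM^*=nI$, the matrix $M^*/n$ is a right inverse of $M$, hence also a left inverse, so $M^*M=nI$. Reading off the $(k,j)$ entry of $M^*M$ gives $\sum_{i} M_{ij}\conj{M_{ik}}=n\,\delta_{jk}$; separating the contribution of row $0$ (which is $1\cdot\conj{1}=1$) from the contributions of the rows indexed by $D$ (which sum to $\sum_{d\in D} d_j\conj{d_k}$) yields $1+\sum_{d\in D} d_j\conj{d_k}=n\,\delta_{jk}$. Rearranging gives $n-1$ when $j=k$ and $-1$ when $j\neq k$, which is exactly the asserted formula.

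I expect the only genuine subtlety to be the transition from row-orthogonality to column-orthogonality; everything else is bookkeeping. This transition is clean for square matrices, but it is worth emphasizing that it really uses $|D|=n-1$: it is precisely this count that makes $D\cup\{\mathbf{1}\}$ supply exactly $n$ mutually orthogonal rows, so that $M$ is square and invertible. Were the count different, $M$ would be rectangular and the passage to the column identity would break down.
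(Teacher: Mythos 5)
Your proof is correct and is essentially the paper's own argument: both form the $n\times n$ matrix $M$ whose rows are the all-ones pattern together with the $n-1$ patterns of $D$, deduce $MM^*=nI$ from unimodularity, balance, and orthogonality, pass to $M^*M=nI$, and read the claimed sum off the inner product of columns $j$ and $k$ after separating out the all-ones row's contribution of $1$. Your explicit remark that $|D|=n-1$ is exactly what makes $M$ square—so that the right inverse $M^*/n$ is also a left inverse—is a worthwhile justification of a step the paper states without comment.
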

\begin{proof}
If we think of the $n\times n$ matrix $M$ whose first row is $(1,1,\ldots,1)$ and whose remaining rows are the cyclotomic patterns in $D$, then the unimodularity, balance, and orthogonality of $D$ ensure that $M M^*=n I$ (and so $M^* M=n I$).  Now note that the sum we wish to evaluate is equal to the inner product of the $j$th and $k$th columns of the matrix obtained by deleting the first row of $M$.  So our sum is $-1$ plus the inner product of columns $j$ and $k$ of $M$.  Since $M^* M=n I$, our sum is $-1+n$ if $j=k$ and $-1+0$ otherwise.
\end{proof}
Now we are ready to compute the $\PCDF$ of a codebook from a Hadamard plan.
\begin{lemma}\label{Heloise}
Let $D$ be a Hadamard plan of index $n$, let $p$ be a prime with $p\equiv 1 \pmod{n}$, and let $F$ be the codebook of length $p$ derived from $D$.
Then for $s \in \Fp$, we have
\[
\sum_{f \in F} \PC_{f,f}(s) =
\begin{cases}
(n-1)(p-1) & \text{if $s=0$,} \\
1-n & \text{otherwise,}
\end{cases}
\]
so that
\begin{align*}
\sum_{s \in \Fpu} \left|\sum_{f \in F} \PC_{f,f}(s)\right|^2 & =(p-1)(n-1)^2, \\
\sum_{s \in \Fp} \left|\sum_{f \in F} \PC_{f,f}(s)\right|^2 & =p(p-1)(n-1)^2,
\end{align*}
and $\PCDF(F)=p/(p-1)$.
\end{lemma}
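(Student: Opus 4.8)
The plan is to compute $\sum_{f \in F}\PC_{f,f}(s)$ separately for $s=0$ and for $s \neq 0$, and then to read off the three remaining assertions as arithmetic consequences. First I would handle the shift-zero term by invoking \cref{Paul}, which gives $\PC_{f,f}(0)=\frac{p-1}{n}\ip{d}{d}$ where $d$ is the cyclotomic pattern producing $f$. Since $D$ is unimodular, each pattern satisfies $\normtwosq{d}=n$, so every $\PC_{f,f}(0)=p-1$; summing over the $|D|=n-1$ sequences in $F$ yields $(n-1)(p-1)$.

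For a nonzero shift $s$, I would write $s=\alpha_p^u$ and apply the second formula of \cref{Paul}, then interchange the order of summation to push the sum over $f \in F$ (equivalently over $d \in D$) inside:
\[
\sum_{f \in F}\PC_{f,f}(\alpha_p^u)=\sum_{j,k \in \Zn}\cyc{k}{j}\sum_{d \in D} d_{j+u}\conj{d_{k+u}}.
\]
The inner sum is evaluated by \cref{Oswald}, which returns $n-1$ when $j=k$ and $-1$ otherwise. Separating the diagonal from the off-diagonal terms rewrites the right-hand side as $n\sum_{j \in \Zn}\cyc{j}{j}-\sum_{j,k \in \Zn}\cyc{k}{j}$. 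Here \cref{Magnus}\eqref{James} supplies $\sum_{j}\cyc{j}{j}=-1+(p-1)/n$ and \cref{Magnus}\eqref{Eric} supplies $\sum_{j,k}\cyc{k}{j}=p-2$ (after the harmless relabeling $(j,k)\mapsto(k,j)$), and substituting these collapses the expression to $1-n$, which is notably independent of $u$.

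With $\sum_{f \in F}\PC_{f,f}(s)$ known for every $s$, the squared-magnitude sums follow immediately: over $\Fpu$ each of the $p-1$ summands equals $(n-1)^2$, giving $(p-1)(n-1)^2$, and adjoining the $s=0$ contribution $(n-1)^2(p-1)^2$ and factoring gives $(n-1)^2(p-1)[(p-1)+1]=p(p-1)(n-1)^2$. Finally, since $\PC_{f,f}(0)=p-1$ is constant over $F$, I would apply \cref{Bruno} with $C=p-1$, $|F|=n-1$, and the off-peak sum $(p-1)(n-1)^2$ to obtain
\[
\PCDF(F)=1+\frac{(p-1)(n-1)^2}{(n-1)^2(p-1)^2}=1+\frac{1}{p-1}=\frac{p}{p-1}.
\]

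I anticipate no genuine obstacle here, as every ingredient is already in place. The only delicate point is the index bookkeeping in the diagonal/off-diagonal split together with keeping the orientation of $\cyc{k}{j}$ straight, since the cyclotomic-number identities of \cref{Magnus} must be applied to the correct argument order; the symmetry of the total sum over $\Zn\times\Zn$ is what makes this harmless.
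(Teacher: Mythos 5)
Your proposal is correct and follows essentially the same route as the paper's proof: \cref{Paul} for the shift-zero value, interchanging sums and applying \cref{Oswald} plus \cref{Magnus}\eqref{James},\eqref{Eric} to get $1-n$ at nonzero shifts, and \cref{Bruno} with $C=p-1$ for the final $\PCDF$ computation. The one point you flag as delicate, the orientation of $\cyc{k}{j}$, is indeed harmless for exactly the reason you give (the full sum over $\Zn\times\Zn$ is symmetric in the pair of indices), so no gap remains.
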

\begin{proof}
Since $D$ is unimodular, every periodic sequence $f$ in $F$ has $\PC_{f,f}(0)=p-1$ by \cref{Paul}, so that $\sum_{f \in F} \PC_{f,f}(0)=|F|(p-1)=|D|(p-1)=(n-1)(p-1)$.
If $u \in \Z$, $d \in D$, and $f$ is the sequence in $F$ from $d$, then \cref{Paul} tells us that $\PC_{f,f}(\alpha_p^u)= \sum_{j,k \in \Zn} \cyc{k}{j} d_{j+u} \conj{d_{k+u}}$, so (using the Kronecker delta notation) we have
\begin{align*}
\sum_{f \in F} \PC_{f,f}(\alpha_p^{u})
& =\sum_{d \in D} \sum_{j,k \in \Zn} \cyc{k}{j} d_{j+u} \conj{d_{k+u}} \\
& = \sum_{j,k \in \Zn} \cyc{k}{j} (-1+n\delta_{j+u,k+u}) && \text{by \cref{Oswald}} \\
& = -\sum_{j,k \in \Zn} \cyc{k}{j} +n \sum_{j \in \Z} \cyc{j}{j} \\
& = 1-n && \text{by \cref{Magnus}\eqref{Eric},\eqref{James}}.
\end{align*}
Thus, $\sum_{s \in \Fpu} \left|\sum_{f \in F} \PC_{f,f}(s)\right|^2 = (p-1) (1-n)^2$, and
\[
\sum_{s \in \Fp} \left|\sum_{f \in F} \PC_{f,f}(s)\right|^2 = ((p-1)(n-1))^2+(p-1)(n-1)^2 = p(p-1)(n-1)^2.
\]
Since $\PC_{f,f}(0)=p-1$ for every $f \in F$, \cref{Bruno} tells us that
\[
\PCDF(F)=1+\frac{(p-1)(n-1)^2}{(n-1)^2 (p-1)^2}=\frac{p}{p-1}.\qedhere
\]
\end{proof}

\subsection{Aperiodic correlation of sequences from cyclotomic patterns}\label{Jane}

Boothby and Katz \cite[Theorem 1]{Boothby-Katz} give a formula for the asymptotic crosscorrelation demerit factor of a family of pairs of sequences obtained from character patterns.
Here we present a special case of the approximation of the crosscorrelation demerit factor for a pair of sequences that they used.
To present this approximation, we need to define some parameters related to the sequence pair.
Let $n \in \Z^+$, let $d$ and $d'$ be nonzero balanced cyclotomic patterns of index $n$, let $p$ be a prime with $p\equiv 1 \pmod{n}$, let $\accentset{\circ}{f}$ and $\accentset{\circ}{g}$ be the periodic sequences of length $p$ derived from $d$ and $d'$, respectively, let $r,r' \in \Z$, and let $f=\accentset{\circ}{f}^{(r)}$ and $g=\accentset{\circ}{g}^{(r')}$.
Following eqs.~(16)--(17) and  Proposition 7 of \cite{Boothby-Katz}, we define
\begin{align*}
U_{d,d'} & = \frac{|\ip{\ft{d}}{\ft{d'}}|^2}{\normtwosq{\ft{d}} \normtwosq{\ft{d'}}}, \\ 
V_{d,d',p} & = \frac{\left|\sum_{j \in \Zn} \ft{d}_j \ft{d'}_{-j} \omega_p^{(p-1)j/n}(-1)\right|^2}{\normtwosq{\ft{d}} \normtwosq{\ft{d'}}}, \\
S_{d,d',p} & = - 1 - U_{d,d'}-V_{d,d',p} + \frac{1}{p(p-1) \normtwosq{\ft{d}} \normtwosq{\ft{d'}}} \sum_{a \in \Fp} |\PC_{\accentset{\circ}{f},\accentset{\circ}{g}}(a)|^2,  \\
W_d & = \frac{\normone{\ft{d}}}{\normtwo{\ft{d}}}\text{, and} \\
W_{d'} & = \frac{\normone{\ft{d'}}}{\normtwo{\ft{d'}}}.
\end{align*}
If $\normtwo{\ft{d}}=\normtwo{\ft{d'}}=1$, then these are the parameters $U_{f,g}$, $V_{f,g}$, $S_{f,g}$, $W_f$, and $W_g$ from \cite{Boothby-Katz}, but recast into the notation of this paper.
Boothby and Katz assume that all character patterns are normalized in their paper (see the first paragraph of their Section III), so our definitions generalize theirs to accommodate all nonzero $d$ and $d'$, and it is clear that these parameters do not change if we replace either $d$ or $d'$ with a nonzero scalar multiple of itself (since $\ft{d}$ and $\accentset{\circ}{f}$ scale as $d$ does and $\ft{d'}$ and $\accentset{\circ}{g}$ scale as $d'$ does).
The sequence that we call $\accentset{\circ}{f}$ (resp., $\accentset{\circ}{g}$) here is a linear combination of characters from $\mchars$ that in \cite{Boothby-Katz} is written as $\sum_{\phi\in\mchars} f_\phi \phi$ (resp., $\sum_{\phi\in\mchars} g_\phi \phi$); here it is written as a linear combination of characters in the subgroup $\{\omega_p^{j(p-1)/n}: 0 \leq j < n\}$ of index $n$ in $\mchars$ and is derived from character pattern $\ft{d}$ (resp., $\ft{d'}$), so that $f=\sum_{j \in \Zn} \ft{d}_j \omega_p^{(p-1) j/n}$ (resp., $g=\sum_{j \in \Zn} \ft{d'}_j \omega_p^{(p-1) j/n}$).
This correspondence enables one to derive our formulas for $U_{d,d'}$, $V_{d,d',p}$, $W_d$, and $W_{d'}$ (when $\normtwo{\ft{d}}=\normtwo{\ft{d'}}=1$) from the definitions of $U_{f,g}$, $V_{f,g}$, $W_f$, and $W_g$, respectively, in eqs.~(17) of \cite{Boothby-Katz}.
Boothby and Katz's quantity $S_{f,g}$, which corresponds to our $S_{d,d',p}$, is also defined in eq.~(16) of \cite{Boothby-Katz}, but later, in Proposition 7 of the same paper, there is a relation between $S_{f,g}$, $U_{f,g}$, $V_{f,g}$ and the periodic correlations of $\accentset{\circ}{f}$ and $\accentset{\circ}{g}$ (which are called $\per(f)$ and $\per(g)$ by Boothby and Katz), and it is from this that we obtain an equivalent definition of $S_{f,g}$, which we translate into our notation as $S_{d,d',p}$.

The parameters $U_{d,d'}$, $V_{d,d',p}$, and $S_{d,d',p}$ are defined in terms of $\ft{d}$ and $\ft{d'}$, but it will be helpful to have formulae for them in terms of $d$ and $d'$.
If $d$ and $d'$ are cyclotomic plans of index $n$, and $\sigma \in \Z/2\Z$, we define
\[
V^{(\sigma)}_{d,d'} =\frac{1}{\normtwosq{d}\normtwosq{d'}} \left|\sum_{j \in \Zn} d_j d'_{j+\sigma(n/2)}\right|^2,
\]
where the translation $\sigma(n/2)$ in the index of $d'$ is well defined since the index is read modulo $n$.
(If $n$ is even, then the value of $h(n/2)$ for $h \in \Z$ only depends on the parity of $h$, and if $n$ is odd, then we must consider the reduction modulo $n$ of the half-integer $n/2$ to be the element $n 2^{-1}=0$ in $\Z/n\Z$, so $h(n/2)\equiv 0 \pmod{n}$ for any $h \in \Z$, and so $V^{(0)}_{d,d'}=V^{(1)}_{d,d'}$ whenever $n$ is odd.)
If $k$ is in $\Z$, then we take $V^{(k)}_{d,d'}$ to mean $V^{(\sigma)}_{d,d'}$ where $\sigma=k+2\Z \in \Z/2\Z$.
\begin{lemma}\label{Matilda}
Let $n \in \Z^+$, let $d$ and $d'$ be balanced nonzero cyclotomic patterns of index $n$, let $p$ be a prime with $p\equiv 1 \pmod{n}$, and let $\accentset{\circ}{f}$ (resp., $\accentset{\circ}{g}$) be the periodic sequence of length $p$ from $d$ (resp., $d'$).
\begin{enumerate}[(i).]
\item\label{John} We have \[U_{d,d'} = \frac{\left|\ip{d}{d'}\right|^2}{\normtwosq{d}\normtwosq{d'}},\] so $U_{d,d'}$ is a real number in $[0,1]$, and equals $1$ when $d=d'$.
\item\label{Gordon} We have $V_{d,d',p} = V^{((p-1)/n)}_{d,d'}$, so that $V_{d,d',p}$ is a real number in $[0,1]$.
\item\label{Annette} We have $S_{d,d',p}=\frac{p-1}{p} \PCDF(\accentset{\circ}{f},\accentset{\circ}{g}) - 1 - U_{d,d'} - V_{d,d',p}$, so $S_{d,d',p}$ is a real number.
\end{enumerate}
\end{lemma}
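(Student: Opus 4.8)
The plan is to dispatch parts \eqref{John} and \eqref{Annette} by direct substitution, concentrating effort on \eqref{Gordon}, which carries the combinatorial content. For \eqref{John}, I would apply the isometry property of the finite Fourier transform recorded in \cref{Marlene}: for patterns of index $n$ it gives $\ip{\ft{d}}{\ft{d'}} = n^{-1}\ip{d}{d'}$, $\normtwosq{\ft{d}} = n^{-1}\normtwosq{d}$, and $\normtwosq{\ft{d'}} = n^{-1}\normtwosq{d'}$. Substituting these into the definition of $U_{d,d'}$, the factors of $n^{-2}$ in numerator and denominator cancel, leaving $U_{d,d'} = |\ip{d}{d'}|^2/(\normtwosq{d}\normtwosq{d'})$. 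That this lies in $[0,1]$, with value $1$ exactly when $d' $ is a scalar multiple of $d$ (in particular when $d=d'$), is the Cauchy--Schwarz inequality together with its equality condition.

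For \eqref{Annette}, I would substitute the norm values from \cref{Abigail} into the definition of $\PCDF$. Since $\accentset{\circ}{f}$ and $\accentset{\circ}{g}$ are nonzero periodic sequences of length $p$ from $d$ and $d'$, \cref{Abigail} gives $\PC_{\accentset{\circ}{f},\accentset{\circ}{f}}(0) = \normtwosq{\accentset{\circ}{f}} = (p-1)\normtwosq{\ft{d}}$ and likewise $\PC_{\accentset{\circ}{g},\accentset{\circ}{g}}(0) = (p-1)\normtwosq{\ft{d'}}$, so that $\PCDF(\accentset{\circ}{f},\accentset{\circ}{g}) = \sum_{a \in \Fp}|\PC_{\accentset{\circ}{f},\accentset{\circ}{g}}(a)|^2 / ((p-1)^2\normtwosq{\ft{d}}\normtwosq{\ft{d'}})$. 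Multiplying by $(p-1)/p$ reproduces exactly the summation term in the definition of $S_{d,d',p}$, and rearranging yields the stated identity; $S_{d,d',p}$ is then real because $\PCDF$, $U_{d,d'}$, and $V_{d,d',p}$ all are.

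The work is in \eqref{Gordon}. First I would evaluate the character value in the numerator of $V_{d,d',p}$: since $-1 = \alpha_p^{(p-1)/2}$ and $\omega_p(\alpha_p) = \exp(2\pi i/(p-1))$, one finds $\omega_p^{(p-1)j/n}(-1) = \exp(\pi i j(p-1)/n) = (-1)^{j(p-1)/n}$, a sign depending only on the parity of $(p-1)/n$. Next I would establish a modulation identity for the transform: writing this sign as $\exp(2\pi i js/n)$ with $s \in \{0,\, n/2\}$, substituting $\ft{d}_j = n^{-1}\sum_a \exp(-2\pi i ja/n)\, d_a$ and $\ft{d'}_{-j} = n^{-1}\sum_b \exp(2\pi i jb/n)\, d'_b$, and summing over $j \in \Zn$, the inner character sum $\sum_{j \in \Zn}\exp(2\pi i j(b-a+s)/n)$ collapses by orthogonality to $n$ times the indicator of $b \equiv a-s$, giving $\sum_{j \in \Zn}\ft{d}_j \ft{d'}_{-j}(-1)^{j(p-1)/n} = n^{-1}\sum_a d_a d'_{a-s}$. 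Dividing by $\normtwosq{\ft{d}}\normtwosq{\ft{d'}} = n^{-2}\normtwosq{d}\normtwosq{d'}$ again cancels the $n^{-2}$ and produces $V^{(\sigma)}_{d,d'}$ with $\sigma \equiv (p-1)/n \pmod{2}$.

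The main obstacle, and the step that needs care, is the parity bookkeeping. For an odd prime $p$ the integer $p-1$ is even, so $(p-1)/n$ odd forces $n$ even, in which case $s = n/2$ is a genuine index and $-n/2 \equiv n/2 \pmod{n}$, so $d'_{a-s} = d'_{a+\sigma(n/2)}$, matching $V^{(1)}_{d,d'}$; while $(p-1)/n$ even gives $s=0$ and $V^{(0)}_{d,d'}$. This is precisely the convention built into the definition of $V^{(\sigma)}_{d,d'}$, including the stipulation that $V^{(0)}_{d,d'} = V^{(1)}_{d,d'}$ when $n$ is odd, a case in which $(p-1)/n$ is automatically even. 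Finally, $V_{d,d',p} \in [0,1]$ follows from Cauchy--Schwarz applied to $\sum_a d_a d'_{a+\sigma(n/2)}$, and it is manifestly a nonnegative real as a modulus squared divided by positive reals.
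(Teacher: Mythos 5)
Your proposal is correct and follows essentially the same route as the paper's proof: part (i) by the Fourier isometry plus Cauchy--Schwarz, part (iii) by substituting the shift-zero correlation values $(p-1)\normtwosq{\ft{d}}$ and $(p-1)\normtwosq{\ft{d'}}$ into the definition of $\PCDF$, and part (ii) by evaluating $\omega_p^{(p-1)j/n}(-1)=(-1)^{j(p-1)/n}$ via $-1=\alpha_p^{(p-1)/2}$, expanding the Fourier transforms, collapsing the inner sum by orthogonality, and carrying out the same parity bookkeeping with $\sigma\equiv(p-1)/n\pmod{2}$. The only step the paper takes that you skip is disposing of $n=1$ separately so that $p$ is forced to be odd and $(p-1)/2$ is an integer; this is harmless in your write-up, since $p\equiv 1\pmod{n}$ with $n>1$ already forces $p$ odd, and for $n=1$ no balanced nonzero pattern exists, so the hypotheses are vacuous there.
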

\begin{proof}
The formula for $U_{d,d'}$ follows from the isometric property of the Fourier transform, and this clearly gives a nonnegative real number.
Setting $d=d'$ gives $U_{d,d'}=1$, and otherwise one uses the Cauchy--Schwarz inequality to upper bound the inner product $\ip{d}{d'}$ by $\normtwo{d}\normtwo{d'}$, so $U_{d,d'} \leq 1$.

Now we prove that $V_{d,d',p}=V^{((p-1)/n)}_{d,d'}$.
If $n=1$, the result is clear because then our patterns are of length $1$, and so $d=\ft{d}$, $d'=\ft{d'}$, $\normtwo{d}=\normtwo{\ft{d}}=|d_0|$, $\normtwo{d'}=\normtwo{\ft{d'}}=|d'_0|$, and so $V_{d,d',p}=1=V^{(0)}_{d,d'}=V^{(1)}_{d,d'}$.  So henceforth assume that $n>1$, which forces $p$ to be odd.
Then $V_{d,d',p}=|A|^2$, where
\begin{align*}
A
& =\frac{1}{\normtwo{\ft{d}}\normtwo{\ft{d'}}} \sum_{h \in \Zn} \ft{d}_h \ft{d'}_{-h} \omega_p^{(p-1)h /n}(-1) \\
& =\frac{n}{\normtwo{d} \normtwo{d'}} \sum_{h \in \Zn} \ft{d}_h \ft{d'}_{-h} \omega_p^{(p-1)h /n}(-1).
\end{align*}
Using the definitions of the Fourier transform and $\omega_p$ along with the fact that $-1=\alpha_p^{(p-1)/2}$ (so that $\omega_p(-1)=-1$), we obtain the following formulae (which use the Kronecker delta):
\begin{align*}
A
& = \frac{\sum_{h,j,k \in \Zn} d_j \exp(-2\pi i h j/n) d'_k \exp(2\pi i h k/n)\exp((p-1) \pi i h/n)}{n \normtwo{d}\normtwo{d'}}  \\
& = \frac{\sum_{j,k \in \Zn} d_j d'_k \sum_{h \in \Zn} \exp(2\pi i h (k-j+(p-1)/2)/n)}{n \normtwo{d}\normtwo{d'}} \\
& = \frac{\sum_{j,k \in \Zn} d_j d'_k \delta_{k,j+(p-1)/2}}{\normtwo{d}\normtwo{d'}} \\
& = \frac{\sum_{j \in \Zn} d_j d'_{j+(p-1)/2}}{\normtwo{d}\normtwo{d'}} .
\end{align*}
If $2 n \mid p-1$, then $(p-1)/2 \equiv 0 \pmod{n}$, but if $2 n\nmid p-1$, then (since $n\mid p-1$ and $p$ is odd) we know that $n$ must be even and $p-1$ must be an odd multiple of $n$, so that $(p-1)/2$ is an odd multiple of $n/2$, and so $(p-1)/2 \equiv n/2 \pmod{n}$.
Thus
\[
A = \begin{cases}
\frac{1}{\normtwo{d}\normtwo{d'}} \sum_{j \in \Zn} d_j d'_j & \text{if $(p-1)/n$ is even,}\\
\frac{1}{\normtwo{d}\normtwo{d'}} \sum_{j \in \Zn} d_j d'_{j+n/2} & \text{if $(p-1)/n$ is odd,}
\end{cases}
\]
and $V_{d,d',p}=|A|^2$, so our formula $V_{d,d',p}=V^{(\sigma)}_{d,d'}$ is verified.
This shows that $V_{d,d',p}$ is a nonnegative real number equal to the squared magnitude of $\ip{d}{e'}/(\normtwo{d}\normtwo{e'})$,
where $e'$ is obtained from $d'$ by conjugating its elements and possibly cyclically shifting it (neither of these operations changes the $l^2$ norm).
So we can use the Cauchy--Schwarz inequality to see that $V_{d,d',p} \leq 1$.

We have $\PC_{\accentset{\circ}{f},\accentset{\circ}{f}}(0)=(p-1) \normtwosq{\ft{d}}$ and $\PC_{\accentset{\circ}{g},\accentset{\circ}{g}}(0)=(p-1)\normtwosq{\ft{d'}}$ by \cref{Paul}.
Thus
\begin{align*}
S_{d,d',p}
& = - 1 - U_{d,d'}-V_{d,d',p} + \frac{1}{p(p-1) \normtwosq{\ft{d}}\normtwosq{\ft{d'}}} \sum_{a \in \Fp} |\PC_{\accentset{\circ}{f},\accentset{\circ}{g}}(a)|^2  \\
& = - 1 - U_{d,d'} - V_{d,d',p} + \frac{p-1}{p} \PCDF(\accentset{\circ}{f},\accentset{\circ}{g}).\qedhere
\end{align*}
\end{proof}
Recall the function $\Phi$ defined in \eqref{Philip} in the Introduction.
We also define $\Xi\colon \R\to \R$ by $\Xi(x)=\Phi(x)+2/3$.
This $\Xi(x)$ is the same as $\Omega(1,x)$ defined in \cite[p.~6165]{Boothby-Katz}.
Note that both $\Phi$ and $\Xi$ are even continuous functions that are strictly decreasing on the interval $[0,1/2]$ and strictly increasing on $[1/2,1]$.
We now prove a technical lemma about $\Phi$ and $\Xi$ that we shall use in our approximation of crosscorrelation demerit factors.
\begin{lemma}\label{Laura}
For $x \in \F$, we have $|\Phi(x)| \leq 1/3$ and $|\Xi(x)| \leq 1$.
For $x,y \in \R$, we have $|\Phi(x)-\Phi(y)|=|\Xi(x)-\Xi(y)| \leq 2 |x-y|$.
\end{lemma}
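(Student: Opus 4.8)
The plan is to treat the two magnitude bounds and the Lipschitz estimate separately, exploiting that $\Phi$ has period $1$ and that $\Xi$ differs from $\Phi$ only by the additive constant $2/3$.

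First I would establish $|\Phi(x)| \leq 1/3$. Because $\Phi$ has period $1$, it suffices to examine $\Phi$ on $[0,1]$, where $\Phi(x) = 2(x-1/2)^2 - 1/6$ by the defining formula \eqref{Philip}. This is an upward-opening parabola in $x$ with vertex at $x = 1/2$; its minimum over $[0,1]$ is $\Phi(1/2) = -1/6$ and its maximum is attained at the endpoints, $\Phi(0) = \Phi(1) = 1/3$. Hence $-1/6 \leq \Phi(x) \leq 1/3$ for every $x \in \R$, so $|\Phi(x)| \leq 1/3$. The bound on $\Xi$ is then immediate: $\Xi(x) = \Phi(x) + 2/3$ lies in $[1/2, 1]$, so $|\Xi(x)| \leq 1$.

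Next, the equality $|\Phi(x) - \Phi(y)| = |\Xi(x) - \Xi(y)|$ is trivial, since $\Xi(x) - \Xi(y) = \Phi(x) - \Phi(y)$ (the constant $2/3$ cancels). It therefore remains only to show that $\Phi$ is Lipschitz with constant $2$. I would do this by bounding the derivative: on the open interval $(0,1)$ we have $\Phi'(x) = 4(x - 1/2) = 4x - 2$, so $|\Phi'(x)| \leq 2$ there, and by periodicity $|\Phi'(x)| \leq 2$ wherever the derivative exists, namely at every noninteger point. Since $\Phi(0) = \Phi(1) = 1/3$, the periodic extension is continuous at the integers, and a continuous function whose derivative is bounded by $2$ off a discrete set is Lipschitz with constant $2$, giving $|\Phi(x) - \Phi(y)| \leq 2|x-y|$.

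The only point requiring a little care is that $\Phi$ fails to be differentiable at the integers, so the mean value theorem cannot be applied in one stroke on an interval containing an integer. I would handle this by taking $x < y$, inserting the integers lying in $(x,y)$ as intermediate points, applying the mean value theorem on each resulting subinterval (on whose interior $\Phi$ is differentiable with $|\Phi'| \leq 2$), and combining the estimates with the triangle inequality; the sum telescopes to $2(y-x)$. This is the main, and essentially the only, technical obstacle, and it is a standard one.
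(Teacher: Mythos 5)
Your proof is correct, and for the Lipschitz estimate it takes a genuinely different route from the paper. You use the standard calculus argument: bound $|\Phi'(x)| = |4x-2| \leq 2$ on each interval between consecutive integers, verify continuity of the periodic extension at the integers (since $\Phi(0)=\Phi(1)=1/3$), and then handle the nondifferentiability at integer points by subdividing $[x,y]$ at the integers it contains, applying the mean value theorem on each piece, and telescoping; this is sound, and you correctly identified the integer kinks as the one technical point needing care. The paper instead avoids calculus entirely: it exploits the full symmetry group of $\Phi$ by considering the sets $X=\{\sigma x + n : \sigma \in \{-1,1\},\, n \in \Z\}$ and $Y=\{\sigma y + n : \sigma \in \{-1,1\},\, n \in \Z\}$ of points where $\Phi$ takes the values $\Phi(x)$ and $\Phi(y)$, observes that each meets the fundamental domain $[0,1/2]$ in a unique point ($x_0$ and $y_0$, say) with $|x_0 - y_0| \leq |x-y|$, and then gets the bound from the purely algebraic factorization $\Phi(x_0)-\Phi(y_0) = 2(x_0-y_0)(x_0+y_0-1)$ together with $|x_0+y_0-1|\leq 1$ for $x_0,y_0 \in [0,1/2]$. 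Your approach is more robust and would apply verbatim to any continuous, piecewise-$C^1$ periodic function with bounded derivative, while the paper's symmetry reduction sidesteps the kinks altogether (no subdivision needed) and in fact proves something slightly stronger, namely $|\Phi(x)-\Phi(y)| \leq 2|x_0-y_0|$, where the distance between the canonical representatives can be strictly smaller than $|x-y|$. The first claim ($|\Phi|\leq 1/3$, $|\Xi|\leq 1$ via the parabola's extrema on $[0,1]$) and the cancellation of the constant $2/3$ are handled identically in both arguments.
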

\begin{proof}
The first claim is clear since $\Phi(x)$ has period $1$ and on $[0,1]$, it is quadratic with minimum value of $-1/6$ and maximum value $1/3$, while $\Xi(x)=\Phi(x)+2/3$ can be as small as $1/2$ or as large as $1$.

For the second claim, it is clear that $\Phi(x)-\Phi(y)=\Xi(x)-\Xi(y)$ since $\Xi(z)=\Phi(z)+2/3$ for every $z \in \R$; so we only need to prove the inequality $|\Phi(x)-\Phi(y)|\leq 2|x-y|$.
Let $x, y \in R$, and let $X=\{\sigma x+n: \sigma\in\{-1,1\}, n \in \Z\}$.
By the periodicity, evenness, and strict monotonicity of $\Phi$ on $[0,1/2]$, this $X$ is the set of all $x' \in \R$ with $\Phi(x')=\Phi(x)$, and $X$ contains precisely one point from $[0,1/2]$.
We similarly define $Y=\{\sigma y+n: \sigma\in\{-1,1\}, n \in \Z\}$, the set of all $y' \in \R$ with $\Phi(y')=\Phi(y)$.
The smallest distance between any point in $X$ and any point in $Y$ is equal to the distance between the unique point $x_0 \in X \cap [0,1/2]$ and the unique point $y_0 \in Y \cap[0,1/2]$, so $|x_0-y_0| \leq |x-y|$ and $\Phi(x_0)-\Phi(y_0)=\Phi(x)-\Phi(y)$.
So it suffices to prove $|\Phi(x_0)-\Phi(y_0)|\leq 2 |x_0-y_0|$.
Since $x_0,y_0 \in [0,1]$, we have
\begin{align*}
\Phi(x_0)-\Phi(y_0)
& = 2\left(x_0-\frac{1}{2}\right)^2 -\frac{1}{6} - 2\left(y_0-\frac{1}{2}\right)^2 +\frac{1}{6} \\
& = 2(x_0-y_0)(x_0+y_0-1),
\end{align*}
so that $|\Phi(x_0)-\Phi(y_0)|=2 |x_0+y_0-1| \cdot |x_0-y_0|$, and since $x_0,y_0 \in [0,1/2]$, we have $|x_0+y_0-1| \leq 1$.
\end{proof}
Now we present an approximation of the crosscorrelation demerit factor for instances of cyclotomic patterns.
\begin{proposition}\label{Adam}
Let $n \in \Z^+$, let $d$ and $d'$ be nonzero balanced cyclotomic patterns of index $n$, let $p$ be a prime with $p\equiv 1 \pmod{n}$, let $\accentset{\circ}{f}$ and $\accentset{\circ}{g}$ be the periodic sequences of length $p$ derived from $d$ and $d'$, respectively, let $r,r'\in\Z$, and let $f=\accentset{\circ}{f}^{(r)}$ and $g=\accentset{\circ}{g}^{(r')}$.
Then there is some $E \in \R$ with
\[
|E| \leq 192 \left(\frac{\normone{\ft{d}} \normone{\ft{d'}}}{\normtwo{\ft{d}}\normtwo{\ft{d'}}}\right)^2 \frac{\sqrt{p} (1+\log p)^3}{p-1} + \frac{22 p-4}{3 p (p-1)}.
\]
such that
\begin{multline*}
\left(\frac{p-1}{p}\right) \CDF(f,g)
= \frac{1}{3} +\frac{2}{3} \PCDF(\accentset{\circ}{f},\accentset{\circ}{g}) + U_{d,d'}\,\, \Phi\left(\frac{r-r'}{p}\right) \\ + V_{d,d',p} \,\, \Phi\left(\frac{r+r'}{p}\right) + E.
\end{multline*}
\end{proposition}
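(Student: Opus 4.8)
The plan is to derive \cref{Adam} by specializing the approximation of the aperiodic crosscorrelation demerit factor of Boothby and Katz \cite{Boothby-Katz} to sequences supported on the index-$n$ cyclic subgroup $\{\omega_p^{(p-1)j/n}:j\in\Zn\}$ of $\mchars$, and then repackaging it with the identities of \cref{Matilda}. Because $\CDF(f,g)$, $\PCDF(\accentset{\circ}{f},\accentset{\circ}{g})$, $U_{d,d'}$, $V_{d,d',p}$, the arguments of $\Phi$, and the ratio $\normone{\ft{d}}\normone{\ft{d'}}/(\normtwo{\ft{d}}\normtwo{\ft{d'}})$ are all invariant under rescaling $d$ and $d'$ by nonzero scalars, I would first normalize so that $\normtwo{\ft{d}}=\normtwo{\ft{d'}}=1$. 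This puts us exactly in the normalized regime of \cite{Boothby-Katz}, under which our $U_{d,d'}$, $V_{d,d',p}$, $S_{d,d',p}$, $W_d$, $W_{d'}$ coincide with their $U_{f,g}$, $V_{f,g}$, $S_{f,g}$, $W_f$, $W_g$ via the correspondence $\accentset{\circ}{f}=\sum_{j\in\Zn}\ft{d}_j\,\omega_p^{(p-1)j/n}$ and the analogous expression for $\accentset{\circ}{g}$.

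The analytic core is the estimate itself, which I would obtain following \cite[Theorem 1 and Proposition 7]{Boothby-Katz}. Expanding each aperiodic correlation $\AC_{f,g}(s)$ into additive-character sums via the Gauss-sum expansion \cref{Horace}, one forms $\sum_{s}|\AC_{f,g}(s)|^2$, noting that only the shifts with $|s|<p$ contribute since the instances are supported on $\{0,\dots,p-1\}$. The diagonal part of this sum assembles the main terms carried by $U_{d,d'}$, $V_{d,d',p}$, and the periodic sum $\sum_{a\in\Fp}|\PC_{\accentset{\circ}{f},\accentset{\circ}{g}}(a)|^2$; the weight $(p-|s|)$ of the aperiodic sum against the flat periodic sum is exactly what produces the quadratic function $\Xi$ (hence $\Phi$) evaluated at $(r-r')/p$ and $(r+r')/p$. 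The off-diagonal and remainder terms are bounded by Weil's bound together with Sarwate's estimate \cref{Samantha}, and it is here that the factors $\sqrt{p}(1+\log p)^3$ and $(\normone{\ft{d}}\normone{\ft{d'}}/(\normtwo{\ft{d}}\normtwo{\ft{d'}}))^2$ enter the error. The outcome is an approximation that expresses $\bigl(\tfrac{p-1}{p}\bigr)\CDF(f,g)$ through $S_{d,d',p}$, $U_{d,d'}\,\Xi((r-r')/p)$, and $V_{d,d',p}\,\Xi((r+r')/p)$, with $p$-dependent rational coefficients and an analytic error $E_1$ bounded by a constant multiple of $(\normone{\ft{d}}\normone{\ft{d'}}/(\normtwo{\ft{d}}\normtwo{\ft{d'}}))^2\,\sqrt{p}(1+\log p)^3/(p-1)$.

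To reach the stated form, I would substitute the identity
\[
S_{d,d',p}=\frac{p-1}{p}\PCDF(\accentset{\circ}{f},\accentset{\circ}{g})-1-U_{d,d'}-V_{d,d',p}
\]
from \cref{Matilda}\eqref{Annette} and the relation $\Xi=\Phi+\tfrac23$. The additive constants $\tfrac23 U_{d,d'}$ and $\tfrac23 V_{d,d',p}$ coming from $\Xi=\Phi+\tfrac23$ cancel against the $-U_{d,d'}-V_{d,d',p}$ inside $S_{d,d',p}$, leaving $\tfrac13+\tfrac23\PCDF(\accentset{\circ}{f},\accentset{\circ}{g})+U_{d,d'}\Phi((r-r')/p)+V_{d,d',p}\Phi((r+r')/p)$ together with purely algebraic remainders of size $O(1/p)$, which arise from replacing the $\tfrac{p}{p-1}$-type coefficients of the intermediate estimate by $1$. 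Using $U_{d,d'},V_{d,d',p}\in[0,1]$ from \cref{Matilda}\eqref{John} and \eqref{Gordon}, the bounds $|\Phi|,|\Xi|\le1$ and the Lipschitz estimate of \cref{Laura}, and—where a periodic demerit factor needs to be controlled—the crude bound $\PCDF(\accentset{\circ}{f},\accentset{\circ}{g})\le p$ of \cref{Raymond}, I would bound these algebraic remainders and collect them with $E_1$ into the single error $E$, verifying that they total at most $\frac{22p-4}{3p(p-1)}$.

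The step I expect to be the main obstacle is the bookkeeping of the error term. The analytic error inherited from the Weil and Sarwate bounds must be tracked through the squaring and the double summation over shifts and characters to produce the explicit constant $192$ and the power $(1+\log p)^3$, and it must be kept cleanly separate from the elementary algebraic corrections of size $O(1/p)$. The delicate point is to place the $\tfrac{p}{p-1}$ factors in the intermediate estimate so that every algebraic leftover is genuinely $O(1/p)$ rather than $O(1)$; matching the final aggregate of these leftovers against the explicit bound $\frac{22p-4}{3p(p-1)}$ is the most calculation-intensive part of the argument.
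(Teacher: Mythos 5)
Your proposal is correct and follows essentially the same route as the paper: after the identical scaling normalization, the paper likewise imports the intermediate decomposition $N_S+N_T+N_U+N_V+E_1$ from the proof of Theorem~1 of \cite{Boothby-Katz}, substitutes the identity for $S_{d,d',p}$ from \cref{Matilda}, and exploits the same cancellation between $\Xi=\Phi+\tfrac{2}{3}$ and the $-U_{d,d'}-V_{d,d',p}$ terms. The error bookkeeping you flag as the main obstacle is handled exactly as you outline, using $\PCDF(\accentset{\circ}{f},\accentset{\circ}{g})\le p$ from \cref{Raymond}, $|\Xi|\le 1$, $U_{d,d'},V_{d,d',p}\in[0,1]$, and the Lipschitz bound of \cref{Laura} to replace $\Phi\left(\frac{r+r'-1}{p}\right)$ by $\Phi\left(\frac{r+r'}{p}\right)$ at a cost of $2/p$, which combines with the intermediate bound $\frac{16p+2}{3p(p-1)}$ to give precisely $\frac{22p-4}{3p(p-1)}$.
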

\begin{proof}
Note that if we scale $d$ (resp., $d'$), then $\ft{d}$, $\accentset{\circ}{f}$, and $f$ (resp, $\ft{d'}$, $\accentset{\circ}{g}$, and $g$) scale correspondingly, but $\CDF(f,g)$, $\PCDF(f,g)$, $U_{d,d'}$, $V_{d,d',p}$ and the upper bound for $E$ in the statement of this proposition are invariant to that scaling.
Thus, without loss of generality, we may assume that $d$ and $d'$ are scaled so that $\normtwo{\ft{d}}=\normtwo{\ft{d'}}=1$, which is one condition that Boothby and Katz insist upon on p.~6163 of \cite{Boothby-Katz}.
The other condition they insist upon on p.~6164 is that $\ft{d}_0=\ft{d'}_0=0$, which by \cref{Greta} is equivalent to our assumption in the statement of this proposition that $d$ and $d'$ are balanced.
Then one follows the proof of Theorem 1 in \cite{Boothby-Katz} from the beginning up to the formula
\[
\frac{\sum_{j \in \Z} |\AC_{f,g}(j)|^2}{\ell^2} = N_S + N_T + N_U + N_V + E_1,
\]
with the definitions of the quantities on the right hand side following immediately after.
We are in the special case $\ell=p$, and our advancement $r'$ is called $s$ in the paper, so that when we translate the formulas into the notation of this paper, we have
\begin{equation}\label{Vincent}
\begin{aligned}
N_S & = S_{d,d',p} \left(\frac{2 p^2+1}{3 p^2}\right), \\
N_T & = \Xi(0)=1, \\
N_U & = U_{d,d'} \,\, \Xi\left(\frac{r-r'}{p}\right), \\
N_V & = V_{d,d',p} \,\, \Xi\left(1+\frac{r+r'-1}{p}\right) = V_{d,d',p} \,\, \Xi\left(\frac{r+r'-1}{p}\right)\text{, and} \\
|E_1| & \leq 192 W_d^2 W_{d'}^2 \frac{(1+\log p)^3}{\sqrt{p}},
\end{aligned}
\end{equation}
where we have used our definition of $\Xi$ to provide the second equalities for $N_T$ and $N_V$.
Since $\ell=p$, we have
\[
\frac{\sum_{j \in \Z} |\AC_{f,g}(j)|^2}{\ell^2}=\frac{\sum_{j \in \Z} |\AC_{f,g}(j)|^2}{p^2}= N_S + N_T + N_U + N_V + E_1.
\]
By \cref{Abigail} and our assumptions about $d$ and $d'$, we have $\AC_{f,f}(0)=\AC_{g,g}(0)=p-1$, so that our equation becomes $(p-1)^2 \CDF(f,g)/p^2 = N_S + N_T + N_U + N_V + E_1$.
We substitute the values from \eqref{Vincent} to obtain
\begin{equation}\label{Irene}
\begin{aligned}
\left(\frac{p-1}{p}\right)^2 \CDF(f,g)
= \left(\frac{2}{3}+\frac{1}{3 p^2}\right) S_{d,d',p} + 1 & \\ + U_{d,d'} \,\, \Xi\left(\frac{r-r'}{p}\right) + V_{d,d',p} \,\, & \Xi\left(\frac{r+r'-1}{p}\right) + E_1.
\end{aligned}
\end{equation}
Recall that $W_d=\normone{\ft{d}}/\normtwo{\ft{d}}$ and $W_{d'}=\normone{\ft{d}}/\normtwo{\ft{d'}}$, so that
\[
|E_1| \leq 192 \left(\frac{\normone{\ft{d}} \normone{\ft{d'}}}{\normtwo{\ft{d}}\normtwo{\ft{d'}}}\right)^2 \frac{(1+\log p)^3}{\sqrt{p}}.
\]
Now we use the formula for $S_{d,d',p}$ from \cref{Matilda}\eqref{Annette} in \eqref{Irene} to obtain
\begin{multline*}
\left(\frac{p-1}{p}\right)^2 \CDF(f,g)
\\
= \left(\frac{2}{3}+\frac{1}{3 p^2}\right) \left(\frac{p-1}{p} \PCDF(\accentset{\circ}{f},\accentset{\circ}{g})-1-U_{d,d'}-V_{d,d',p}\right)
\\+ 1 + U_{d,d'} \,\, \Xi\left(\frac{r-r'}{p}\right) + V_{d,d',p} \,\, \Xi\left(\frac{r+r'-1}{p}\right) + E_1,
\end{multline*}
and if we multiply both sides by $p/(p-1)$ and rearrange, we get
\begin{multline*}
\left(\frac{p-1}{p}\right) \CDF(f,g)
= \left(\frac{2}{3}+\frac{1}{3 p^2}\right) \PCDF(\accentset{\circ}{f},\accentset{\circ}{g}) \\
-\left(\frac{2}{3}+\frac{2 p+1}{3 p(p-1)}\right) \left(1+U_{d,d'}+V_{d,d',p}\right) \\
+ \left(1+\frac{1}{p-1}\right) \left[1 + U_{d,d'} \,\, \Xi\left(\frac{r-r'}{p}\right) + V_{d,d',p} \,\, \Xi\left(\frac{r+r'-1}{p}\right) + E_1\right],
\end{multline*}
so that
\begin{equation}\label{Boris}
\begin{aligned}
\left(\frac{p-1}{p}\right) \CDF(f,g)
= \frac{2}{3} \PCDF(\accentset{\circ}{f},\accentset{\circ}{g}) & -\frac{2}{3} \left(1+U_{d,d'}+V_{d,d',p}\right) 
\\ + 1 + U_{d,d'} \,\, \Xi\left(\frac{r-r'}{p}\right) & + V_{d,d',p} \,\, \Xi\left(\frac{r+r'-1}{p}\right) + E_2,
\end{aligned}
\end{equation}
where
\begin{multline*}
E_2 = \left(\frac{p}{p-1}\right) E_1 + \frac{1}{3 p^2} \PCDF(\accentset{\circ}{f},\accentset{\circ}{g}) -\frac{2 p+1}{3 p(p-1)} \left(1+U_{d,d'}+V_{d,d',p}\right)  \\
+\frac{1}{p-1} \left[1 + U_{d,d'} \,\, \Xi\left(\frac{r-r'}{p}\right) + V_{d,d',p} \,\, \Xi\left(\frac{r+r'-1}{p}\right)\right].
\end{multline*}
From \cref{Matilda}\eqref{John}--\eqref{Gordon}, we know that $|U_{d,d'}|,|V_{d,d',p}| \leq 1$ for all $d,d'$, \cref{Raymond} shows that $\PCDF(\accentset{\circ}{f},\accentset{\circ}{g})\leq p$, and \cref{Laura} shows that $|\Xi(x)| \leq 1$ for all $x$, so we have
\[
|E_2| \leq 192 \left(\frac{\normone{\ft{d}} \normone{\ft{d'}}}{\normtwo{\ft{d}}\normtwo{\ft{d'}}}\right)^2 \frac{\sqrt{p} (1+\log p)^3}{p-1} + \frac{16 p+2}{3 p(p-1)}.
\]
We rearrange \eqref{Boris} to obtain
\begin{multline*}
\left(\frac{p-1}{p}\right) \CDF(f,g)
= \frac{1}{3} +\frac{2}{3} \PCDF(\accentset{\circ}{f},\accentset{\circ}{g}) + U_{d,d'}\,\, \Phi\left(\frac{r-r'}{p}\right) \\ + V_{d,d',p} \,\, \Phi\left(\frac{r+r'-1}{p}\right) + E_2.
\end{multline*}
Then we use \cref{Laura} to see that $\Phi((r+r'-1)/p)$ differs from $\Phi((r+r')/p)$ by no more than $2/p$, and since $|V_{d,d',p}| \leq 1$, we have
\begin{equation}\label{Gladys}
\begin{aligned}
\left(\frac{p-1}{p}\right) \CDF(f,g)
= \frac{1}{3} +\frac{2}{3} \PCDF(\accentset{\circ}{f},\accentset{\circ}{g}) & \\ + U_{d,d'}\,\, \Phi\left(\frac{r-r'}{p}\right) & + V_{d,d',p} \,\, \Phi\left(\frac{r+r'}{p}\right) + E,
\end{aligned}
\end{equation}
for some $E$ with
\[
|E| \leq 192 \left(\frac{\normone{\ft{d}} \normone{\ft{d'}}}{\normtwo{\ft{d}}\normtwo{\ft{d'}}}\right)^2 \frac{\sqrt{p} (1+\log p)^3}{p-1} + \frac{22 p-4}{3 p(p-1)}.
\]
It should be noted that $E$ must be real since all the other terms in \eqref{Gladys} are real.
\end{proof}
Now we generalize \cref{Adam} to codebooks.
\begin{proposition}\label{Aaron}
Let $n \in \Z^+$ and let $D$ be a balanced cyclotomic plan of index $n$ with $0\not\in D$.
Let $p$ be a prime with $p\equiv 1 \pmod{n}$, and for each $d \in D$, let $f_d$ be the periodic sequence of length $p$ from $d$.
Let $F=\{f_d: d \in D\}$, the periodic codebook of length $p$ from $D$.
For each $d \in D$, let $r_d \in \Z$, and let $g_d=f_d^{(r_d)}$ so that the codebook $G=\{g_d: d \in D\}$ is a $p$-instance of $D$.
If $|G|=|D|$, then there is some $E \in \R$ with
\[
|E| \leq 192 \left(\frac{1}{|D|} \sum_{d \in D} \frac{\normonesq{\ft{d}}}{\normtwosq{\ft{d}}}\right)^2 \frac{\sqrt{p} (1+\log p)^3}{p-1} + \frac{22 p-4}{3 p (p-1)}
\]
such that
\begin{multline*}
\left(\frac{p-1}{p}\right) \CDF(G)
= \frac{1}{3} +\frac{2}{3} \PCDF(F) + \frac{1}{|D|^2} \sum_{d,d' \in D} U_{d,d'}\,\, \Phi\left(\frac{r_d-r_{d'}}{p}\right) \\ + \frac{1}{|D|^2} \sum_{d,d' \in D} V_{d,d',p} \,\, \Phi\left(\frac{r_d+r_{d'}}{p}\right) + E.
\end{multline*}
\end{proposition}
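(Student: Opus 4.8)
The plan is to deduce \cref{Aaron} directly from the single-pair estimate \cref{Adam} by averaging it over all ordered pairs of sequences in the codebook. First I would verify that the hypotheses of \cref{Adam} hold for every pair: since $D$ is balanced and $0\not\in D$, each $d\in D$ is a nonzero balanced cyclotomic pattern of index $n$, so for any $d,d'\in D$ we may apply \cref{Adam} to the rotated sequences $g_d=f_d^{(r_d)}$ and $g_{d'}=f_{d'}^{(r_{d'})}$, taking $r=r_d$ and $r'=r_{d'}$.

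The assumption $|G|=|D|$ is what makes the averaging clean: it forces $d\mapsto g_d$ to be a bijection of $D$ onto $G$ (the analogous map $d\mapsto f_d$ onto the periodic codebook $F$ is injective automatically, since distinct patterns always yield distinct periodic sequences of length $p$), so that the demerit factors defined in the introduction and in \cref{Irving} reindex as
\[
\CDF(G)=\frac{1}{|D|^2}\sum_{d,d'\in D}\CDF(g_d,g_{d'}),\qquad \PCDF(F)=\frac{1}{|D|^2}\sum_{d,d'\in D}\PCDF(f_d,f_{d'}).
\]
I would then take the identity furnished by \cref{Adam} for each ordered pair $(d,d')$, write its error term as $E_{d,d'}$, multiply through by $(p-1)/p$, and average over $D\times D$. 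Since there are $|D|^2$ pairs, the constant $1/3$ averages to $1/3$; the $\tfrac{2}{3}\PCDF(f_d,f_{d'})$ terms sum to $\tfrac{2}{3}\PCDF(F)$; the $U_{d,d'}\,\Phi((r_d-r_{d'})/p)$ and $V_{d,d',p}\,\Phi((r_d+r_{d'})/p)$ terms reproduce verbatim the two double sums in the statement; and the errors collect into $E=\frac{1}{|D|^2}\sum_{d,d'}E_{d,d'}$, which is real because each $E_{d,d'}$ is.

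The only remaining step is to simplify the bound on $E$. By the triangle inequality $|E|\le \frac{1}{|D|^2}\sum_{d,d'}|E_{d,d'}|$, and I would insert the per-pair bound from \cref{Adam}. The additive constant $\frac{22p-4}{3p(p-1)}$ does not depend on $(d,d')$, so its average equals itself, while the multiplicative part factors as
\[
\left(\frac{\normone{\ft{d}}\normone{\ft{d'}}}{\normtwo{\ft{d}}\normtwo{\ft{d'}}}\right)^2=\frac{\normonesq{\ft{d}}}{\normtwosq{\ft{d}}}\cdot\frac{\normonesq{\ft{d'}}}{\normtwosq{\ft{d'}}},
\]
so the double sum separates into a product of identical single sums, giving $\frac{1}{|D|^2}\sum_{d,d'}\big(\cdots\big)=\big(\frac{1}{|D|}\sum_{d}\normonesq{\ft{d}}/\normtwosq{\ft{d}}\big)^2$, which matches the bound claimed in \cref{Aaron} exactly. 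There is no deep obstacle: essentially all the analytic content resides in \cref{Adam}. The two points that require attention are the reindexing (which is precisely why the $|G|=|D|$ hypothesis is imposed, to prevent distinct patterns whose rotations coincide from being miscounted) and recognizing the separable, rank-one structure of the error bound, so that the average of a product of $d$- and $d'$-factors becomes the square of a single average.
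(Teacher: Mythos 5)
Your proposal is correct and takes essentially the same route as the paper's own proof, which likewise applies \cref{Adam} to each ordered pair $(d,d')\in D\times D$ and then averages, leaving the details implicit in a single sentence. The points you spell out explicitly---that $|G|=|D|$ makes $d\mapsto g_d$ a bijection so the averaged demerit factors reindex correctly, and that the per-pair error bound is separable so that $\frac{1}{|D|^2}\sum_{d,d'}\frac{\normonesq{\ft{d}}}{\normtwosq{\ft{d}}}\cdot\frac{\normonesq{\ft{d'}}}{\normtwosq{\ft{d'}}}=\bigl(\frac{1}{|D|}\sum_{d}\normonesq{\ft{d}}/\normtwosq{\ft{d}}\bigr)^2$---are exactly the steps the paper's proof elides, and you execute them correctly.
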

\begin{proof}
For $d,d' \in D$, \cref{Adam} tells us that there is some $E_{d,d'} \in \R$ with
\[
|E_{d,d'}| \leq 192 \left(\frac{\normone{\ft{d}} \normone{\ft{d'}}}{\normtwo{\ft{d}}\normtwo{\ft{d'}}}\right)^2 \frac{\sqrt{p} (1+\log p)^3}{p-1} + \frac{22 p-4}{3 p (p-1)},
\]
such that
\begin{multline*}
\left(\frac{p-1}{p}\right) \CDF(g_d,g_{d'})
= \frac{1}{3} +\frac{2}{3} \PCDF(f_d,f_{d'}) + U_{d,d'}\,\, \Phi\left(\frac{r_d-r_{d'}}{p}\right) \\ + V_{d,d',p} \,\, \Phi\left(\frac{r_d+r_{d'}}{p}\right) + E_{d,d'},
\end{multline*}
and then one averages this expression over all $(d,d') \in D\times D$ to get the desired result.
\end{proof}

\subsection{Aperiodic correlation of codebooks from Hadamard plans}\label{Madeleine}

In this section we specialize the results of the previous section to codebooks derived from Hadamard plans.
\begin{lemma}\label{Maynard}
Let $n \in \Z^+$ and let $D$ be a Hadamard plan of index $n$.
\begin{enumerate}[(i).]
\item\label{Arthur} For $d,d' \in D$, we have
\[
U_{d,d'}= \begin{cases} 1 & \text{if $d=d'$,} \\
0 & \text{otherwise.}
\end{cases}
\]
\item\label{Abel} For $\sigma \in \Z/2\Z$, we have
\[
\frac{1}{|D|^2} \sum_{d,d' \in D} V^{(\sigma)}_{d,d'}=\frac{1}{n-1},
\]
and so for any prime $p$ with $p\equiv 1\pmod{n}$, we have
\[
\frac{1}{|D|^2} \sum_{d,d' \in D} V_{d,d',p}=\frac{1}{n-1}.
\]
\end{enumerate}
\end{lemma}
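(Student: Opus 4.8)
The plan is to handle the two parts separately, with part \eqref{Abel} carrying essentially all of the work. For part \eqref{Arthur}, I would argue directly from the formula $U_{d,d'} = |\ip{d}{d'}|^2/(\normtwosq{d}\normtwosq{d'})$ supplied by \cref{Matilda}\eqref{John}. When $d = d'$ that same lemma already records $U_{d,d} = 1$, and when $d \neq d'$ the orthogonality built into the definition of a Hadamard plan forces $\ip{d}{d'} = 0$, whence $U_{d,d'} = 0$. No further computation is needed.

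For part \eqref{Abel}, the first step is to use unimodularity of the plan to replace every $\normtwosq{d}$ and $\normtwosq{d'}$ by $n$, so that with $t$ denoting the translation $\sigma(n/2)$ we have
\[
\frac{1}{|D|^2}\sum_{d,d' \in D} V^{(\sigma)}_{d,d'} = \frac{1}{|D|^2 n^2} \sum_{d,d' \in D} \left|\sum_{j \in \Zn} d_j d'_{j+t}\right|^2.
\]
Next I would expand the squared modulus as a double sum $\sum_{j,k \in \Zn} d_j \conj{d_k}\, d'_{j+t}\conj{d'_{k+t}}$ and interchange the order of summation, which factors the sum over $D\times D$ into a product $\left(\sum_{d \in D} d_j \conj{d_k}\right)\left(\sum_{d' \in D} d'_{j+t}\conj{d'_{k+t}}\right)$ for each pair $(j,k)$. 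Here \cref{Oswald} evaluates each factor: the first equals $n\delta_{j,k} - 1$, and the second equals $n\delta_{j+t,k+t} - 1 = n\delta_{j,k} - 1$, since shifting both indices by $t$ leaves unchanged the question of whether they are equal.

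The remaining step is the elementary count $\sum_{j,k \in \Zn}(n\delta_{j,k}-1)^2 = n(n-1)^2 + (n^2-n) = n^2(n-1)$, obtained by separating the $n$ diagonal terms from the $n^2-n$ off-diagonal ones. Substituting this back and using $|D| = n-1$ gives $\tfrac{1}{(n-1)^2 n^2}\cdot n^2(n-1) = \tfrac{1}{n-1}$, as claimed. Because this identity holds for \emph{every} $\sigma \in \Z/2\Z$, it applies in particular to $\sigma = ((p-1)/n) + 2\Z$; combined with the identity $V_{d,d',p} = V^{((p-1)/n)}_{d,d'}$ from \cref{Matilda}\eqref{Gordon}, this yields the final statement about $V_{d,d',p}$. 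There is no serious obstacle in this argument: the only point that needs care is the translation-invariance observation that the shift $t$ in the second factor is absorbed harmlessly by \cref{Oswald}, which is precisely what makes the averaged value independent of $\sigma$.
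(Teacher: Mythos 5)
Your proposal is correct, and your part \eqref{Arthur} coincides with the paper's argument; your part \eqref{Abel}, however, takes a genuinely different route. The paper normalizes the patterns to $E=\{d/\normtwo{d}: d \in D\}$, adjoins $w=n^{-1/2}(1,1,\ldots,1)$ to obtain an orthonormal basis $G$ of the full space of periodic sequences of length $n$, rewrites $V^{(\sigma)}_{d,d'}=|\ip{e}{\tilde{e'}}|^2$ where $\tilde{e'}$ is the conjugated (and, when $\sigma=1$, half-period shifted) copy of $e'$ --- still balanced, hence orthogonal to $w$ --- and then invokes Parseval: $\sum_{e \in G} |\ip{e}{\tilde{e'}}|^2 = \normtwosq{\tilde{e'}} = 1$, so the double sum collapses to $|E|=n-1$. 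You instead expand the squared modulus, interchange summations so that the sum over $D \times D$ factors as a product over the pair $(j,k)$, and evaluate each factor with \cref{Oswald} --- in effect exploiting the column orthogonality $M^*M = nI$ of the completed matrix rather than completeness of an orthonormal basis --- finishing with the elementary count $\sum_{j,k \in \Zn}(n\delta_{j,k}-1)^2 = n(n-1)^2 + (n^2-n) = n^2(n-1)$, which is correct, as is your observation that the shift $t=\sigma(n/2)$ is absorbed because $\delta_{j+t,k+t}=\delta_{j,k}$ (this works for any $t \in \Zn$, covering the paper's convention for odd $n$). Both arguments locate the $\sigma$-independence in the same translation-invariance, and both are complete; your route is more computational but arguably shorter given that \cref{Oswald} is already available in the paper, while the paper's Parseval argument is coordinate-free and makes transparent that only the count $|D|=n-1$ together with orthogonality and balance matters after normalization --- by contrast, your replacement of $\normtwosq{d}$ by $n$ and your appeal to \cref{Oswald} use unimodularity essentially, which is harmless here since $D$ is a Hadamard plan.
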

\begin{proof}
Since $U_{d,d'}=  \left|\ip{d}{d'}\right|^2/(\normtwosq{d}\normtwosq{d'})$ by \cref{Matilda}\eqref{John}, and since the elements of $D$ are orthogonal to each other, part \eqref{Arthur} follows.

Let $\sigma\in\Z/2\Z$.
Let $E=\{d/\normtwo{d}: d \in D\}$, which is an orthonormal set of balanced periodic sequences of length $n$.
Let $w$ be the periodic sequence $n^{-1/2} (1,1,\ldots,1)$ of length $n$, and let $G=E\cup\{w\}$, so that $G$ is an orthonormal basis of the $n$-dimensional $\C$-inner product space of periodic sequences of length $n$.
If $e=(e_0,e_1,\ldots,e_{n-1})$ is a periodic sequence of length $n$, let
\[
\tilde{e} = \begin{cases}
(\conj{e_0},\conj{e_1},\ldots,\conj{e_{n-1}}) & \text{if $\sigma=0$,} \\
(\conj{e_{n/2}},\conj{e_{n/2+1}},\ldots,\conj{e_{3 n/2-1}}) & \text{if $\sigma=1$,}
\end{cases}
\]
and recall that these indices are read modulo $n$, so the pattern in the second case is obtained from $e$ by cyclically shifting half way around and then conjugating all the elements.
Note that $\normtwo{\tilde{e}}=\normtwo{e}$.
Let $\tilde{E}=\{\tilde{e}: e \in E\}$, which is a set of balanced unit vectors.
We have
\begin{align*}
\frac{1}{|D|^2} \sum_{d,d' \in D} V^{(\sigma)}_{d,d'}
& = \frac{1}{|D|^2} \sum_{d,d' \in D} \frac{\left|\sum_{j \in \Zn} d_j d'_{j+\sigma(n/2)} \right|^2}{\normtwosq{d} \normtwosq{d'}} \\
& = \frac{1}{|D|^2} \sum_{e,e' \in E} \left|\sum_{j \in \Zn} e_j e'_{j+\sigma(n/2)} \right|^2 \\
& = \frac{1}{|D|^2} \sum_{e,e' \in E} |\ip{e}{\tilde{e'}}|^2 \\
& = \frac{1}{|D|^2} \sum_{e' \in E} \sum_{e \in G} |\ip{e}{\tilde{e'}}|^2,
\end{align*}
where we have used our definition of $E$ in the second equality, our definition of $\tilde{e'}$ in the third, and in the fourth the fact that $G=E\cup\{w\}$ with $w$ orthogonal to every $\tilde{e'}$ for $e' \in E$, so extending the summation of $e$ does not change the value.
Since $G$ is an orthonormal basis of the space of periodic sequences of length $n$, we know that for any $e' \in E$, we have $\sum_{e \in G} \left|\ip{e}{\tilde{e'}}\right|^2=\normtwosq{\tilde{e'}}$, which equals $1$.
Thus, the inner sum in our last expression is always $1$, and since $|E|=|D|=n-1$, we obtain the desired average value of $V^{(\sigma)}_{d,d'}$.  The average value of $V^{d,d',p}$ then follows immediately from \cref{Matilda}\eqref{Gordon}.
\end{proof}
Now we consider families of $p$-instances (or unimodularized $p$-instances) of a Hadamard plan.
\begin{theorem}\label{Victor}
Let $n \in \Z^+$ and let $D$ be a Hadamard plan of index $n$.
Let $\{A_\iota\}_{\iota \in A}$ be a family of instances (or a family of unimodularized instances) of $D$.
For each $\iota \in I$, let $p_\iota$ be a prime with $p_\iota \equiv 1 \pmod{n}$ and let $r_\iota \colon D \to \Z$ be a function such that $A_\iota=\{f_{\iota,d}d^{(r_\iota(d))}: d \in D\}$, where $f_{\iota,d}$ is the periodic sequence of length $p_\iota$ from $d$ (or a unimodularized periodic sequence of length $p_\iota$ from $d$).
Let $\rho\colon D \to \R$ be a function and let $\sigma \in \Z/2\Z$.
Suppose that $\{p_\iota\}_{\iota \in I}$ is unbounded, that $(p_\iota-1)/n \equiv \sigma \pmod{2}$ for every $\iota \in I$, and that for every $d \in D$, the quantity $r_\iota(d)/p_\iota$ tends to $\rho(d)$ as $p_\iota$ tends to infinity.
Then $\CDF(A_\iota)$ tends to
\[
1+ \frac{1}{3(n-1)} + \frac{1}{(n-1)^2} \sum_{d,d' \in D} V^{(\sigma)}_{d,d'} \,\, \Phi\left(\rho(d)+\rho(d')\right).
\]
as $p_\iota$ tends to infinity.
\end{theorem}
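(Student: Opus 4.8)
The plan is to assemble this limit from the approximation in \cref{Aaron} together with the Hadamard-specific evaluations of $\PCDF$, $U$, and $V$, and then pass to the limit. First I would dispose of the unimodularized case by reducing it to the non-unimodularized one. For each $\iota$ let $B_\iota$ be the genuine $p_\iota$-instance of $D$ obtained by rotating the non-unimodularized periodic sequences from $D$ by the same advancements $r_\iota(d)$. By \cref{Nora} there is an $N$ so that whenever $p_\iota \ge N$ both $A_\iota$ and $B_\iota$ contain exactly $|D| = n-1$ sequences; since the conclusion concerns only the limit as $p_\iota \to \infty$, I may restrict attention to such $\iota$. Then \cref{Karl}, applied with $B_\iota$ and $A_\iota$ playing the roles of the non-unimodularized and unimodularized codebooks, shows that $\CDF(A_\iota)$ and $\CDF(B_\iota)$ tend to the same limit, so it suffices to treat the family $\{B_\iota\}$.

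Next I would invoke \cref{Aaron}. A Hadamard plan is balanced and unimodular, so $0 \notin D$ and the hypotheses of \cref{Aaron} hold; with $|B_\iota| = |D|$ secured, \cref{Aaron} furnishes, for each such $\iota$, a real error $E_\iota$ bounded by a constant times $\sqrt{p_\iota}(1+\log p_\iota)^3/(p_\iota-1)$ plus a term of order $1/p_\iota$, such that $\frac{p_\iota-1}{p_\iota}\CDF(B_\iota)$ equals $\frac{1}{3} + \frac{2}{3}\PCDF(F_\iota)$ plus the $U$-average and the $V$-average plus $E_\iota$. Into this I substitute the Hadamard evaluations: \cref{Heloise} gives $\PCDF(F_\iota) = p_\iota/(p_\iota-1)$; \cref{Maynard}\eqref{Arthur} gives $U_{d,d'} = 1$ when $d=d'$ and $0$ otherwise, so the $U$-average collapses to its diagonal, namely $\frac{1}{(n-1)^2}\sum_{d\in D}\Phi(0) = \Phi(0)/(n-1) = \frac{1}{3(n-1)}$, using $\Phi(0)=1/3$; and \cref{Matilda}\eqref{Gordon} together with the parity hypothesis $(p_\iota-1)/n \equiv \sigma \pmod 2$ gives $V_{d,d',p_\iota} = V^{(\sigma)}_{d,d'}$, a quantity independent of $\iota$. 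Note that the $U$-term is exactly $\frac{1}{3(n-1)}$ for every admissible $\iota$, independent of the rotations and of $p_\iota$.

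Finally I would take $p_\iota \to \infty$. The factor $\frac{p_\iota-1}{p_\iota}\to 1$ and hence $\frac{2}{3}\PCDF(F_\iota) = \frac{2}{3}\cdot\frac{p_\iota}{p_\iota-1}\to\frac{2}{3}$; the error bound forces $E_\iota \to 0$; and since $r_\iota(d)/p_\iota \to \rho(d)$ for each $d$, the argument $(r_\iota(d)+r_\iota(d'))/p_\iota \to \rho(d)+\rho(d')$, whence $\Phi\bigl((r_\iota(d)+r_\iota(d'))/p_\iota\bigr) \to \Phi(\rho(d)+\rho(d'))$ by the Lipschitz bound of \cref{Laura}. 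Thus $\frac{p_\iota-1}{p_\iota}\CDF(B_\iota)$ converges to $\frac{1}{3} + \frac{2}{3} + \frac{1}{3(n-1)} + \frac{1}{(n-1)^2}\sum_{d,d'\in D} V^{(\sigma)}_{d,d'}\,\Phi(\rho(d)+\rho(d'))$, and multiplying by $\frac{p_\iota}{p_\iota-1}\to 1$ yields the same limit for $\CDF(B_\iota)$, which is exactly the asserted value.

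The proof is in essence an assembly of prior results rather than a new calculation. The only points requiring genuine care are the bookkeeping that secures the cardinality conditions demanded by \cref{Nora}, \cref{Karl}, and \cref{Aaron}, and the justification that the limit of the product $\frac{p_\iota}{p_\iota-1}\cdot\frac{p_\iota-1}{p_\iota}\CDF(B_\iota)$ equals the product of the two limits; both are routine once the continuity of $\Phi$ from \cref{Laura} is available and the error bound from \cref{Aaron} is seen to vanish. The substantive difficulty lies upstream, in the approximation \cref{Aaron} and the cyclotomic-number computations underlying \cref{Heloise} and \cref{Maynard}, all of which are already established.
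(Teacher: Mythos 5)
Your proposal is correct and follows essentially the same route as the paper's own proof: the same assembly of \cref{Aaron} with the Hadamard evaluations from \cref{Heloise}, \cref{Maynard}, and \cref{Matilda} under the parity hypothesis, the same limit passage via \cref{Laura}, and the same \cref{Nora}/\cref{Karl} reduction for the unimodularized case. The only difference is cosmetic: you dispose of the unimodularized case first and then prove the plain case, whereas the paper proves the plain case first and then invokes \cref{Karl}, which changes nothing of substance.
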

\begin{proof}
First we prove the result when $\{A_\iota\}_{\iota\in I}$ is a family of instances of $D$.
By \cref{Nora}, there is some $N$ such that whenever $p_\iota \geq N$, we have $|A_\iota|=|D|$, and we set $J=\{\iota \in I: p_\iota \geq N\}$.
Since our result is about a limit as $p_\iota \to \infty$, it suffices to prove our result with $\{A_\iota\}_{\iota \in J}$ in place of $\{A_\iota\}_{\iota \in I}$.
For each $\iota \in J$, let $P_\iota$ be the periodic codebook $\{f_{\iota,d}: d \in D\}$.
Since $D$ is Hadamard, $\normtwo{\ft{d}}=1$ for every $d \in D$ and $|D|=n-1$, and so \cref{Aaron} says that there is some $E_\iota \in \R$ with
\[
|E_\iota| \leq 192 \left(\frac{1}{n-1} \sum_{d \in D} \normonesq{\ft{d}}\right)^2 \frac{\sqrt{p_\iota} (1+\log p_\iota)^3}{p_\iota-1} + \frac{22 p_\iota-4}{3 p_\iota (p_\iota-1)}
\]
such that
\begin{multline*}
\left(\frac{p_\iota-1}{p_\iota}\right) \CDF(A_\iota)
= \frac{1}{3} +\frac{2}{3} \PCDF(P_\iota) \\ + \frac{1}{(n-1)^2} \sum_{d,d' \in D} U_{d,d'}\,\, \Phi\left(\frac{r_{\iota}(d)-r_{\iota}(d')}{p_\iota}\right) \\ + \frac{1}{(n-1)^2} \sum_{d,d' \in D} V_{d,d'}^{(\sigma)} \,\, \Phi\left(\frac{r_{\iota}(d)+r_{\iota}(d')}{p_\iota}\right) + E_\iota,
\end{multline*}
where we have used \cref{Matilda}\eqref{Gordon} to replace $V_{d,d',p_\iota}$ with $V^{(\sigma)}_{d,d'}$ because of our assumption on the parity of $(p_\iota-1)/n$.
Using the values of $U_{d,d'}$ from \cref{Maynard}\eqref{Arthur} and of $\PCDF(P_\iota)$ from \cref{Heloise}, we obtain
\begin{multline*}
\left(\frac{p_\iota-1}{p_\iota}\right) \CDF(A_\iota)
= \frac{1}{3} +\frac{2}{3} \left(\frac{p_\iota}{p_\iota-1}\right) + \frac{1}{(n-1)^2} \sum_{d \in D} \Phi\left(\frac{r_{\iota}(d)-r_{\iota}(d)}{p_\iota}\right) \\ + \frac{1}{(n-1)^2} \sum_{d,d' \in D} V_{d,d'}^{(\sigma)} \,\, \Phi\left(\frac{r_{\iota}(d)+r_{\iota}(d')}{p_\iota}\right) + E_\iota,
\end{multline*}
which equals
\begin{multline*}
\left(\frac{p_\iota-1}{p_\iota}\right) \CDF(A_\iota)
= \frac{1}{3} +\frac{2}{3} \left(\frac{p_\iota}{p_\iota-1}\right) + \frac{1}{3 (n-1)}\\ + \frac{1}{(n-1)^2} \sum_{d,d' \in D} V_{d,d'}^{(\sigma)} \,\, \Phi\left(\frac{r_{\iota}(d)+r_{\iota}(d')}{p_\iota}\right) + E_\iota,
\end{multline*}
since $\Phi(0)=1/3$.
As $p_\iota$ tends to infinity, $E_\iota$ tends to zero, $p_\iota/(p_\iota-1)$ tends to $1$, and by continuity of $\Phi$, each term $\Phi((r_{\iota}(d)+r_{\iota}(d'))/p_\iota)$ tends to $\Phi(\rho(d)+\rho(d'))$.
Thus, $\CDF(A_\iota)$ tends to the stated limit.

When $\{A_\iota\}_{\iota \in I}$ is a family of unimodularized instances of $D$, then for each $\iota \in I$ and $d \in D$, let $g_{\iota,d}$ be the periodic sequence of length $p_\iota$ from $d$, so that the sequence $f_{\iota,d}$ in the statement of this theorem is unimodularization of $g_{\iota,d}$.
For each $\iota \in J$, let $P_\iota=\{g_{\iota,d}: d \in D\}$ and $\uni{P}_\iota=\{f_{\iota,d}: d \in D\}$, so that $\uni{P}_\iota$ is a unimodularization of $P_\iota$, and $A_\iota$ is a rotation of $\uni{P}_\iota$.
Let $B_\iota=\{g_{\iota,d}^{(r_\iota(d))}: d \in D\}$, so that the part of this theorem we have already proved shows that, as $p_\iota$ tends to infinity, $\CDF(B_\iota)$ tends to the limit we desire to show for $\CDF(A_\iota)$.
Note that $\len(P_\iota)=p_\iota$ and $|P_\iota|=|\uni{P}_\iota|=|D|$ for every $\iota \in I$.
By \cref{Nora}, there is some $M$ such that whenever $p_\iota \geq M$, we have $|A_\iota|=|B_\iota|=|D|$, and we set $K=\{\iota \in I: p_\iota \geq M\}$.
If we let $\{P_\iota\}_{\iota \in K}$, $\{\uni{P}_\iota\}_{\iota \in K}$, $\{B_\iota\}_{\iota \in K}$, and $\{A_\iota\}_{\iota \in K}$ respectively serve the roles of $\{P_\iota\}_{\iota \in I}$, $\{\uni{P}_\iota\}_{\iota \in I}$, $\{A_\iota\}_{\iota \in I}$, and $\{\uni{A}_\iota\}_{\iota \in I}$  in \cref{Karl}, we see that $\CDF(A_\iota)$ has the same limiting behavior that $\CDF(B_\iota)$ does as $p_\iota$ tends to infinity.
\end{proof}
Recall the definition of a coherently rotated family from \cref{Rudolph} in the Introduction.
When we specialize \cref{Victor} to such families, we get an especially tidy limit, which we presented in the Introduction as \cref{Mary}.
The following corollary of \cref{Victor} is more general than \cref{Mary}, in that it works for both families of instances and families of unimodularized instances of the plan.
When we restrict to unimodularized instances (as in \cref{Mary}), the limit $\SDC(A_\iota) \to \infty$ is the same as $\len(A_\iota) \to \infty$ because $\SDC(A_\iota)=\len(A_\iota)$ when $A_\iota$ is a uniform-length, unimodular codebook.
\begin{corollary}\label{Julia}
Let $n \in \Z^+$, let $D$ be a Hadamard plan of index $n$, and let $\rho \in \R$.
Let $\{A_\iota\}_{\iota \in A}$ be a coherently $\rho$-rotated family of instances (or a coherently $\rho$-rotated family of unimodularized instances) of $D$.
Then $\CDF(A_\iota)$ tends to
\[
1+ \frac{1}{3(n-1)} + \frac{1}{n-1} \Phi\left(2 \rho\right).
\]
as $\SDC(A_\iota)$ tends to infinity.
For a fixed $n$, the right hand side achieves its global minimum value of $1+1/(6(n-1))$ if and only if $\rho \in \{(2n+1)/4: n \in \Z\}$.
\end{corollary}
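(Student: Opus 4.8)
The plan is to deduce \cref{Julia} directly from \cref{Victor}, using the averaging identity of \cref{Maynard}\eqref{Abel} to collapse the double sum, and then to settle the minimization claim by elementary analysis of $\Phi$. First I would observe that a coherently $\rho$-rotated family has constant defining function $\rho(d)=\rho$ for every $d \in D$ and unbounded $\{p_\iota\}_{\iota \in I}$. Since $D$ is a Hadamard plan, $\normtwo{\ft{d}}=1$ and $|D|=n-1$, so by \cref{Abigail} each sequence in $A_\iota$ has autocorrelation $p_\iota-1$ at shift zero (for instances) or $p_\iota$ (for unimodularized instances); in either case $\SDC(A_\iota)$ is a strictly increasing function of $p_\iota$, so the limit $\SDC(A_\iota)\to\infty$ coincides with $p_\iota\to\infty$, and both the instance and unimodularized-instance cases are already covered by \cref{Victor}.

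The one genuine wrinkle is that \cref{Victor} requires the parity of $(p_\iota-1)/n$ to be a fixed $\sigma \in \Z/2\Z$ across the whole family, whereas a coherently $\rho$-rotated family imposes no such condition. I would handle this by partitioning $I = I_0 \sqcup I_1$ according to whether $(p_\iota-1)/n\equiv 0$ or $1 \pmod 2$. On each infinite subfamily $\{A_\iota\}_{\iota\in I_\sigma}$, \cref{Victor} applies with $\rho(d)=\rho$, and since $\rho(d)+\rho(d')=2\rho$ for every pair, the limit is
\[
1+\frac{1}{3(n-1)}+\frac{1}{(n-1)^2}\sum_{d,d'\in D} V^{(\sigma)}_{d,d'}\,\Phi(2\rho).
\]
The crucial point is that \cref{Maynard}\eqref{Abel} gives $\frac{1}{(n-1)^2}\sum_{d,d'\in D} V^{(\sigma)}_{d,d'}=\frac{1}{n-1}$ for \emph{both} values of $\sigma$, so the two subfamily limits coincide at $1+\frac{1}{3(n-1)}+\frac{1}{n-1}\Phi(2\rho)$. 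A finite subfamily contributes nothing to the limiting behaviour, and as the two possibly-infinite subfamilies share a common limit, the full family $\{A_\iota\}_{\iota\in I}$ converges to that same value, which is exactly the stated expression.

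For the minimization, since $1/(n-1)>0$ the right-hand side is minimized precisely when $\Phi(2\rho)$ is. Recalling \eqref{Philip}, $\Phi$ has period $1$ and on $[0,1]$ equals $2(x-\tfrac12)^2-\tfrac16$, so its global minimum value is $-1/6$, attained exactly when its argument is congruent to $1/2$ modulo $1$; that is, when $2\rho\equiv 1/2\pmod 1$, equivalently $\rho\in\{(2k+1)/4:k\in\Z\}$. Substituting $\Phi(2\rho)=-1/6$ yields the minimum value $1+\frac{1}{3(n-1)}-\frac{1}{6(n-1)}=1+\frac{1}{6(n-1)}$, as claimed. I expect no serious obstacle in this argument; the only step requiring care is the parity partition, where \cref{Maynard}\eqref{Abel} does the essential work of rendering the limit independent of $\sigma$ so that the single constant $\rho$ produces a single well-defined limit.
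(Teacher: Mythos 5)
Your proposal is correct and follows essentially the same route as the paper's own proof: the paper likewise reduces $\SDC(A_\iota)\to\infty$ to $p_\iota\to\infty$, applies \cref{Victor} with the constant function $\rho(d)=\rho$ under a fixed parity of $(p_\iota-1)/n$, collapses the double sum via \cref{Maynard}\eqref{Abel} to get a limit independent of $\sigma$, removes the parity hypothesis by partitioning $I$ into the two parity classes, and finishes with the same elementary analysis of $\Phi$. The only caution is that the correct dichotomy on each parity class is whether $\{p_\iota\}$ is bounded or unbounded there (the paper's sets $S_\tau$), not whether the index subset is finite or infinite --- an infinite subfamily could repeat boundedly many primes --- so your ``finite subfamily contributes nothing'' remark should be read as ``a subfamily with bounded $p_\iota$ contributes nothing,'' after which your argument matches the paper's exactly.
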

\begin{proof}
If each $A_\iota$ is a $p_\iota$-instance of $D$, then $\AC_{f,f}(0)=p_\iota-1$ for every $f \in A_\iota$ by \cref{Abigail}, and so $\SDC(A_\iota)=p_\iota-1$.
If each $A_\iota$ is a unimodularized $p_\iota$-instance of $D$, then $\AC_{f,f}(0)=\len(f)=p_\iota$.
So for each $\iota\in I$, there is a well defined $p_\iota$ such that $A_\iota$ is a $p_\iota$ instance (or unimodularized $p_\iota$-instance) of $D$ and the limit $p_\iota \to\infty$ in \cref{Victor} is the same as the limit $\SDC(A_\iota)\to\infty$.

We first prove this theorem under the additional assumption that there is some $\sigma \in \Z/2\Z$ such that $(p_\iota-1)/n\equiv \sigma\pmod{2}$ for every $\iota \in I$.
As $\SDC(A_\iota)\to\infty$, we have $p_\iota \to \infty$ and may apply \cref{Victor} to see that $\CDF(A_\iota)$ tends to
\[
1+ \frac{1}{3(n-1)} + \frac{1}{(n-1)^2} \sum_{d,d' \in D} V^{(\sigma)}_{d,d'} \,\, \Phi(2\rho),
\]
which by \cref{Maynard}\eqref{Abel} equals
\[
1+ \frac{1}{3(n-1)} + \frac{1}{n-1} \Phi(2\rho).
\]
Note that the limiting $\CDF$ is independent of $\sigma$.

We use the lack of dependence on $\sigma$ in this last limit to relax the condition on the parity of $(p_\iota-1)/n$.
We partition $I$ into $I_0$ and $I_1$, with $I_\tau=\{\iota \in I: (p_\iota-1)/2 \equiv \tau\pmod{2}\}$ for each $\tau \in \Z/2\Z$.
If the set $S_\tau=\{p_\iota: \iota \in I_\tau\}$ is bounded for some $\tau \in \Z/2\Z$, then the limiting behavior of $\CDF(A_\iota)$ for $\{A_\iota: \iota \in I\}$ as $\SDC(A_\iota)\to\infty$ is precisely the same as that for $\{A_\iota:\iota \in I_{1-\tau}\}$, and so we may apply the special case we have already proved.
On the other hand, if both $S_0$ and $S_1$ are unbounded, then we note that $\CDF(A_\iota)$ tends to the same limit for both $\{A_\iota:\iota \in I_0\}$ and $\{A_\iota:\iota \in I_1\}$, so it tends to that limit for the entire family $\{A_\iota\}_{\iota \in I}$.

Since $\Phi$ is periodic with period $1$ and $\Phi(x)=2(x-1/2)^2-1/6$ for $0 \leq x \leq 1$, it is evident that $\Phi$ achieves its global minimum value of $-1/6$ for $x \in \{(2 n+1)/2: n \in \Z\}$, so our function is minimized when $\rho \in \{(2 n+1)/4: n \in \Z\}$, and the minimum value is $1+1/(6(n-1))$.
\end{proof}

\section*{Acknowledgement}
The authors thank Evgeniya Lagoda for many helpful discussions, and thank her especially for work with the second author that led to the bounds in \cref{Esmeralda}, \cref{Edna}, and \cref{Mildred}.

\end{document}